\documentclass[conference]{IEEEtran}
\ifCLASSINFOpdf
\else
\fi

\usepackage{hyperref}
\usepackage{graphicx, xcolor}
\usepackage{algorithm}
\usepackage{algpseudocode}
\usepackage{amsmath, amsthm}
\usepackage{amsfonts}
\usepackage{url}
\newtheorem{theorem}{Theorem}[section] 
\usepackage{pifont}
\newcommand{\cmark}{\ding{51}}
\newcommand{\xmark}{\ding{55}}
\usepackage{subcaption}
\newtheorem{proposition}[theorem]{Proposition}
\newtheorem{lemma}[theorem]{Lemma}
\newtheorem{corollary}[theorem]{Corollary}
\newtheorem{definition}[theorem]{Definition}

\newcommand{\name}{Dependency Triad}
\newcommand{\acronym}{DT}

\begin{document}
%

\title{\name: A Metric to Quantify the Dependencies Between Attributes for Local Differential Privacy}




%




\author{\IEEEauthorblockN{Sandaru Jayawardana\IEEEauthorrefmark{1},
Sennur Ulukus\IEEEauthorrefmark{2},
Ming Ding\IEEEauthorrefmark{3}and
Kanchana Thilakarathna\IEEEauthorrefmark{1}}
\IEEEauthorblockA{\IEEEauthorrefmark{1}The University of Sydney, Australia}
\IEEEauthorblockA{\IEEEauthorrefmark{2}University of Maryland, USA}
\IEEEauthorblockA{\IEEEauthorrefmark{3} Technology, CSIRO, Australia}}

\maketitle

\begin{abstract}
Collecting multidimensional user data is essential for extracting rich insights across various applications. 
Local Differential Privacy (LDP) has emerged as a de facto standard for mitigating privacy risks in such scenarios. 
A key challenge in privacy-preserving multidimensional data collection lies in inter-attribute dependencies, as they can inadvertently reveal correlated information and increase privacy vulnerabilities.
Therefore, accurately measuring correlation-induced privacy leakage (CPL) is essential for privacy analysis and privacy-utility trade-off.
However, existing CPL analysis solutions either require accurate prior knowledge or face scalability challenges for large numbers of attributes and high-cardinality attributes. 
These limit their practical applicability in real data.
To address this research gap, we propose a novel metric, ``\emph{\name}'' (\acronym), which summarizes the pairwise dependency information relevant to CPL using three parameters and yields a \emph{constant-time} conservative estimator of pairwise CPL. \acronym\ explicitly models uncertainty in prior distributional knowledge through its parameters, delivering robust leakage estimates. 
Moreover, its robustness to sparse distributions makes it particularly suitable for high-cardinality attributes, while the pairwise formulation serves as a tractable building block for assessing total leakage in multidimensional settings.
Extensive experiments on both synthetic and real datasets demonstrate that \acronym\ consistently estimates CPL across diverse dependency regimes and prior uncertainties. 

\end{abstract}


%
\IEEEpeerreviewmaketitle

\section{Introduction}\label{section:introduction}

In today's information age, collecting \textit{multidimensional} user records (i.e., records which contain multiple features/attributes) is crucial for applications such as \textit{personalized services}~\cite{rappor_original, L-SRR_location_based_LBS} and \textit{data crowd-sourcing}~\cite{location_based_survey_LBS, PCKV_key_value}. 
These records often include sensitive attributes---\textit{personally identifiable information (PII)}~\cite{Local_differential_privacy_and_its_applications_A_survey}, \textit{financial}~\cite{fintech_privacy_review}, and \textit{health}~\cite{SecureMA_participant_privacy_genetic_association}---creating substantial privacy risks. 
$\varepsilon$-local differential privacy ($\varepsilon$-LDP)~\cite{LDP_duchi} is the de facto standard to provide privacy when curators are untrusted, or users will not share raw data~\cite{Local_Differential_Privacy_in_Practice}. 

An LDP mechanism is a randomized algorithm that perturbs data to bound worst-case leakage by $\varepsilon$ ($\geq 0$). 
Here, a smaller $\varepsilon$ yields stronger privacy (more noise) but typically results in lower data utility (data accuracy); therefore, experts tune $\varepsilon$ to trade off between data utility and privacy. 
LDP has been deployed at scale by industry leaders such as Google~\cite{rappor_original}, Microsoft~\cite{microsoft_ldp}, and Apple~\cite{apple2017privacy}.

Multidimensional data inherently contain inter-dependencies~\cite {CALM, PCKV_key_value}, making privacy preservation challenging because correlation may reveal more information~\cite{no_free_lunch,paper_1}. 
Studies have proposed various methods to apply LDP in multidimensional data---e.g., privacy budget splitting~\cite{SPL_RS_RS_PLUS_FAKE}, random-sampling~\cite{RS_plus_FD}, joint perturbation~\cite{PCKV_key_value}, etc.---to overcome correlation-induced privacy leakage (CPL) impact. 
These approaches primarily assume that the prior knowledge of the attribute dependencies is unknown/unavailable. 
However, studies highlight that prior knowledge of such correlations is ubiquitous today; for example, through {universal correlation information}~\cite{impact_of_prior_knowledge}, {publicly available datasets}~\cite{secon_Collecting_High-Dimensional_Correlation_LDP}, and {learning distributions from a subset of the population}~\cite{ AAA}. 
\emph{Therefore, when prior knowledge is available, leveraging it to analyze CPL enables more nuanced applications, including evaluating the privacy guarantee and calibrating the privacy budgets of the mechanisms}~\cite{data_correlation_location_similar_LDP, paper_1,secon_Collecting_High-Dimensional_Correlation_LDP}.

Pioneer work~\cite{correlated_DP} uses conventional correlation metrics---e.g., \textit{mutual information (MI)}~\cite{covers_mutual_information}, \textit{Pearson correlation coefficient (PCC)}~\cite{PCC_new}---to estimate CPL. 
However, recent studies~\cite{paper_1, data_correlation_location_similar_LDP} claim that these conventional correlation metrics are unable to accurately characterize the CPL, even when ground-truth probability distributions between attributes are available. 
Recent studies propose algorithms to estimate CPL between attributes using probability distributions to address limitations of conventional correlation metrics. 
For instance, suppose we need to compute the CPL of an attribute $X_k$ caused by a correlated attribute $\hat{X}$. 
Let $a$ and $b$ be the alphabet sizes\footnote{We refer to the cardinality of attributes as alphabet size.} of $X_k$ and $\hat{X}$, respectively. 
Study~\cite{data_correlation_location_similar_LDP} (ITS) and study~\cite{secon_Collecting_High-Dimensional_Correlation_LDP} (CBP) quantify CPL only for special-case LDP mechanisms whose randomization can be parameterized by two perturbation-probability levels (e.g., k-RR~\cite{Local_differential_privacy_and_its_applications_A_survey}, Unary encoding~\cite{LDP_Frequent_Itemset_Mining}), which limits applicability to more general mechanisms (e.g., staircase mechanism~\cite{Staircase_Mechanism}). These algorithms have time complexities of $O(ab)$ and $O(2^a)$, respectively. 
A recent study~\cite{paper_1} proposes two algorithms for estimating CPL. 
The first algorithm, HCC-1, estimates the actual CPL using the transition probabilities of the LDP mechanism and the joint probability distribution between attributes.
The second algorithm, HCC-2, estimates the supremum value of the CPL for any LDP mechanism, using a privacy budget of $\varepsilon$ and a joint probability distribution between attributes. 
These algorithms have time complexities of $O(a^2b^2)$ and $O(a^2b\log b)$, respectively. Study~\cite{Quantifying_DP} proposes a precompute-based approach (LTM) to solve a similar optimization as in~\cite{paper_1}, which uses piecewise functions to compute the solution in $O(\log ab)$. 

All these existing algorithms assume access to \emph{ground-truth} distributional information about the data, which is typically \emph{unavailable}, especially in local data perturbation settings like LDP. 
Known distribution (i.e., prior knowledge) might mismatch the actual distribution due to various reasons, such as outdated knowledge, demographic differences, and noisy data. 
Therefore, the estimated CPL may differ from the actual leakage, leading to potential privacy vulnerabilities. 
Furthermore, real-world applications, such as privacy budget calibration, typically require assessing CPL at various privacy budget levels to identify the optimal privacy budget (i.e., privacy-utility trade-off) while keeping the entire distribution (e.g., in~\cite{paper_1, secon_Collecting_High-Dimensional_Correlation_LDP, data_correlation_location_similar_LDP}) or piecewise functions (space complexity of $O(ab)$ for two attributes). 
Therefore, these algorithms are \textbf{not} scalable well with larger alphabet sizes and datasets due to \textit{high time and space complexities}. 
Therefore, their practical applicability is limited by the requirement of ground-truth distributional information and their high space and computational complexity. 
Table~\ref{table:summary_CPL_algos} summarizes the existing solutions for computing CPL. 

\begin{table}[t]
\centering
\caption{Comparison between different CPL computation metrics and algorithms.}
\begin{tabular}{lcccc}
\hline
Algorithm/ & Scalability & Distribution Bias & Supporting \\ 
Metric & & Incomparability & Mechanisms \\\hline
MI~\cite{covers_mutual_information}/PCC~\cite{PCC_new} & \cmark & \xmark & N/A\\
ITS~\cite{data_correlation_location_similar_LDP}  & \xmark & \xmark & Two-level LDP \\
CBP~\cite{secon_Collecting_High-Dimensional_Correlation_LDP}  & \xmark & \xmark & GRR \\
HCC-1~\cite{paper_1} & \xmark & \xmark & Simple LDP \\
HCC-2~\cite{paper_1} & \xmark & \xmark & Any LDP \\
LTM~\cite{Quantifying_DP} & \xmark & \xmark & Any LDP\\
\hline
\end{tabular}
\label{table:summary_CPL_algos}
\end{table}

\noindent \textbf{Problem Statement -} In summary, although quantifying CPL is a promising solution to evaluate privacy guarantee and maximize privacy-utility trade-off in multidimensional data, existing metrics and algorithms have \textbf{two key shortcomings} that limit their practical usability on real-world data: 
\begin{enumerate}
    \item [$\bullet$] \textit{\textbf{P1 - Limited to specific LDP mechanisms or scalability limitations.}}

    \item [$\bullet$] \textit{\textbf{P2 - Requires ground truth probability distributions between the attributes}}.
\end{enumerate}
Next, let us consider a motivational example.

\begin{figure}[hbt]
  \centering
  \includegraphics[trim={2cm 1cm .2cm 0cm},clip,width=1.15\linewidth]{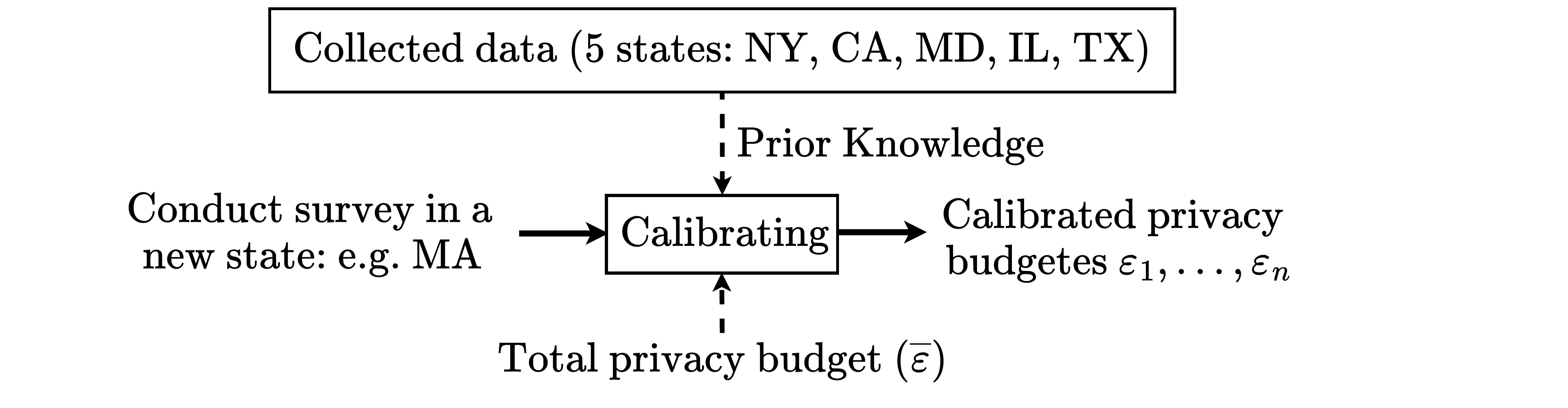}
  \caption{Motivational example. 
  $\overline{\varepsilon}$ denotes the maximum privacy leakage allowed for each attribute collected.} 
  \label{fig:us_income_survey}
\end{figure}

\noindent\textbf{Motivation Example - }Suppose we conduct a state-wise household income survey in the United States under LDP. 
Data from an initial subset of states (e.g., five states) can be collected using correlation-agnostic baselines such as simple privacy–budget splitting (SPL) or random sampling~\cite{ldp_Frequency_Estimation}. 
As the survey progresses, we can estimate joint attribute dependencies from initially collected data and treat them as \emph{prior knowledge}. 
This prior knowledge is then used to optimize attribute-wise privacy–budget allocation, thereby improving the privacy–utility trade-off (Figure~\ref{fig:us_income_survey}). 
In this setting, distribution shifts primarily arise from \emph{demographic differences} such as development level. 
Then, multiple sources (previously collected state data) provide the basis for learning distributions and their potential uncertainty.
However, currently available CPL analysis algorithms \textbf{cannot} utilize this distributional uncertainty information for privacy leakage analysis, which could lead to privacy vulnerabilities due to incorrect privacy leakage estimates.

\noindent{\textbf{Proposed Solution - }}
We first generalize the optimization problems studied in~\cite{paper_1,Quantifying_DP} to settings where the underlying data distribution is uncertain (Section~\ref{section:motivation}). 
In this regime, computing the optimal CPL is substantially more expensive---with worst-case exponential-time complexity---thereby motivating heuristic methods to estimate CPL. 
We then ask whether the probability distribution can be compressed into a low-dimensional summary that preserves the dependency structure relevant to CPL, enabling fast evaluation. 
Conceptually, this approach mirrors the use of conventional correlation metrics, but seeks to leverage their computational and storage advantages while mitigating their accuracy gaps.

Our main finding is that the pairwise dependency information relevant to CPL assessment (including distribution uncertainty) can be captured by \emph{three parameters}. 
Using these \emph{dependency statistics}---\emph{$(\alpha,\beta,\delta)$–\name\ (\acronym)}---we construct a constant-time estimator (using computed parameters $\alpha,\beta$ and $\delta$) that provides a tight upper bound on CPL for any privacy budget $\varepsilon$. 
Here, the nonnegative coefficients $\alpha$ and $\beta$ quantify complementary aspects of dependence: intuitively, $\alpha$ governs leakage behavior in the low privacy regime (larger $\varepsilon$), while $\beta$ governs the high privacy regime (smaller $\varepsilon$). 
The factor $\delta\in[0,1)$ corrects for sparsity\footnote{Probability distributions become sparse when attribute alphabets increase~\cite{SPM,dss}.} by capturing mass concentration in the distribution (We provide an empirical interpretation of the \acronym\ parameters in Section~\ref{section:Interpreting CPL Through DT Parameters}). 
Given \acronym, Algorithm~\ref{algo:APL_cal_using_db} estimates CPL in \emph{constant time} $O(1)$. 
\acronym\ only occupies \emph{constant space} $O(1)$, in contrast to storing the full $a\times b$ joint distribution at $O(ab)$ to keep dependency information. 
Therefore, constant time and space requirements address \textbf{P1}. 
Experts can calibrate $(\alpha,\beta,\delta)$ from prior, potentially uncertain, distributional knowledge, enabling deployment in real-world LDP data collection and privacy analysis pipelines (addressing \textbf{P2}). 
Finally, we prove that the \acronym-based estimator is an upper bound on CPL, ensuring that reported leakage never understates the true privacy loss and thereby overcoming privacy vulnerabilities of existing methods.

\noindent{\textbf{Our Contributions - }}Next, we summarize our key contributions in this paper as follows:
\begin{enumerate}
    \item[$\bullet$] We show that the pairwise dependency information relevant to CPL assessment, including distributional uncertainty, can be summarised by three parameters. 
    We propose this three-parameter representation, termed the \emph{\name} (\acronym), as a novel metric for modeling attribute dependencies in LDP privacy-leakage analysis. 
    Rather than fully characterizing the underlying joint distribution, \acronym\ serves as a \emph{conservative surrogate} for pairwise CPL, enabling tractable leakage assessment when the full distribution is \emph{unavailable or uncertain}. 
    We further use pairwise analysis as a practical building block to estimate total leakage in multidimensional data. To the best of our knowledge, \acronym\ is the first metric specifically designed to summarize dependencies for LDP privacy leakage analysis in this setting.
    
    \item[$\bullet$] We theoretically prove that \acronym\ possesses several desirable properties---configurable tolerance to uncertainty in prior distributions, constant-time complexity, and compatibility with sparse probability distributions.
    
    \item[$\bullet$] We validate the theoretical guarantees via extensive empirical evaluation on both synthetic and real datasets. 
    Moreover, we demonstrate how to incorporate \emph{distributional uncertainties} in privacy analysis and \emph{calibrate privacy budgets} in real-world data collection scenarios, highlighting the potential practical applicability of \acronym.
\end{enumerate}

\noindent{\textbf{Paper Structure - }}The rest of the paper is organized as follows. First, we review related definitions and theorems in Section~\ref{section:background}. In Section~\ref{section:motivation}, we extend the current optimization problem to address distribution uncertainty and highlight the associated computational challenges. In Section~\ref{section:methodology}, we explain the design steps of \emph{\acronym} to quantify CPL. In Section~\ref{section:Dependency Budget}, we formally define \acronym\ and present algorithms to compute the \acronym,\ and CPL using \acronym. 
In Section~\ref{section:evaluation}, we evaluate the proposed \emph{\acronym} using synthetic and real datasets. Finally, Section~\ref{section:related_works} summarizes the related works and Section~\ref{section:conclusion} concludes the paper.

\section{Preliminaries}\label{section:background}

First, we present the LDP definition in Section~\ref{section:local_differnetial_privacy}. 
Section~\ref{section:Privacy Leakage Caused by Correlated Attributes} defines the privacy leakages in correlated data and presents an existing algorithm to compute the CPL. 
Finally, we discuss and formulate potential distributional uncertainties in prior knowledge and their effects on CPL analysis in Section~\ref{section:Inaccuracies in Distributional Information}.

\subsection{Local Differential Privacy}\label{section:local_differnetial_privacy}

$\varepsilon$-LDP definition appears in various forms in the literature. In this work, we adopt the definition proposed by Duchi et al.~\cite{LDP_duchi}.

\begin{definition}[$\varepsilon$-Local Differential Privacy] \label{defn:LDP}
    Let the input alphabet be $\mathcal{X}$ and the output alphabet be $\mathcal{Y}$. 
    A mechanism $Q$ is $\varepsilon$-locally differentially private if, 
    \begin{equation}
        \sup_{S\subseteq\mathcal{Y},\ x,x'\in\mathcal{X}} \frac{Q(S|x)}{Q(S|x')} \leq e^\varepsilon,
    \end{equation}
    where $Q(S|x) = p(Y=y_i, \ y_i \in S \ | \ X = x)$ is the privatization mechanism. 
    The function $p(\cdot)$ denotes the probability of an event. 
    Here, $x \in \mathcal{X}$, $y_i \in \mathcal{Y} $ and $\varepsilon \geq 0$.
\end{definition}

\subsection{Privacy Leakages in Correlated Attributes}\label{section:Privacy Leakage Caused by Correlated Attributes}

Let $X = \{X_1,\dots, X_n\}$ be $n$ correlated attributes, and let $Y = \{Y_1, \dots,\linebreak Y_n\}$ be perturbed versions of $X$ which are obtained with $\mathcal{M} = \{\mathcal{M}_1, \dots, \mathcal{M}_n\}$ $\varepsilon$-LDP mechanisms respectively. 
Suppose $\mathcal{X}_1, \dots, \mathcal{X}_n$ are alphabets of inputs and $\mathcal{Y}_1,\dots, \mathcal{Y}_n$ are alphabets of outputs respectively. 
Then, we can define ``\emph{direct privacy leakage}'', ``\emph{correlation-induced privacy leakage}'', and ``\emph{total privacy leakage}'' terms~\cite{paper_1}.

\subsubsection{Direct Privacy Leakage (DPL)}

DPL is the extent to which an adversary can infer information about a user's original data from its perturbed data (without correlated attributes' perturbed data). 
Let any attribute $X_k \in X$ be perturbed using an $\varepsilon$-LDP mechanism $\mathcal{M}_k \in \mathcal{M}$. 
Then, the DPL of attribute $X_k$ is at most $\varepsilon$.

\subsubsection{Correlation-Induced Privacy Leakage (CPL)}

CPL is defined as the privacy leakage of an attribute that is caused by correlated attributes. 
Let $L_{Z \rightarrow X_k}$ denote the CPL experienced by any attribute $X_k \in X$ due to a subset of other correlated attributes $Z \subseteq X\setminus \{X_k\}$. 
Let us take $Y_k \in Y$ as the perturbed output of $X_k$, and $W \subseteq Y \setminus \{Y_k\}$ is the perturbed output of $Z$, and $\mathcal{W}$ is the output alphabet of $W$. 
Then $L_{Z \rightarrow X_k}$ can be calculated as 
\begin{equation}\label{eqn:privacy_leakage_additional_xk}
    L_{Z \rightarrow X_k} = \ln \sup_{w\in \mathcal{W},\ x,x'\in\mathcal{X}_k} \frac{p(w|x)}{p(w|x')}.
\end{equation}
Here, the function $p(\cdot)$ denotes the probability of an event.
\subsubsection{Total Privacy Leakage (TPL)}

TPL refers to the privacy leakage of an attribute, including its DPL and CPL. 
Formally, let $L_{X_k}$ denote the TPL of any attribute $X_k \in X$. Computing the exact TPL is challenging due to the curse of dimensionality. 
As an alternative, using the sequential composition theorem~\cite{LDP_survey_composition_theorem}, we can compute an upper bound for $L_{X_k}$, $\overline{L}_{X_k}$ as
\begin{equation}\label{eqn:total_privacy_leakage_composition}
    \overline{L}_{X_k} = \varepsilon + \sum_{x \in X \setminus \{X_k\}} L_{x \rightarrow X_k}.
\end{equation}
Here, $\varepsilon$ is the privacy budget of $\mathcal{M}_k$ mechanism. 
Therefore, we can compute the TPL in any system by using the CPLs between attribute pairs and the privacy budgets of the mechanisms. 
CPL caused by any attribute $\Hat{X} \in X$ about $X_k$ is given by,
\begin{equation}\label{eqn:expanded_additional_privacy_leakage}
    \begin{split}
        L_{\Hat{X} \rightarrow X_k} &= \ln{\sup_{y\in \mathcal{\Hat{Y}},\ x,x'\in\mathcal{X}_k} \left(\frac{\sum_{u \in \mathcal{\Hat{X}}} p( y|u)p(u|x)}{\sum_{v \in \mathcal{\Hat{X}}} p( y|v)p(v|x')}\right)}.
    \end{split}
\end{equation}
Here, $\hat{\mathcal{Y}}$ is the alphabet of perturbed output of $\hat{X}$. 
Next, let us analyze the CPL between two correlated attributes as we can estimate the TPL of any attribute using CPLs and privacy budgets as stated in~\eqref{eqn:total_privacy_leakage_composition}.

This is a \emph{tractable building block} to compute total privacy leakage, which is commonly adopted by recent studies.

\subsubsection{Calculating the CPL Between two Correlated Attributes}\label{section:calculate correlation-induced privacy leakage between two correlated attributes}

Study \cite{paper_1} shows that solving~\eqref{eqn:expanded_additional_privacy_leakage} is a maximization problem of a fraction of linear functions given by, 
\begin{equation}
\ln\sup_{y\in \mathcal{\Hat{Y}},\ x,x'\in\mathcal{X}_k} \frac{C^TG}{C^TG'}.
\end{equation}
Here, $C = (p(y|\hat{x}_1)\text{,}\dots\text{, } p(y|\hat{x}_b))$, $G = \left(G_i\right)^b_{i=1} = (p(\hat{x}_1|x)\text{,}\dots\text{, } \linebreak p(\hat{x}_b|x))$, $G'= \left(G'_i\right)^t_{i=1} = (p(\hat{x}_1|x')\text{,}\dots\text{, } p(\hat{x}_b|x'))$. 
The set $\mathcal{\Hat{X}} = \{\hat{x}_1, \dots, \hat{x}_b\}$ denotes the alphabet of the attribute $\Hat{X}$, with $|\mathcal{\Hat{X}}|=b$, and $x,x' \in \mathcal{X}_k$ where $\mathcal{X}_k$ is the alphabet of $X_k$, with $| \mathcal{X}_k|=a$. 
This maximization problem is equivalent to a simplified form, as shown in~\eqref{eqn:optimization_problem_over_C} by removing $\ln$. 
Here, $S \subseteq [b], \ [b] = \{1,\dots,b\}$. 
The set $S$ corresponds to the indices of $G$ and $G'$ elements which have coefficient $e^\varepsilon$ (i.e., privacy budget $=\varepsilon$), and the remaining (i.e., $[b] \setminus S$) have coefficient $1$. 
\begin{equation} \label{eqn:optimization_problem_over_C}
    \begin{split}
        \underset{S}{\textbf{maximize}} & \quad F(S,G,G', \varepsilon) = \frac{\sum_{i \in [b]\setminus S}G_i+e^{\varepsilon}\sum_{i \in S}G_i}{\sum_{i \in [b]\setminus S}G'_i+e^{\varepsilon}\sum_{i \in S}G'_i}   \\
        \textbf{subject to} 
        &\quad S \subseteq [b].
    \end{split}
\end{equation} 
The optimal solution of~\eqref{eqn:optimization_problem_over_C} indicates the `maximum CPL' caused due to the release of $\hat{X}$ about $X_k$ for given privacy budget $\varepsilon$ when the ground-truth distribution is available. 
Therefore, we refer to this `maximum CPL' as ``$\operatorname{CPL}^\star$'' throughout the remainder of the paper, unless stated otherwise. 
Two existing algorithms---proposed in \cite{Quantifying_DP} and \cite{paper_1}---solve this class of optimization problems with time-complexities of $O(a^2b^2)$ and $O(a^2b\log b)$, respectively. 
Furthermore, \cite{Quantifying_DP} shows that the solution can be modeled as a piece-wise function, which can be solved with $O(\log ab)$, which requires a pre-computation to derive the piece-wise function. 
A simplified version of the algorithm proposed by~\cite{paper_1} is available in Appendix~\ref{section:The Optimal Additional Privacy Leakage Algorithm}. 
Intuitively, this algorithm allocates the indexes $i$ to $S$ which have a larger ratio value for $\frac{G_i}{G'_i}$. 
Moreover, according to this algorithm, $\operatorname{CPL}^\star $ value depends on only two realizations of $X_k$ attribute for a given $\varepsilon$. 
Additionally, we can compute the maximum $L_{\Hat{X}\rightarrow X_k}$ as $\varepsilon$ varies as shown in Corollary~\ref{corollary:max_privacy_leakage_privacy_budget}.

\begin{corollary}\label{corollary:max_privacy_leakage_privacy_budget}
    Let $X_k$ and $\Hat{X}$ be two correlated attributes. 
    Suppose $\Hat{X}$ is perturbed by $\varepsilon$-LDP mechanism $\Hat{M}$. 
    Then, the CPL caused by $\Hat{X}$ about $X_k$ saturates as $\varepsilon$ increases, as shown in \eqref{eqn:corollary:max_privacy_leakage_privacy_budget}. 
    Moreover, this is the maximum $L_{\Hat{X}\rightarrow X_k}$ as $\varepsilon$ varies (i.e., $\max_{\varepsilon} L_{\Hat{X}\rightarrow X_k}$),
    \begin{equation}\label{eqn:corollary:max_privacy_leakage_privacy_budget}
        \lim_{\varepsilon \rightarrow \infty} L_{\Hat{X}\rightarrow X_k} =  \ln \max_{\Hat{x}\in \mathcal{\Hat{X}},\ x,x'\in\mathcal{X}_k} \frac{p(\Hat{x}|x)}{p(\Hat{x}|x')}.
    \end{equation}
    Here, $\mathcal{\Hat{X}}$ and $\mathcal{X}_k$ are alphabets of  $\Hat{X}$ and $X_k$, respectively. 
\end{corollary}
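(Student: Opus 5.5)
We reduce to the fractional program \eqref{eqn:optimization_problem_over_C} and study it for a fixed pair $x,x'\in\mathcal{X}_k$, writing $G_i=p(\hat{x}_i|x)$ and $G'_i=p(\hat{x}_i|x')$. Let $r^\star=\max_i G_i/G'_i$, with the convention $r^\star=+\infty$ if some $G'_i=0<G_i$. The claim then has two parts. First, for every $\varepsilon\ge 0$ and every $S\subseteq[b]$ we have $F(S,G,G',\varepsilon)\le r^\star$. Second, $\sup_S F(S,G,G',\varepsilon)\to r^\star$ as $\varepsilon\to\infty$. Taking the maximum over the finitely many pairs $(x,x')$ and applying $\ln$ (monotone and continuous) then yields \eqref{eqn:corollary:max_privacy_leakage_privacy_budget}. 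Since the bound holds for every $\varepsilon$ and the limit attains it, this also shows it is $\max_\varepsilon L_{\hat{X}\rightarrow X_k}$, understood as a supremum attained in the limit.

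\textbf{Upper bound.} The numerator and denominator of $F$ are $\sum_i c_iG_i$ and $\sum_i c_iG'_i$, with weights $c_i\in\{1,e^\varepsilon\}$ that are positive. By the mediant inequality, for any positive weights $c_i$,
\begin{equation*}
\frac{\sum_i c_iG_i}{\sum_i c_iG'_i}\le \max_{i:\,G'_i>0}\frac{G_i}{G'_i}\le r^\star .
\end{equation*}
This follows from $c_iG_i\le r^\star c_iG'_i$ termwise, after summing. Indices with $G'_i=0$ either force $r^\star=\infty$, which is trivial, or also have $G_i=0$ and contribute nothing. The same argument applied directly to the general expression \eqref{eqn:expanded_additional_privacy_leakage} gives the bound for any mechanism, with weights $p(y|u)$. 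So the bound does not depend on the reduction to two-level weights, and it holds for all $\varepsilon$.

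\textbf{Attainment in the limit.} Let $i^\star$ attain $r^\star$ and choose $S=\{i^\star\}$. Then
\begin{equation*}
F=\frac{e^\varepsilon G_{i^\star}+\sum_{i\ne i^\star}G_i}{e^\varepsilon G'_{i^\star}+\sum_{i\ne i^\star}G'_i}.
\end{equation*}
Dividing the numerator and denominator by $e^\varepsilon$ and letting $\varepsilon\to\infty$ gives $F\to G_{i^\star}/G'_{i^\star}=r^\star$ when $G'_{i^\star}>0$. When $G'_{i^\star}=0<G_{i^\star}$, the ratio instead diverges, consistent with $r^\star=\infty$. Combining this with the upper bound, $\sup_S F\to r^\star$. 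The supremum is over finitely many $S$, and each $F$ is monotone in $\varepsilon$ in the appropriate direction, so we need not worry about exchanging limit and supremum. A sandwich argument suffices: the chosen $S$ gives a lower bound tending to $r^\star$, and $r^\star$ is an upper bound for every $\varepsilon$.

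\textbf{Main obstacle.} The delicate point is justifying that, as $\varepsilon\to\infty$, the optimum over all $\varepsilon$-LDP mechanisms, not just a fixed one, approaches the bound. The reduction to \eqref{eqn:optimization_problem_over_C} from \cite{paper_1} is what guarantees that some mechanism realizes the weights $(e^\varepsilon\text{ on }S,\ 1\text{ elsewhere})$ up to a common scaling. We assume that reduction from the paper. We also need to handle zero probabilities, that is, infinite leakage, consistently with the convention in \eqref{eqn:privacy_leakage_additional_xk}.
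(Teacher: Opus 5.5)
Your proof is correct and complete. The paper gives no argument of its own for this statement; it defers to \cite{paper_1}, so there is nothing in the text to compare against. Your sandwich argument is a natural self-contained route. The mediant inequality gives $F\le r^\star$ for every $S$ and every $\varepsilon$, and in fact for arbitrary weights $p(y\mid u)$ in \eqref{eqn:expanded_additional_privacy_leakage}. The singleton $S=\{i^\star\}$ then gives a lower bound tending to $r^\star$. This is consistent with the paper's remark that Algorithm~\ref{algo:related_algorithm_1} places the indices with the largest ratios $G_i/G'_i$ into $S$.

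The ``main obstacle'' you flag is not actually one. Your upper bound already covers every mechanism. For the lower bound you only need one $\varepsilon$-LDP mechanism realizing the two-level weights, and you can write it down. Take the binary mechanism with $p(1\mid u)=e^\varepsilon/(1+e^\varepsilon)$ for $u\in S$ and $p(1\mid u)=1/(1+e^\varepsilon)$ otherwise. It satisfies $p(1\mid x)/p(1\mid x')=F(S,G,G',\varepsilon)$, so nothing in your proof depends on the reduction from \cite{paper_1}.

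To make ``saturates'' literal, you can add that the worst-case leakage is nondecreasing in $\varepsilon$. This holds because every $\varepsilon$-LDP mechanism is also $\varepsilon'$-LDP for $\varepsilon'\ge\varepsilon$.

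Your reading of ``maximum'' as a supremum attained only in the limit is the right one. Under finite-$\varepsilon$ LDP the weights are strictly positive, so equality in the mediant bound forces $G_i=r^\star G'_i$ for all $i$, hence $r^\star=1$. If $r^\star=\infty$, the leakage is at most $\varepsilon$ at each finite $\varepsilon$. So the bound is reached at finite $\varepsilon$ only in the trivial case where it equals $0$.
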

The proof of Corollary~\ref{corollary:max_privacy_leakage_privacy_budget} is available in~\cite{paper_1}.

\subsection{Inaccuracies in Distributional Information}\label{section:Inaccuracies in Distributional Information}

In LDP deployments, a prior (learned or public) rarely matches the ground-truth distribution of the target population. 
Mismatch can arise from \textit{demographic differences}, \textit{sampling bias}, \textit{temporal drift} or \textit{stale knowledge}. 
When the prior does not reflect the actual dependencies among attributes, CPL computed from the prior can be shifted—either \emph{underestimated} or \emph{overestimated}.

Let $P_A (\in \mathbb{R}^{a\times b})$ and $P_K (\in \mathbb{R}^{a\times b})$ be the actual and known conditional distribution between $X_k$ and $\hat{X}$ (i.e., $P(\hat{X}|X_k)$) respectively.
Let $\operatorname{CPL} ^\star$ denote the estimated CPL of the $P_A$ and $\operatorname{CPL}' $ denote the estimated CPL of $P_K$. 
We say CPL is \emph{underestimated} if $\operatorname{CPL}' <\operatorname{CPL} ^\star$ (calibration may grant more budget than is safe), and \emph{overestimated} if $\operatorname{CPL}'>\operatorname{CPL}^\star$ (calibration is conservative, possibly reducing utility). 
Since the sign of misspecification is generally unknown, mitigating the \emph{underestimation} risk guarantees the privacy. 

We formulate the distributional mismatch as an uncertainty set around the known distribution $P_K (\in \mathbb{R}^{a\times b})$. 
Let $\Delta (\in \mathbb{R}^{a\times b})$ be the individual entry-wise difference between $P_A$ and $P_K$ (i.e., $\Delta \geq |P_A-P_K|$). 
This provides fine-grained variations for $P_A$ from $P_K$. 
For example, we can define relative uncertainty with respect to the known distribution $\Delta = P_K\times 10\%$.
Further details are available in Section~\ref{section:Defining_Distribution_Uncertainty}.

\section{Motivation}\label{section:motivation}

In this section, we extend the CPL analysis optimization problem~\eqref{eqn:optimization_problem_over_C} with distribution uncertainty. 
Next, we show that solving this problem is worst-case exponential time. 
This motivates the development of a heuristic metric to quantify the dependency between attributes.

\subsection{Updated Optimization Problem}
We can update the CPL analysis optimization problem~\eqref{eqn:optimization_problem_over_C} as 
\begin{equation} \label{eqn:optimization_problem_over_C_G_G_prime}
    \begin{split}
        \underset{S, G, G'}{\textbf{maximize}} & \quad F(S,G,G', \varepsilon) = \frac{\sum_{i \in [b]\setminus S}G_i+e^{\varepsilon}\sum_{i \in S}G_i}{\sum_{i \in [b]\setminus S}G'_i+e^{\varepsilon}\sum_{i \in S}G'_i}   \\
        \textbf{subject to} 
        &\quad S \subseteq [b],\\
        &\quad h_i-\Delta_{\hat{x}_i,x} \leq G_i\leq h_i+\Delta_{\hat{x}_i,x}; \forall i \in [b],\\
        &\quad h'_i-\Delta_{\hat{x}_i,x'} \leq G'_i\leq h'_i+\Delta_{\hat{x}_i, x'}; \forall i \in [b],\\
        &\quad\sum_{i\in [b]} G_i = 1,\ \sum_{i\in [b]} G'_i = 1. 
    \end{split}
\end{equation} 
Here, $G = \left(G_i\right)^b_{i=1}, G' = \left(G'_i\right)^b_{i=1}, h_i = P_K(\hat{X} =\hat{x}_i|X_k = x), h'_i = P_K(\hat{X} =\hat{x}_i|X_k = x')$, $\Delta_{\hat{x}_i,x}$ denotes the uncertainty of $P_K(\hat{X}=\hat{x}_i|X_k=x)$, and $P_K$ denotes the known distribution, respectively.

\subsection{Time Complexity Analysis}

This optimization problem is inherently non-convex, as the objective function involves a bilinear fractional form. Consequently, computing the optimal CPL becomes substantially more expensive than in the fixed $(G, G')$ setting (i.e., without distributional uncertainty), where the problem reduces to a linear fractional program (LFP).

A potential approach is to decompose the original problem into quasi-convex sub-problems by fixing the subset $S$ and solving each sub-problem independently.
The global optimum can then be obtained by taking the maximum across all $S$. For a fixed $S$, maximizing $F(\cdot)$ requires increasing the numerator term $\sum_{i \in S} G_i$ while simultaneously minimizing the denominator term $\sum_{i \in [b] \setminus S} G'_i$. 
However, the number of possible subsets $S$ grows exponentially with $b$, leading to a worst-case complexity of $O(2^b)$. 
Similarly, if $G$ and $G'$ are fixed, an optimal solution can be computed efficiently using standard methods for linear fractional programs (LFPs), yet exploring all possible combinations of their bounds still incurs $O(2^b)$ complexity.

Since no straightforward convex relaxation exists, the standard approach to handle bilinear fractional programs (BFPs) is to linearise the bilinear terms—typically via McCormick relaxations or related techniques—yielding a mixed-integer linear program (MILP). 
While this transformation enables the use of off-the-shelf MILP solvers, it remains computationally expensive, exhibiting exponential complexity in the worst case~\cite{mccormick1976computability}.

\subsection{Our Approach}

Heuristic approaches are usually adopted to solve these types of problems. 
In this study, we focus on extracting CPL-related information from distribution and utilize it to compute the CPL in time efficiently, similar to conventional correlation metrics, while mitigating the privacy risks associated with them.

\section{Methodology}\label{section:methodology}

This section explains the design steps and theoretical proofs of ``\name'': Section~\ref{section:Novel_Metric_to_Represent_the Dependency_Between_Attributes} analyzes and defines the requirements that the metric must fulfil. 
Next, Section~\ref{section:Approximating the Distribution} derives a compact representation for the probability distribution. 
Section~\ref{section:Handling Inaccuracies in Prior Distribution} adjusts the metric to handle inaccuracies in distributions.

\subsection{Analyzing and Defining the Requirements of the Metric} \label{section:Novel_Metric_to_Represent_the Dependency_Between_Attributes}

First, we analyze the requirements that need to be satisfied by the metric to address \textbf{P1} and \textbf{P2} (See Section~\ref{section:introduction}). 
Then, we define four objectives that must be achieved by the metric based on this analysis. 

For \textbf{P1-Limited to specific LDP mechanisms or scalability limitations}: A key challenge is to compress the CPL-related information in the distribution into a smaller set of parameters while incurring minimal information loss. Furthermore, these parameters should represent the dependencies between attributes, allowing an algorithm to estimate CPL in a more time-efficient manner. Ensuring a privacy leakage guarantee is another essential requirement, as the conventional correlation metrics (e.g., MI) lack such guarantees~\cite{paper_1}. Therefore, we define three decomposed objectives that should be achieved by the metric to address P1:
\begin{enumerate}
    \item [$\bullet$] \textbf{OB1 -} The metric should provide a compressed representation of probability distribution using `\emph{a fixed}' and `\emph{a smaller}' set of parameters,
    \item [$\bullet$] \textbf{OB2 -} The estimated CPL of the metric should be `\emph{a tight upper bound}' for the $\operatorname{CPL}^\star$ to maximize the accuracy and avoid any privacy risk, 
    \item [$\bullet$] \textbf{OB3 -} CPL computation should be time efficient with the parameters.
\end{enumerate}

For \textbf{P2-Requires ground truth probability distributions
between the attributes}: The metric should be capable of tolerating the inaccuracies in the prior known distributions. Therefore, as the fourth objective, 
\begin{enumerate}
    \item [$\bullet$] \textbf{OB4 -} The metric should be capable of tolerating the inaccuracies in the prior known distribution.
\end{enumerate}

Next, the following sections outline the steps taken to develop the novel metric, achieving each objective (i.e., OB1, OB2, OB3, and OB4).

\subsection{Approximating the Distribution}\label{section:Approximating the Distribution}

\begin{figure}[t]
  \centering
  \includegraphics[trim={0.8cm 0.3cm 0.7cm 0.2cm},width=0.95\linewidth]{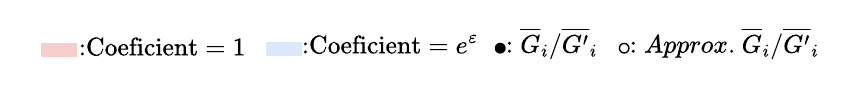}\par
  \makebox[\linewidth][c]{%
    \begin{subfigure}[t]{0.33\columnwidth}
      \centering
      \includegraphics[trim={1.75cm 0.8cm 1.2cm 0.8cm},clip,width=1\textwidth]{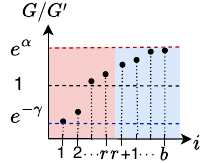}
      \vspace{-1.5em}
      \caption{}\label{fig:g_g'_coefficients_a}
    \end{subfigure}
    \hfill
    \begin{subfigure}[t]{0.33\columnwidth}
      \centering
      \includegraphics[trim={1.8cm 0.8cm 1.2cm 0.8cm},clip,width=1\textwidth]{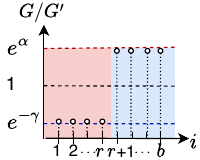}
      \vspace{-1.5em}
      \caption{}\label{fig:g_g'_coefficients_b}
    \end{subfigure}
    \hfill
    \begin{subfigure}[t]{0.33\columnwidth}
      \centering
      \includegraphics[trim={1.8cm 0.8cm 1.2cm 0.8cm},clip,width=1.01\textwidth]{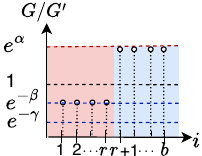}
      \vspace{-1.5em}
      \caption{}\label{fig:g_g'_coefficients_c}
    \end{subfigure}%
  }

  \caption{(a) An example for sorted (ascending order) ratio vector $Q$ for a given $\overline{G},\overline{G'}$. Here, $\overline{G} = (\overline{G_1},\dots,\overline{G_b})$ and $\overline{G'} = (\overline{G'_1},\dots,\overline{G'_b})$. (b) Approximated $(\overline{G}/\overline{G'})$ values into $e^{\alpha}$ and $e^{-\gamma}$, where $\alpha=\ln \max_{i\in [b]}\left(\overline{G_i}/\overline{G'_i}\right)$ and $\gamma=-\ln \min_{i\in [b]}\left(\overline{G_i}/\overline{G'_i}\right)$.
  (c) Approximated $(\overline{G_i}/\overline{G'_i})$ values into $e^{\alpha}$ and $e^{-\beta}$, where $\beta$ is the calibrated parameter using Theorem~\ref{thm:beta}.}
  \label{fig:g_g'_coefficients}
\end{figure}

In the metric development, we consider the same two-attribute scenario. Let $X_k$ and $\Hat{X}$ be two correlated attributes, and suppose we aim to estimate CPL of $X_k$ caused by the release $\Hat{Y}$ of $\Hat{X}$ (i.e., $L_{\hat{X}\rightarrow X_k}$). Here, $\Hat{Y}$ is the LDP version of $\Hat{X}$, which is obtained with $\varepsilon$-LDP mechanism $\hat{\mathcal{M}}$. Let $\overline{G} = (p(\hat{x}_1|x)\text{, }\dots\text{, }p(\hat{x}_b|x))$, $ \overline{G'}= (p(\hat{x}_1|x')\text{, }\dots\text{, }\linebreak p(\hat{x}_b|x'))$, and let $\mathcal{\Hat{X}} = \{\hat{x}_1, \dots, \hat{x}_b\}$ denotes the alphabet of $\Hat{X}$, with $|\mathcal{\Hat{X}}|=b$, and $x,x' \in \mathcal{X}_k$, where $\mathcal{X}_k$ is the alphabet of $X_k$, with $| \mathcal{X}_k|=a$. Then, $L_{\hat{X}\rightarrow X_k}$ is given by~\eqref{eqn:optimization_problem_over_C}. As the first step (\textbf{OB1}), we need to determine the most effective way of approximating the probability distribution using a smaller number of parameters while ensuring the privacy guarantee (\textbf{OB2}). Here, we consider two factors: 

(i) \textit{"How do the existing algorithms compute the CPL?" } According to Algorithm~\ref{algo:related_algorithm_1}, we can converge to the optimal solution of the optimization problem~\eqref{eqn:optimization_problem_over_C} by using the element-wise division between $\overline{G}$ and $\overline{G'}$~\cite{Quantifying_DP,paper_1}. Moreover, the CPL for a given privacy budget depends only on two realizations of the $X_k$ attribute.

(ii) \textit{"How does CPL behave at the boundary cases?"}. Corollary~\ref{corollary:max_privacy_leakage_privacy_budget} shows that CPL converges to $\ln \max_{\Hat{x}\in \mathcal{\Hat{X}},\ x,x'\in\mathcal{X}_k} \frac{p(\Hat{x}|x)}{p(\Hat{x}|x')}$ as privacy budget (i.e., $\varepsilon$) goes to infinity. Therefore, $\max_{\Hat{x}\in \mathcal{\Hat{X}},\ x,x'\in\mathcal{X}_k} \frac{p(\Hat{x}|x)}{p(\Hat{x}|x')}$ is an essential information that should be preserved within the metric to characterize CPL at higher privacy budgets.

These two factors imply that we can use $Q = \left(\frac{\overline{G_1}}{\overline{G'_1}},\dots,\frac{\overline{G_b}}{\overline{G'_b}}\right)$ ratios\footnote{We assume that ratios are well-defined, i.e., $\overline{G'_i} \neq 0 \ \forall i \in [b]$. We extend the analysis to the cases where $\overline{G'_i} = 0$ for some $i$ in Section~\ref{section:Handling Sparse Distribution}.} instead of individual $\bigl(\overline{G_i}\bigr)_{i=1}^b$ and $\bigl(\overline{G'_i}\bigr)_{i=1}^b$ values to estimate CPL with \emph{no privacy risks} (i.e., upper bound for $\operatorname{CPL}^\star $). However, there could be many $G,G'$ vectors that share the same $Q$. Here, $G$ and $G'$ are probability distribution vectors of size $b$. 

For example, suppose $b=3$ and $Q$ is $\{0.5, 3, 2\}$ for given $\overline{G},\overline{G'}$. There are many other vector pairs give these same ratio values e.g., $\left\{G=\left(\frac{3}{8}, \frac{3}{8}, \frac{1}{4}\right), G'=\left(\frac{3}{4}, \frac{1}{8}, \frac{1}{8}\right)\right\}$, $\left\{G=\left(\frac{19}{50}, \frac{21}{50}, \frac{1}{5}\right), G'=\left(\frac{38}{50}, \frac{7}{50}, \frac{1}{10}\right)\right\}$ etc.  

Therefore, let the set $\mathcal{H}$ denote the set of all pairs of $G,G'$ sharing the ratio $Q$. Then, we need to find the maximum CPL value of all $G,G'\in\mathcal{H}$ to ensure \textbf{OB2}. Theorem~\ref{thm:ratios} states that the maximum CPL of all $G,G'\in\mathcal{H}$ is $\ln \frac{e^{\alpha} - 1+e^{\alpha+\varepsilon}(e^{\gamma} - 1)}{e^{\varepsilon}(e^{\gamma} - 1)+ e^{\gamma}(e^{\alpha}-1)}$ which is similar as CPL given by the boundary values of $Q$ (i.e., $\tilde{Q} \in \{e^\alpha, e^{-\gamma}\}^b$, where $\alpha = \ln \max_{q \in Q} (q)$, and $ \gamma = -\ln \min_{q \in Q} (q)$). Since the maximum CPL value of $G,G' \in \mathcal{H}$ does not depend on $b$ and $r$, we only need $\alpha$ and $\gamma$ to compute it. In other words, we can approximate the original $Q$ in Figure~\ref{fig:g_g'_coefficients_a} as its boundary values as shown in Figure~\ref{fig:g_g'_coefficients_b}. Therefore, this approximation reduces the parameter count from $ab$ to \textbf{two} ($\alpha$ and $\gamma$) that are required to characterize the dependency between two attributes. Next, we discuss $\alpha$ and $\gamma$ characterization of probability distributions. 

\begin{theorem}\label{thm:ratios}
Let $X_k$ and $\hat X$ be discrete, correlated attributes with alphabets $\mathcal X_k$ and $\hat{\mathcal X}=\{\hat{x}_1,\dots,\hat{x}_b\}$, respectively. For $x,x'\in\mathcal X_k$ define the conditional probability vectors $\overline{G}=(\overline{G_i})_{i=1}^{b}, \quad \overline{G'}=(\overline{G_i'})_{i=1}^{b}, \text{ with }\overline{G_i}=p(\hat{x}_i|x),\overline{G_i'} =p(\hat{x}_i|x')$. Let $Q$ denotes the element‑ wise ratio vector $Q = \overline{G}\oslash \overline{G'} := \Bigl(\frac{\overline{G_1}}{\overline{G_1'}},\dots,\frac{\overline{G_b}}{\overline{G_b'}}\Bigr)$. Let $\alpha = \ln \max_{q \in Q} (q)$, $ \gamma = -\ln \min_{q \in Q} (q)$, and let $\tilde Q$ be any vector in $\{e^\alpha,e^{-\gamma}\}^{b}$. Let $J$ be the set that contains all the probability distributions of size $b$. Define $\mathcal{H}:=\bigl\{(G, G'): G, G' \in J,\ G\oslash G'=Q\bigr\},$ the set of all pairs of probability vectors sharing the ratio $Q$, and define $\mathcal{\tilde{H}}:=\bigl\{(G, G'): G, G' \in J,\ G\oslash G'=\tilde{Q}\bigr\},$ the set of all pairs of positive probability vectors sharing the ratio $\tilde{Q}$. Then the \emph{correlation-induced privacy leakage (CPL)} satisfies $\max_{(G,G')\in\mathcal{H}}\operatorname{CPL}(G,G')=\max_{(G,G')\in\mathcal{\tilde{H}}}\operatorname{CPL}(G,G')= \ln \frac{e^{\alpha} - 1+e^{\alpha+\varepsilon}(e^{\gamma} - 1)}{e^{\varepsilon}(e^{\gamma} - 1)+ e^{\gamma}(e^{\alpha}-1)}$. Here, $\operatorname{CPL}(G,G')$ denotes the CPL of $G$ and $G'$.
\end{theorem}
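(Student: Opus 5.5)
The plan is to reparametrize $\mathcal H$ by a single weight vector, reduce the inner problem~\eqref{eqn:optimization_problem_over_C} to two aggregate scalars, and solve the resulting two-variable problem in closed form. Since $\operatorname{CPL}(G,G')=\ln\max_{S\subseteq[b]}F(S,G,G',\varepsilon)$ and $\ln$ is monotone, it suffices to compute $\sup_{(G,G')\in\mathcal H}\max_S F$.

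\emph{Step 1: reparametrization.} Write $w=G'$ and $q_i=Q_i$, so that $G_i=q_iw_i$. Then $\mathcal H$ is exactly the set of $w$ in the simplex with $w_i>0$ and $\sum_i q_iw_i=1$. Fix $S$, put $c=e^{\varepsilon}-1\ge 0$, and define
\[
A=\sum_{i\in S}w_i,\qquad B=\sum_{i\in S}q_iw_i .
\]
Using $\sum_i w_i=\sum_i q_iw_i=1$, the objective becomes
\[
F=\frac{1+cB}{1+cA}.
\]
The two constraints on $(A,B)$ come from the ratio bounds $e^{-\gamma}\le q_i\le e^{\alpha}$:
\begin{itemize}
\item applied on $S$, they give $B\le e^{\alpha}A$;
\item applied on $[b]\setminus S$, they give $1-B\ge e^{-\gamma}(1-A)$.
\end{itemize}
This relaxation has no dependence on $b$ or on the interior entries of $Q$, which is the reason the answer only involves $\alpha$ and $\gamma$.

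\emph{Step 2: solving the two-variable problem.} Because $F$ is increasing in $B$, at the optimum
\[
B=\min\bigl(e^{\alpha}A,\;1-e^{-\gamma}(1-A)\bigr).
\]
The two branches meet at
\[
A^\ast=\frac{e^{\gamma}-1}{e^{\alpha+\gamma}-1}.
\]
On the first branch, $F=(1+ce^{\alpha}A)/(1+cA)$, which is nondecreasing in $A$ since $e^{\alpha}\ge1$. On the second branch, the numerator of $dF/dA$ has the sign of
\[
c\bigl(e^{-\gamma}-1\bigr)-c^{2}\bigl(1-e^{-\gamma}\bigr)\le 0 .
\]
Hence the maximum is attained at $A=A^\ast$ with $B=e^{\alpha}A^\ast$. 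Substituting gives
\[
\frac{e^{\alpha+\gamma}-1+ce^{\alpha}(e^{\gamma}-1)}{e^{\alpha+\gamma}-1+c(e^{\gamma}-1)} .
\]
Expanding $c=e^{\varepsilon}-1$ in numerator and denominator yields exactly
\[
\frac{e^{\alpha}-1+e^{\alpha+\varepsilon}(e^{\gamma}-1)}{e^{\varepsilon}(e^{\gamma}-1)+e^{\gamma}(e^{\alpha}-1)} .
\]
The degenerate case $\alpha=\gamma=0$ (i.e.\ $Q\equiv1$) is handled separately: both sides equal $0$. If exactly one of $\alpha,\gamma$ vanishes, the sum constraints force $Q\equiv1$, so this case reduces to the degenerate one.

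\emph{Step 3: attainability.} This is the main obstacle, because the upper bound must be shown tight within $\mathcal H$ and within $\tilde{\mathcal H}$. Choose $S$ to be the indices with $q_i=e^{\alpha}$. Place total weight $A^\ast$ on $S$, a mass $1-A^\ast$ on indices with $q_i=e^{-\gamma}$, and weight $\eta\to0$ on all remaining indices, renormalizing so that $\sum w_i=\sum q_iw_i=1$. Both constraints then become tight in the limit, so the bound is a supremum over $\mathcal H$, and a maximum whenever $Q$ takes only the two extreme values.

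For $\tilde{\mathcal H}$, every entry is extreme, so the same construction attains the bound exactly. This requires $\tilde Q$ to contain both values $e^{\alpha}$ and $e^{-\gamma}$, which is necessary for $\tilde{\mathcal H}\neq\emptyset$. Moreover, Step 2 applies verbatim to $\tilde{\mathcal H}$, since only the bounds $e^{-\gamma}\le q_i\le e^{\alpha}$ were used.

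Comparing the two cases then gives the claimed equality of the two maxima. I would note explicitly that "$\max$" over $\mathcal H$ is to be read as a supremum when interior ratios are present.
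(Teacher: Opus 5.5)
Your proposal is correct and follows essentially the same route as the paper's proof. Both relax to the ratio bounds $e^{-\gamma}\le q_i\le e^{\alpha}$, fix $S$, and aggregate the $S$-sums into two scalars, constrained by one inequality from $S$ and one from its complement. Both then use monotonicity of linear-fractional functions to place the optimum where both constraints bind, and show tightness by concentrating mass on the extreme-ratio coordinates (the paper's reduction to $b=2$).

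The differences are minor. Your $(A,B)$ parametrization replaces the paper's $(q,g')$ one. You are also more careful than the paper on two points: the maximum over $\mathcal H$ is only a supremum when interior ratios are present, and $\tilde{\mathcal H}$ is empty unless $\tilde Q$ contains both extreme values.
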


\noindent\emph{Proof sketch.}
We first relax the feasible likelihood ratios to the interval $[e^{-\gamma},e^\alpha]$ and solve the resulting linear-fractional optimization problem. 
Monotonicity shows that the optimum occurs at the extreme ratios $e^\alpha$ and $e^{-\gamma}$, yielding the stated closed-form bound independently of the choice of $S$. 
Finally, we show that this bound is tight for the original problem by reducing it to an equivalent two-point instance supported on the maximum and minimum ratios. The complete proof is provided in Appendix~\ref{proof:thm:ratios}.

\subsubsection{Characterization Using \texorpdfstring{$(\alpha, \gamma)$}{(alpha, gamma)}}\label{section:Metric Using alpha gamma}

$(\alpha,\gamma)$ is a simplified representation of a conditional probability distribution between two attributes, which can be used to compute an upper bound for $\operatorname{CPL}^\star $ by, 
\[\ln\frac{e^{\alpha} - 1+e^{\alpha+\varepsilon}(e^{\gamma} - 1)}{e^{\varepsilon}(e^{\gamma} - 1)+ e^{\gamma}(e^{\alpha}-1)}\]
based on Theorem~\ref{thm:ratios}. Here, $\alpha$ should be larger than $\gamma$ when selecting the parameters according to Corollary~\ref{corollary:alpha_greater_than_beta}. Next, we explore the methods to handle sparse distribution in Section~\ref{section:Handling Sparse Distribution}.

\begin{corollary}\label{corollary:alpha_greater_than_beta}
    Let $\alpha_1,\gamma_1$ and $\alpha_2,\gamma_2$ be two ($\alpha,\gamma$) characterization values for $G,G'$ such that $\alpha_1 = \ln max_{q \in \overline{G}\oslash \overline{G'}} (q),\ \gamma_1 =\linebreak -\ln min_{q \in \overline{G}\oslash \overline{G'}} (q)$, and $\alpha_2 = \ln max_{q \in \overline{G'}\oslash \overline{G}} (q),\ \gamma_2 =\linebreak -\ln min_{q \in \overline{G'}\oslash \overline{G}} (q)$. Therefore, $\alpha_2 = \gamma_1$ and $\gamma_2 = \alpha_1$. Suppose $U_1$ corresponds to the CPL of $\alpha_1,\gamma_1$ and $U_2$ corresponds to the CPL of $\alpha_2,\gamma_2$. Then, $\alpha_1 \geq \gamma_1 \iff U_1 \geq U_2$.
\end{corollary}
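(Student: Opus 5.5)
The plan is to reduce the comparison of $U_1$ and $U_2$ to an elementary algebraic inequality in the closed form of Theorem~\ref{thm:ratios}. The relations $\alpha_2=\gamma_1$ and $\gamma_2=\alpha_1$ are immediate, since $\overline{G'}\oslash\overline{G}$ is the entrywise reciprocal of $\overline{G}\oslash\overline{G'}$; the maximum and minimum therefore swap and the logarithm changes sign. By Theorem~\ref{thm:ratios},
\[
U_1=\ln f(\alpha_1,\gamma_1), \qquad U_2=\ln f(\gamma_1,\alpha_1), \qquad
f(\alpha,\gamma)=\frac{e^{\alpha}-1+e^{\alpha+\varepsilon}(e^{\gamma}-1)}{e^{\varepsilon}(e^{\gamma}-1)+e^{\gamma}(e^{\alpha}-1)}.
\]
Because $\ln$ is increasing, it suffices to compare $f(\alpha,\gamma)$ with $f(\gamma,\alpha)$.

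Next I substitute $A=e^{\alpha}-1$, $C=e^{\gamma}-1$ and $E=e^{\varepsilon}$. Since $\overline{G}$ and $\overline{G'}$ are both probability vectors, some ratio is at least $1$ and some is at most $1$, so $\alpha,\gamma\ge 0$ and hence $A,C\ge 0$. Expanding with $e^{\alpha}=A+1$ and $e^{\gamma}=C+1$ gives
\[
f(\alpha,\gamma)=\frac{A+EC+EAC}{A+EC+AC}=1+\frac{K}{D_1}, \qquad
f(\gamma,\alpha)=1+\frac{K}{D_2},
\]
where $K=(E-1)AC$, $D_1=A+EC+AC$ and $D_2=C+EA+AC$. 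The key observation is that the numerator excess $K$ is symmetric in $A$ and $C$, so only the denominators differ.

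When $K>0$, we get $f(\alpha,\gamma)\ge f(\gamma,\alpha)$ if and only if $D_1\le D_2$. This is equivalent to $A+EC\le C+EA$, that is, $(E-1)(C-A)\le 0$. Since $E>1$, this holds exactly when $A\ge C$, i.e. $\alpha_1\ge\gamma_1$, which gives both directions of the equivalence.

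The main obstacle is the degenerate case $K=0$, where $U_1=U_2$ regardless of how $\alpha_1$ and $\gamma_1$ compare.
\begin{itemize}
\item If $\alpha=0$, then all ratios are at most $1$ while both vectors sum to $1$. This forces every ratio to equal $1$, so $\gamma=0$ as well. The symmetric argument applies when $\gamma=0$. In this case $\alpha_1=\gamma_1$ and the equivalence holds trivially.
\item If $\varepsilon=0$, both leakages vanish, so the direction $U_1\ge U_2\Rightarrow\alpha_1\ge\gamma_1$ fails.
\end{itemize}
I would therefore state the corollary under the implicit assumption $\varepsilon>0$, and treat the case $\varepsilon=0$ as trivial, since all leakage is zero there.
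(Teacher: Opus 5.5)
Your proposal is correct and follows essentially the same route as the paper: substitute the closed form of Theorem~\ref{thm:ratios} and reduce the sign of the difference to that of $e^{\alpha_1}-e^{\gamma_1}$. The paper's factorization $f(\alpha_1,\gamma_1)-f(\gamma_1,\alpha_1)=(e^{\alpha_1}-1)(e^{\alpha_1}-e^{\gamma_1})(e^{\gamma_1}-1)(e^{\varepsilon}-1)^2/(\lambda_1\lambda_2)$ is exactly your $K(D_2-D_1)/(D_1D_2)$, and your edge-case handling is more careful than the paper's, which asserts $\alpha_1,\gamma_1\ge 1$ where you correctly use $\ge 0$, and states the ``iff'' without noting that the reverse implication fails at $\varepsilon=0$.
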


The proof of Corollary~\ref{corollary:alpha_greater_than_beta} is available in Appendix~\ref{proof:corollary:alpha_greater_than_beta}.

\subsubsection{Handling Sparse Distribution}\label{section:Handling Sparse Distribution}

Sparse probability distributions are common as the alphabet size increases~\cite{SPM,dss}. Such distributions can lead to an undefined $\alpha$ and $\gamma$ parameter when the denominator becomes zero. To handle such scenarios, we introduce two new vectors $(\tilde{\delta}_1,\dots,\tilde{\delta}_t)$ and $(\hat{\delta}_1,\dots,\hat{\delta}_t)$ as
\begin{equation}\label{eqn:delta_constraints}
\begin{split}
        \overline{G_i} &\leq e^{\alpha}\overline{G'_i} + \tilde{\delta}_i;\ \forall \overline{G_i} \in \overline{G},\ \overline{G'_i} \in \overline{G'}, \ \alpha>0,\\
        \overline{G'_i} &\leq e^{\gamma}\overline{G_i} + \hat{\delta}_i;\ \forall \overline{G_i} \in \overline{G},\ \overline{G'_i} \in \overline{G'}, \ \gamma>0.
\end{split}
\end{equation}
Here, 
\begin{equation}\label{eqn:define_delta}
    \tilde{\delta}_i = \begin{cases}
        \overline{G_i}  &; \text{ if}\quad \overline{G'_i} = 0,\\
        0  &; \text{ otherwise,}
    \end{cases},\text{ and }
    \hat{\delta}_i = \begin{cases}
        \overline{G'_i}  &; \text{ if}\quad \overline{G_i} = 0,\\
        0  &; \text{ otherwise.}
    \end{cases}
\end{equation}
Then, we can compute $\alpha$ and $\gamma$ as 
\begin{equation}\label{eqn:gamm_and_alpha_with_delta}
    \alpha = \ln \max_{\forall i \in [b], \overline{G'_i}\neq 0} \frac{\overline{G_i}}{\overline{G'_i}},\quad \gamma = \ln \max_{\forall i \in [b], \overline{G_i}\neq 0} \frac{\overline{G'_i}}{\overline{G_i}}.
\end{equation}
For the simplicity, we take the maximum value $\delta =\linebreak \max\left\{\sum_{i \in [b]}\tilde{\delta}_i,\ \sum_{i \in [b]}\hat{\delta}_i\right\}$, instead of two values. Then we can estimate the CPL using $(\alpha,\gamma,\delta)$ as stated in Theorem~\ref{thm:alpha_beta_delta_apl}.

\begin{theorem}\label{thm:alpha_beta_delta_apl}
Let $X_k$ and $\hat X$ be discrete, correlated attributes. Let $\alpha,\gamma$ be characterization parameters for $L_{\hat{X}\rightarrow X_k}$. Let $(\tilde{\delta}_1,\dots,\tilde{\delta}_b)$ and $(\hat{\delta}_1,\dots,\hat{\delta}_b)$ be parameters introduced to handle sparsity with $\delta = \max\left\{\sum_{i \in [b]}\tilde{\delta}_i,\sum_{i \in [b]}\hat{\delta}_i\right\}$. Then the CPL characterized by $(\alpha,\gamma,\delta)$ can be computes as
\begin{equation}\label{eqn:apl_cal_delta_available}
    \operatorname{CPL}=
    \begin{cases} 
     \ln \bigl(1+\delta(e^\varepsilon-1)\bigr)
    \text{; if }  e^\alpha-1-\delta(e^\varepsilon-1) < 0, \\
     \ln \left(\frac{e^{\alpha} + \delta - e^{\alpha}\delta - \delta e^{\varepsilon}
      + e^{\alpha+\varepsilon}(-1 + e^{\gamma} + \delta)-1} {\delta(e^{\varepsilon}-1) - e^{\varepsilon}
      + e^{\gamma}(-1 + e^{\alpha} + \delta + e^{\varepsilon} - \delta e^{\varepsilon})}\right)\text{; otherwise.}
    \end{cases}
\end{equation}
\end{theorem}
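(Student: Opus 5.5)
The plan is to reduce the problem to a two-variable linear-fractional program and read off its optimum at a vertex of the feasible region. Here ``CPL characterized by $(\alpha,\gamma,\delta)$'' is read as the largest value of $\ln F(S,G,G',\varepsilon)$ in \eqref{eqn:optimization_problem_over_C}. The maximum is taken over all $S\subseteq[b]$ and all probability vectors $G,G'$ satisfying the relaxed constraints \eqref{eqn:delta_constraints}, with the sparsity slack summing to at most $\delta$.

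\textbf{Reduction to two aggregates.} Fix $S$ and put $A=\sum_{i\in S}G_i$, $A'=\sum_{i\in S}G'_i$ and $c=e^\varepsilon-1\ge 0$. Since $\sum_i G_i=\sum_i G'_i=1$, the objective becomes
\[
F=\frac{1+cA}{1+cA'}.
\]
Summing the first line of \eqref{eqn:delta_constraints} over $i\in S$ and bounding $\sum_{i\in S}\tilde\delta_i\le\delta$ gives $A\le e^\alpha A'+\delta$. Summing the second line over $[b]\setminus S$ gives $1-A'\le e^\gamma(1-A)+\delta$. We also keep $0\le A,A'\le 1$. This yields a relaxation in which $S$, $b$ and the individual entries have disappeared, so any optimum of it upper bounds the true value. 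Tightness will be shown afterwards by realizing the optimal $(A,A')$ with an explicit $b=2$ or $b=3$ instance, in the same way as the two-point reduction in the proof of Theorem~\ref{thm:ratios}.

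\textbf{Solving the 2D problem.} $F$ is increasing in $A$ and decreasing in $A'$. Hence at an optimum $A$ is pushed up to the binding constraint $A=\min\{e^\alpha A'+\delta,\ 1-(1-A'-\delta)e^{-\gamma}\}$, and it remains to optimize over $A'$ along this upper envelope. On the first piece, $A=e^\alpha A'+\delta$, we have
\[
F(A')=\frac{1+c\delta+ce^\alpha A'}{1+cA'},
\]
whose derivative in $A'$ has the sign of $e^\alpha-1-c\delta$. There are then two cases.
\begin{itemize}
\item If $e^\alpha-1-\delta(e^\varepsilon-1)<0$, the ratio decreases along the first piece. The second piece is steeper downward relative to $F$, so it is no better. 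The optimum is therefore $A'=0$, $A=\delta$, which is feasible because $1\le e^\gamma(1-\delta)+\delta$. This gives $\ln(1+\delta(e^\varepsilon-1))$.
\item Otherwise, $F$ increases along the first piece up to the kink where both constraints bind. Solving $A=e^\alpha A'+\delta$ and $1-A'=e^\gamma(1-A)+\delta$ gives
\[
A'=\frac{e^\gamma(1-\delta)-(1-\delta)}{e^{\alpha+\gamma}-1},
\]
with $A$ determined accordingly. Substituting into $(1+cA)/(1+cA')$ and clearing denominators should give the second expression in \eqref{eqn:apl_cal_delta_available}. A sanity check: setting $\delta=0$ must recover the formula of Theorem~\ref{thm:ratios}.
\end{itemize}

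\textbf{Main obstacle.} The algebra is routine; the delicate part is the vertex analysis. Two points need care. First, after the kink one must verify that $F$ is nonincreasing along the second piece. Its slope relative to $F$ has the sign of $e^{-\gamma}\cdot(\cdot)-1$-type expressions, and I would check this using $\alpha,\gamma\ge0$. Second, the kink point must be feasible ($0\le A'\le A\le1$), which I would verify for $\delta<1$. For the attainment direction, I would construct $G=(A,\,1-A)$ and $G'=(A',\,1-A')$ plus, when $\delta>0$, one extra coordinate carrying the unmatched mass $\delta$, and check that the ratios and slacks match $(\alpha,\gamma,\delta)$.
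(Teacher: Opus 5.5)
Your main line matches the paper's proof. You fix $S$ and aggregate to $A=\sum_{i\in S}G_i$, $A'=\sum_{i\in S}G'_i$. You sum the per-coordinate constraints into $A\le e^\alpha A'+\delta$ and $1-A'\le e^\gamma(1-A)+\delta$, then use linear-fractional monotonicity along the two pieces of the envelope. The maximum lands at $A'=0$ or at the kink $A'=(e^\gamma-1)(1-\delta)/(e^{\alpha+\gamma}-1)$, and substituting the kink does reproduce \eqref{eqn:apl_cal_delta_available} exactly.

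The only organizational difference is in how the slacks are handled. You bound $\sum_{i\in S}\tilde\delta_i$ and $\sum_{i\notin S}\hat\delta_i$ by $\delta$ at once. The paper instead solves with the two slack sums kept separate (three cases), proves $V^\star$ is monotone in each (Corollary~\ref{corollary:monotone_V_over_deltas}), and only then sets both to $\delta$. Your shortcut is legitimate for the upper bound, since enlarging the constraint set can only raise the maximum. It also reduces the paper's ``Case~1 and Case~3 are infeasible'' observations to your checks that the kink satisfies $A'\ge 0$ and $A\le 1$.

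The step that fails as written is attainment, which the paper itself only sketches. A coordinate with $G'_i=0$ contributes only to $\tilde\delta$, and one with $G_i=0$ only to $\hat\delta$. With a single unmatched coordinate, one of the two summed constraints holds with zero slack, so the maximum you realize is too small. For instance, if the unmatched mass sits in $S$, the second constraint becomes $1-A'\le e^\gamma(1-A)$. The kink then moves left to $A'=(e^\gamma-1-\delta e^\gamma)/(e^{\alpha+\gamma}-1)$, where $F$ is smaller because it is nondecreasing along the first piece in the ``otherwise'' case (strictly smaller when $e^\alpha-1-\delta(e^\varepsilon-1)>0$). Separately, $G=(A,1-A)$ plus an extra coordinate of mass $\delta$ no longer sums to $1$.

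You need one unmatched coordinate on each side. At the kink values, take $S=\{1,2\}$, $G=(e^\alpha A',\,\delta,\,1-A,\,0)$ and $G'=(A',\,0,\,e^\gamma(1-A),\,\delta)$. Both vectors sum to $1$ precisely because both constraints bind. The ratios are $e^\alpha$ and $e^{-\gamma}$, and both slack sums equal $\delta$. In the first case of \eqref{eqn:apl_cal_delta_available}, $S=\{1\}$, $G=(\delta,1-\delta,0)$, $G'=(0,1-\delta,\delta)$ attains $1+\delta(e^\varepsilon-1)$. So for $\delta>0$ your witnesses need $b=3$ or $b=4$, not $b=2$.
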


\noindent\emph{Proof sketch.}
For a fixed subset $S$, we aggregate the variables as $g=\sum_{i\in S}G_i$ and $g'=\sum_{i\in S}G'_i$, reducing the original multidimensional problem to a one-dimensional linear-fractional optimization. 
The constraints provide two affine upper bounds on $g$ in terms of $g'$, and monotonicity shows that the optimum occurs at a boundary, their intersection, or the saturation point $g=1$, resulting in a piecewise solution. We then construct feasible $G$ and $G'$ attaining this solution and optimize over $S$ by collecting all $\tilde{\delta}_i$ terms inside $S$ and all $\hat{\delta}_i$ terms outside $S$. 
Finally, bounding both aggregate slack terms by $\delta=\max\{\tilde{\delta},\hat{\delta}\}$ eliminates the infeasible cases and gives the stated CPL bound. 
The complete proof is provided in Appendix~\ref{proof:thm:alpha_beta_delta_apl}.

\subsubsection{Limitations of \texorpdfstring{$(\alpha, \gamma, \delta)$}{(alpha, gamma, delta)} characterization}\label{section:Limitations of_alpha,gamma_delta}
The main challenge with ($\alpha, \gamma, \delta$) characterization is the reduction of estimation accuracy of $\operatorname{CPL} ^\star$. To illustrate this, let us consider the following toy example.

\textbf{Toy Example - } Let $X_k$ and $\hat{X}$ be two correlated attributes, and let $\mathcal{X}_k=\{x,x'\}$ and  $\mathcal{\hat{X}}=\{\hat{x}_1,\hat{x}_2,\hat{x}_3,\hat{x}_4\}$ be their alphabets respectively. Then, suppose we need to compute CPL of $X_k$ caused by the release of perturbed $\hat{X}$ (i.e., $L_{\Hat{X}\rightarrow X_k}$) for two scenarios: 

\textit{Scenario 1} - $\overline{G} = (0.1, 0.1, 0.15, 0.2, 0.45), \overline{G'} = (0.4,\linebreak 0.3, 0.1, 0.1, 0.1)$, where $\overline{G}$ and $\overline{G'}$ are the probability distribution $P(\Hat{X}|X_k=x)$ and $P(\Hat{X}|X_k=x')$, respectively. Then $Q=\left( \frac{\overline{G}_i}{\overline{G'}_i}\right)^4_{i=1}=(0.25, 0.33, 1.5, 2.0, 4.5)$.

\textit{Scenario 2} - $\overline{G} = (0.0025, 0.2975, 0.48, 0.13, 0.09), \overline{G'} = (0.01, 0.39,\linebreak 0.47, 0.11, 0.02)$ where $\overline{G}$ and $\overline{G'}$ are the probability distribution $P(\Hat{X}|X_k=x)$ and $P(\Hat{X}|X_k=x')$ of Scenario 2, respectively. Then $Q=\left( \frac{\overline{G_i}}{\overline{G'_i}}\right)^4_{i=1}=(0.25, 0.58, 1.02,  2.09, 4.5)$.

Note that both scenarios have the \emph{same} $\alpha=1.50,\gamma=1.38, \delta = 0$. However, $\operatorname{CPL}^\star $ is significantly different in two scenarios as shown in Figure~\ref{fig:toy_example_a} and Figure~\ref{fig:toy_example_a}. The main reason for this is the information loss due to representing the distribution with ratios $\left(\frac{\overline{G}_i}{\overline{G'}_i}\right)^b_{i=0}$ as it \emph{only} conveys the information of \emph{extremes}. Thus, we propose a calibration step to address this issue in the next section.

\begin{figure}[t]
  \centering
  \includegraphics[trim={0.8cm 0.3cm 0.7cm 0.2cm},width=0.55\linewidth]{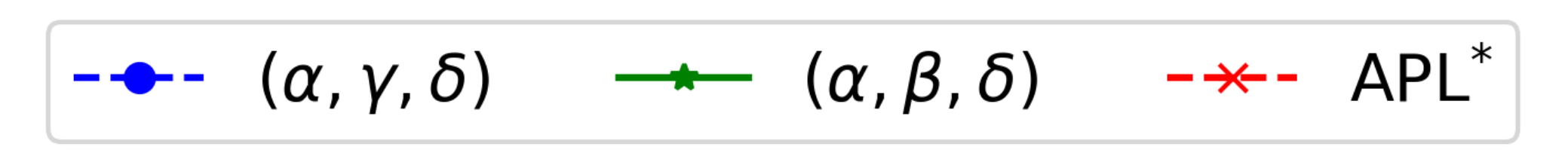}\par
  \makebox[\linewidth][c]{%
    \hfill
    \begin{subfigure}[t]{0.45\columnwidth}
      \centering
      \includegraphics[trim={1.3cm 0.8cm 1.2cm 0.8cm},clip,width=1\textwidth]{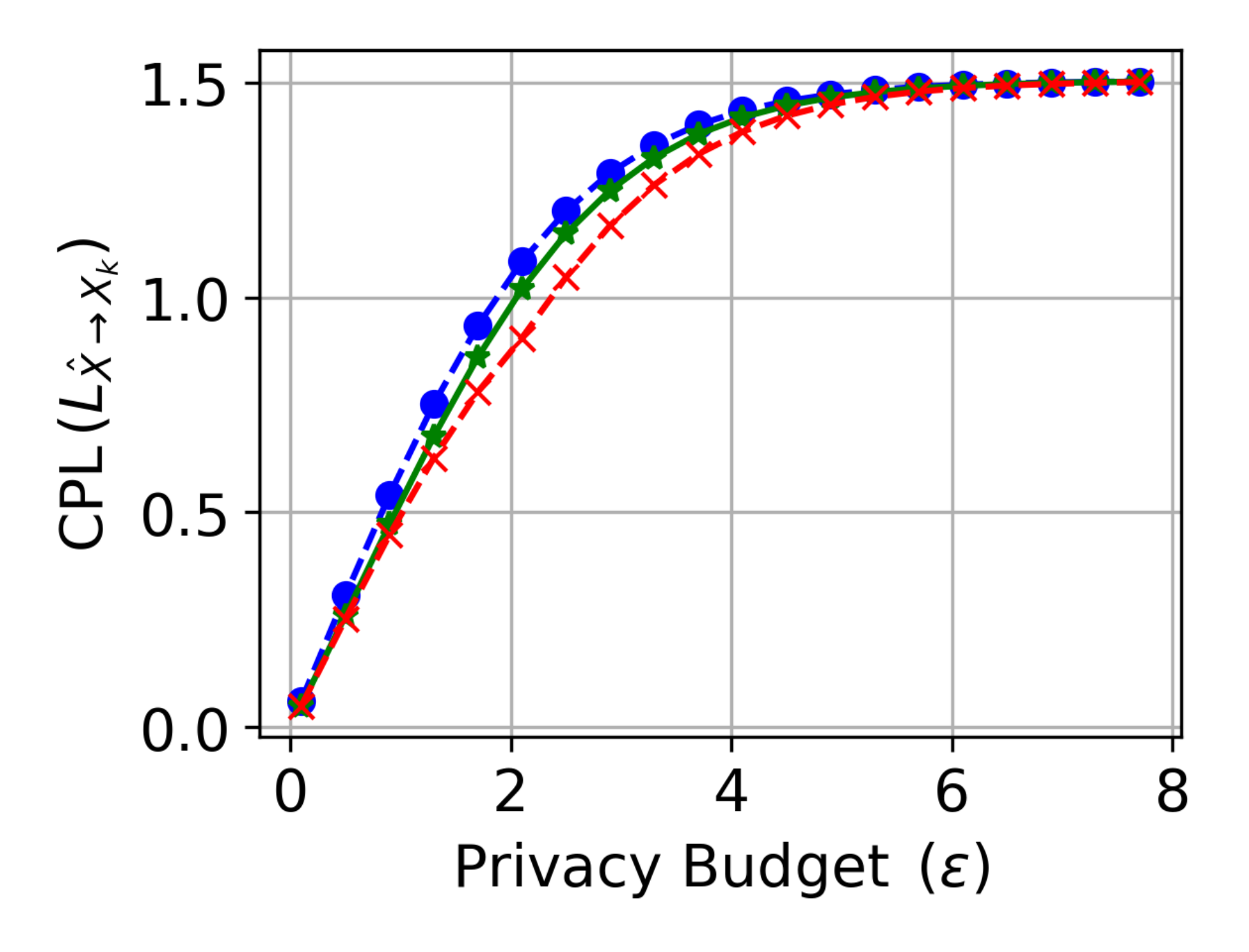}
      \vspace{-1.5em}
      \caption{}\label{fig:toy_example_a}
    \end{subfigure}
    \hfill
    \begin{subfigure}[t]{0.45\columnwidth}
      \centering
      \includegraphics[trim={1.3cm 0.8cm 1.2cm 0.8cm},clip,width=1\textwidth]{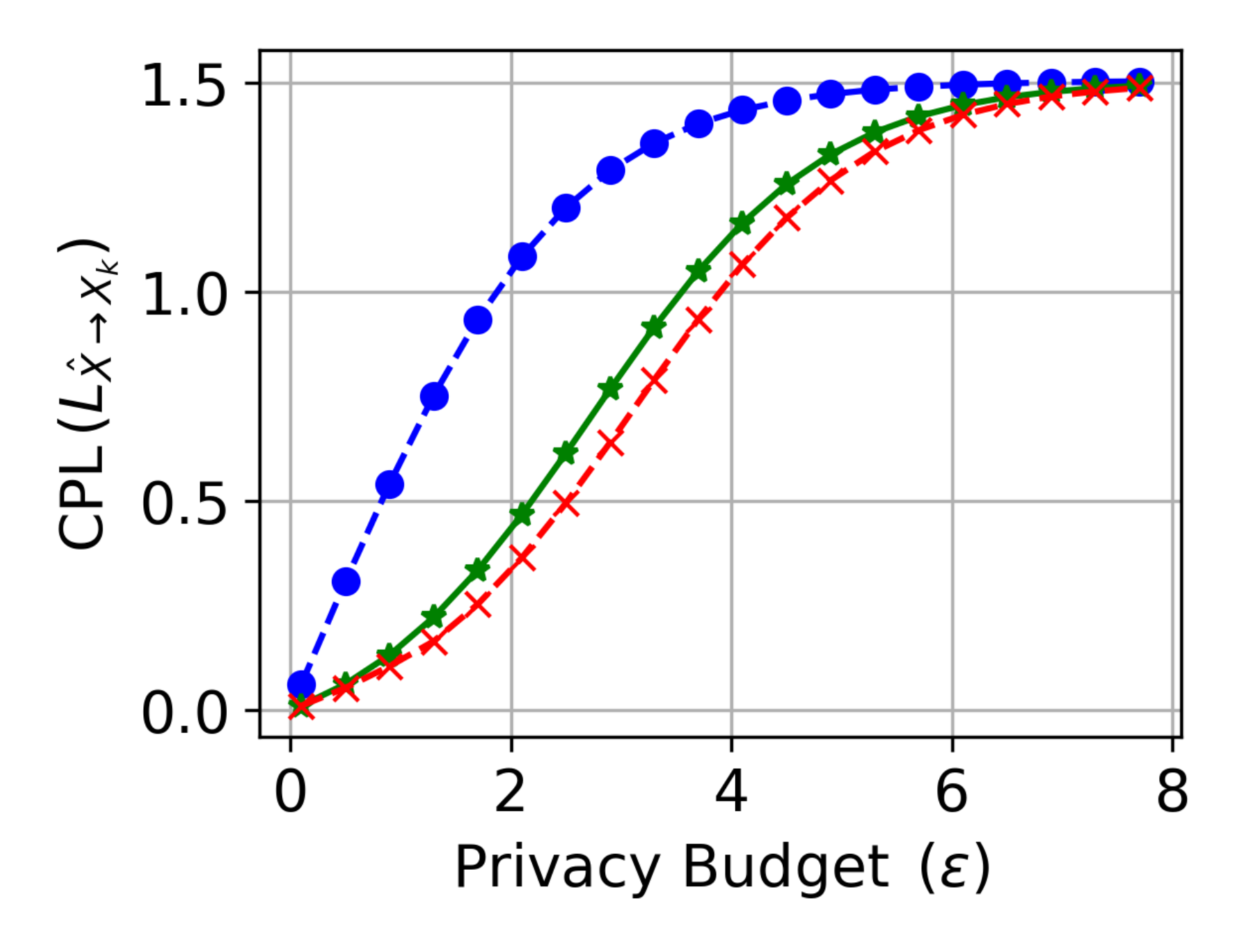}
      \vspace{-1.5em}
      \caption{}\label{fig:toy_example_b}
    \end{subfigure}%
    \hfill
  }
  \caption{Results of the toy example. (a) $L_{\hat{X}\rightarrow X_k}$ vs $\varepsilon$ for Scenario 1. (b) $L_{\hat{X}\rightarrow X_k}$ vs $\varepsilon$ for Scenario 2.
  }\label{fig:toy_example}
\end{figure}

\subsubsection{Calibrated Characterization Using \texorpdfstring{$(\alpha,\beta,\delta)$}{(alpha, beta, delta)}}\label{section:Calibrated Characterisation Using}

First, let us revisit the approximation approach of the $(\alpha,\gamma,\delta)$ method.

``$\alpha$'' is crucial to represent the privacy leakage at higher $\varepsilon$ according to the Corollary~\ref{corollary:max_privacy_leakage_privacy_budget}. Therefore, modifying $\alpha$ affects the privacy guarantee at higher privacy budgets.

"$\gamma$" also contributes to representing the strength of the correlation between two attributes. However, it is not a crucial parameter, as $\alpha$. According to Corollary~\ref{corollary:apl_monotonially_increases_with_beta}, privacy leakage monotonically increases with $\gamma$. As discussed in Section~\ref{section:Limitations of_alpha,gamma_delta}, the main issue with $(\alpha, \gamma, \delta)$ characterization is the over-estimate of the correlation. Adjusting the lower bound (i.e., $e^{-\gamma}$) such that it represents the average impact of the $Q=\left(\frac{\overline{G}_i}{\overline{G'}_i}\right)_{i=1}^{b}$ ratios could possibly improve the accuracy.

"$\delta$" corresponds to the edge cases that do not directly cause the significant estimation error.

In our analysis, we notice that calibrating the minimum level of $Q$, $e^{-\gamma}$ with respect to the $\operatorname{CPL}^\star $ (taking the calibrated minimum level as $e^{-\beta}$) at an $\varepsilon=\varepsilon_0$, improves the accuracy of CPL estimation while holding the privacy guarantee for $\varepsilon \geq \varepsilon_0$ as stated in Theorem~\ref{thm:beta}. Intuitively, we compute intermediate level $e^{-\beta}$ in between $e^{-\gamma}$ and $1$ such that $(\alpha,\beta,\delta)$ estimate CPL more accurately as shown in Figure~\ref{fig:g_g'_coefficients_c}. Here, we can estimate the CPL of $(\alpha,\beta, \delta)$ using Corollary~\ref{corollary:alpha_beta_delta_apl} similar to $(\alpha,\gamma, \delta)$. 

Using Corollary~\ref{corollary:beta_upper_bound}, we can compute $\beta$ at $\varepsilon \rightarrow 0$, and it is always less than or equal to $\gamma$. This is an important result, as the estimated CPL guarantees privacy for $\varepsilon \in (0,\infty)$.
Since $\beta$ is always less than or equal to $\gamma$, it is ensured that $(\alpha,\beta, \delta)$ provides better estimation (Note that estimated CPL is monotonically increasing with $\gamma$ as stated in Corollary~\ref{corollary:apl_monotonially_increases_with_beta}).
Next, let us estimate CPL using the $(\alpha,\beta, \delta)$ method for the running toy example (Section~\ref{section:Limitations of_alpha,gamma_delta}). 

\textbf{Toy Example (Continue) - }
Let us compute $\beta$ for both Scenario 1 and 2 at $\varepsilon_0 = 0$ using Corollary~\ref{corollary:beta_upper_bound}. As shown in Figure~\ref{fig:toy_example_a} and Figure~\ref{fig:toy_example_b}, the estimated CPL of $(\alpha, \beta, \delta)$ method (using Corollary~\ref{corollary:alpha_beta_delta_apl}) is a tighter upper bound and adapts to the changes of $Q$. \textit{Therefore, the calibration step mitigates the uncontrolled overestimation issue of the $(\alpha,\gamma, \delta)$ estimation method}.

\begin{theorem}\label{thm:beta}
Let the metric be calibrated at privacy budget $\varepsilon_0$ and let $l$ be the $\operatorname{CPL}^\star $ at $\varepsilon_0$. Then, the calibrated lower level $\beta$ is given by, $\beta =\ln \frac{-1 + e^{\alpha} + \delta + e^{l} \delta - e^{\alpha} \delta - e^{\varepsilon_0} \left( (e^{l} - e^{\alpha})(-1 + \delta) + \delta \right)}{-e^{\varepsilon_0} \left( e^{\alpha} + e^{l} (-1 + \delta) \right) + e^{l} (-1 + e^{\alpha} + \delta)}$.
The estimated CPL of ($\alpha, \beta, \delta$) method is an upper bound for the $\operatorname{CPL}^\star $ for $\varepsilon \geq \varepsilon_0$.
\end{theorem}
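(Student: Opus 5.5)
Write $f_{\alpha,\beta,\delta}(\varepsilon)$ for the exponentiated estimate of Theorem~\ref{thm:alpha_beta_delta_apl} (second branch), with $\beta$ in place of $\gamma$:
\[
f_{\alpha,\beta,\delta}(\varepsilon)=\frac{e^{\alpha} + \delta - e^{\alpha}\delta - \delta e^{\varepsilon} + e^{\alpha+\varepsilon}(e^{\beta} + \delta-1)-1}{\delta(e^{\varepsilon}-1) - e^{\varepsilon} + e^{\beta}(e^{\alpha} + \delta + e^{\varepsilon} - \delta e^{\varepsilon}-1)}.
\]
Let $c^\star(\varepsilon)=e^{\operatorname{CPL}^\star(\varepsilon)}$ denote the optimum of~\eqref{eqn:optimization_problem_over_C}.

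The proof has two parts. First, the closed form for $\beta$ is exactly the solution of $f_{\alpha,\beta,\delta}(\varepsilon_0)=e^{l}$. This equation is linear in $e^{\beta}$: cross-multiplying gives $A+Be^{\beta}=e^{l}(C+De^{\beta})$ with coefficients depending only on $\alpha,\delta,\varepsilon_0$. Solving for $e^\beta$ and simplifying yields the stated expression. We will also check, as Corollary~\ref{corollary:beta_upper_bound} indicates, that $0\le\beta\le\gamma$. This holds because $l\le$ the $(\alpha,\gamma,\delta)$ estimate at $\varepsilon_0$, and that estimate is increasing in its lower parameter (Corollary~\ref{corollary:apl_monotonially_increases_with_beta}). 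So $\beta$ is well defined, and by construction the estimate touches $\operatorname{CPL}^\star$ at $\varepsilon_0$.

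Second, we must show $f_{\alpha,\beta,\delta}(\varepsilon)\ge c^\star(\varepsilon)$ for all $\varepsilon\ge\varepsilon_0$. We use the structure of the optimal solution from Algorithm~\ref{algo:related_algorithm_1}. For fixed $x,x'$ and fixed $S$, the objective is
\[
F_S(\varepsilon)=\frac{1+(e^\varepsilon-1)g_S}{1+(e^\varepsilon-1)g'_S},\qquad g_S=\sum_{i\in S}\overline{G_i},\quad g'_S=\sum_{i\in S}\overline{G'_i}.
\]
So $c^\star(\varepsilon)=\max_{x,x',S}F_S(\varepsilon)$ is a maximum of finitely many such Möbius-type curves in $t=e^\varepsilon-1$. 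Similarly, $f_{\alpha,\beta,\delta}$ is itself of the form $(1+t\,g)/(1+t\,g')$ for the two-level extremal instance $(g,g')$ constructed in the proof of Theorem~\ref{thm:alpha_beta_delta_apl}. Comparing two such curves, $(1+tg)/(1+tg')\ge(1+tg_S)/(1+tg'_S)$ reduces to a polynomial inequality in $t$ that is at most quadratic, with roots at $t=0$ and at most one positive value. Hence two such curves cross at most once for $t>0$.

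The key step, and the main obstacle, is to show that the calibrated curve dominates every $F_S$ beyond $\varepsilon_0$. Three facts combine. The curves agree with or exceed each other at $\varepsilon_0$: $f(\varepsilon_0)=e^l\ge F_S(\varepsilon_0)$. Both tend to their saturation values as $\varepsilon\to\infty$, namely $g/g'$ for $f$ and $g_S/g'_S$ for $F_S$. And $g/g'$, which equals $e^\alpha$ (or is governed by $\delta$ in the sparse branch), is at least $g_S/g'_S$, since every likelihood ratio is at most $e^\alpha$ up to the $\delta$ slack. A curve that starts at or above another at $\varepsilon_0$, ends at or above it at infinity, and crosses it at most once in between cannot dip below it on $[\varepsilon_0,\infty)$. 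Making this rigorous requires a careful sign analysis of the quadratic difference in $t$, including the degenerate tangency case and the first (sparse) branch $\ln(1+\delta(e^\varepsilon-1))$. In that branch we argue directly that $g_S\le\delta+e^\alpha g'_S$ implies the bound. Taking the maximum over $x,x',S$ completes the argument.
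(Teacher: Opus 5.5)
Your first part (solving the calibration equation, whose numerator and denominator are affine in $e^{\beta}$) matches the paper. Your second part takes a different route. It is sound in outline, but one step fails as written when $\delta>0$.

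\textbf{The failing step.} With the calibrated value $g'_c$, the second-branch curve is $f(t)=\frac{1+(e^{\alpha}g'_c+\delta)t}{1+g'_ct}$, where $t=e^{\varepsilon}-1$. Its saturation value is $e^{\alpha}+\delta/g'_c$. Your bound $g_S\le e^{\alpha}g'_S+\delta$ only gives $g_S/g'_S\le e^{\alpha}+\delta/g'_S$, and this exceeds $e^{\alpha}+\delta/g'_c$ whenever $g'_S<g'_c$. In particular, any $S\subseteq\{i:\overline{G'_i}=0\}$ with $g_S>0$ gives $F_S=1+g_St\to\infty$. So ``ends at or above it at infinity'' is false for the second-branch formula. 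Your single-crossing argument on $[t_0,t_1]$, with $t_1=(e^{\alpha}-1)/\delta$, therefore has no valid right endpoint. Treating the first branch separately does not repair this.

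\textbf{The fix.} Use $t_1$ in place of $\infty$. Substituting $\delta t_1=e^{\alpha}-1$ gives $f(t_1)=e^{\alpha}=1+\delta t_1$ for every $g'_c$, so the two branches join continuously. Your first-branch inequality is equivalent to $g'_S t\,(1+\delta t-e^{\alpha})\ge 0$; at $t=t_1$ it gives $F_S(t_1)\le e^{\alpha}$. Now write
\[
(1+gt)(1+g'_St)-(1+g_St)(1+g'_ct)=t\bigl[(g-g'_c)-(g_S-g'_S)+t\,(g\,g'_S-g_S\,g'_c)\bigr],\qquad g=e^{\alpha}g'_c+\delta .
\]
The bracket is affine in $t$ and nonnegative at $t_0$ and at $t_1$, hence nonnegative on all of $[t_0,t_1]$. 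If $t_0>t_1$, only the first branch is involved. For $\delta=0$ your saturation argument is correct: the $t^2$-coefficient is $g'_c(e^{\alpha}g'_S-g_S)\ge0$, and this also covers the tangency case.

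\textbf{A smaller point.} The sign reduction needs $1+g'_ct>0$, i.e.\ $g'_c\ge0$, which is equivalent to $\beta\ge0$. You assert $0\le\beta$, but your argument only establishes $\beta\le\gamma$. The missing half goes as follows. At $\beta=0$ the estimate equals $1+\delta(e^{\varepsilon}-1)$, and the estimate is increasing in $\beta$. Meanwhile $e^{l}\ge 1+\delta(e^{\varepsilon_0}-1)$, obtained by taking $S=\{i:\overline{G'_i}=0\}$ for the ordered pair $(x,x')$ attaining $\delta$. Together these give $\beta\ge0$.

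\textbf{Comparison with the paper.} The paper writes the estimate through $g'$. It shows by explicit algebra (Proposition~\ref{prop:f1_geq_f2}) that the calibrated $g'$ is non-increasing in the calibration budget, as long as the aggregates $A,B$ selected by Algorithm~\ref{algo:related_algorithm_1} stay fixed. Monotonicity in $g'$ then gives $\operatorname{est}_{\varepsilon_0}(\varepsilon_1)\ge\operatorname{est}_{\varepsilon_1}(\varepsilon_1)=\operatorname{CPL}^\star(\varepsilon_1)$, and the paper chains this across the segments where $A,B$ change.

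Your approach compares one curve directly against every candidate $F_S$ for every ordered pair. It needs no segment bookkeeping and no knowledge of which $S$ is optimal. It also covers the first branch, which the paper's Lemma~\ref{lemma:monotonically_increasing_f} (stated under $e^{\alpha}-1-\delta(e^{\varepsilon}-1)\ge0$) does not address. In exchange, the paper's route yields an ordering of the calibrated curves in $\varepsilon_0$ within a segment. That ordering shows directly that calibrating at a smaller budget is more conservative.
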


\noindent\emph{Proof sketch.}
We first calibrate $\beta$ by equating the analytical CPL expression to the exact leakage $\operatorname{CPL}^{\star}$ at the calibration budget $\varepsilon_0$ and solving for $\beta$. 
We then express the calibrated CPL through a scalar parameter $g'$ and show that, within each segment where the selected aggregates $A$ and $B$ remain fixed, $g'$ decreases as the calibration budget increases. 
Since the CPL expression is monotonically increasing in $g'$ under the considered condition, calibration at $\varepsilon_0$ upper-bounds calibration at any larger budget within the same segment. 
Extending this ordering across adjacent segment boundaries proves that the calibrated estimate upper-bounds $\operatorname{CPL}^{\star}$ for every $\varepsilon>\varepsilon_0$.
The complete proof is provided in Appendix~\ref{proof:thm:beta}.

\begin{corollary}\label{corollary:beta_upper_bound}
Let $e^{-\beta}$ be the calibrated lower bound calibrated at $\varepsilon=\varepsilon_0$. Then,
\begin{equation}
    \lim_{\varepsilon_0 \rightarrow 0}\beta = \ln\frac{1 + \tilde{A} - \tilde{B} + e^{\alpha}(-1 + \delta) - 2\delta}{1 + (-1 + \tilde{A} - \tilde{B}) e^{\alpha} - \delta}\leq \gamma.
\end{equation} Here, $\tilde{A} = \sum_{\overline{G}_i \in \overline{G},\ \overline{G'}_i\in \overline{G'},\ \frac{\overline{G}_i}{\overline{G'}_i}>1} \overline{G}_i$ and $\tilde{B} = \sum_{\overline{G}_i \in \overline{G},\ \overline{G'}_i\in \overline{G'},\ \frac{\overline{G}_i}{\overline{G'}_i}>1} \overline{G'}_i$. 
\end{corollary}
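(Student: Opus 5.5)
The plan is to compute the limit by a first-order expansion in $\varepsilon_0$, and to get the inequality from the fact that the $(\alpha,\gamma,\delta)$ estimate is already an upper bound, rather than by manipulating the closed form directly.

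\emph{Step 1: the $0/0$ form.} Write $\beta(\varepsilon_0)=\ln N(\varepsilon_0)/D(\varepsilon_0)$ using the expression in Theorem~\ref{thm:beta}, with $l=l(\varepsilon_0)=\operatorname{CPL}^\star(\varepsilon_0)$. As $\varepsilon_0\to 0$ we have $e^{\varepsilon_0}\to 1$. We also have $l\to 0$, because every term of $F(S,G,G',\varepsilon)$ in~\eqref{eqn:optimization_problem_over_C} tends to $1$. Substituting $e^{\varepsilon_0}=e^{l}=1$ shows that both $N$ and $D$ vanish. For instance,
\[
N(0)=-1+e^\alpha+2\delta-e^\alpha\delta-\bigl((1-e^\alpha)(\delta-1)+\delta\bigr)=0,
\]
and $D(0)=0$ likewise. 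So the limit must be taken by L'Hôpital's rule, or equivalently by a first-order Taylor expansion in $\varepsilon_0$.

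\emph{Step 2: the derivative of $l$ at $0$.} For a fixed $S$ and $e^{\varepsilon}=1+\varepsilon+o(\varepsilon)$, the normalizations $\sum_i G_i=\sum_i G'_i=1$ give
\[
F(S,\varepsilon)=1+\varepsilon\Bigl(\sum_{i\in S}G_i-\sum_{i\in S}G'_i\Bigr)+o(\varepsilon).
\]
The first-order coefficient is maximized by $S^\star=\{i:\overline{G_i}/\overline{G'_i}>1\}$, which is consistent with the ratio-sorting rule of Algorithm~\ref{algo:related_algorithm_1}. Since there are finitely many $S$, the maximizer of $F$ coincides with $S^\star$ for all sufficiently small $\varepsilon$. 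Hence $l(\varepsilon_0)=(\tilde A-\tilde B)\,\varepsilon_0+o(\varepsilon_0)$, with $\tilde A,\tilde B$ as in the statement.

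\emph{Step 3: the limit formula.} Substitute $e^{l}=1+(\tilde A-\tilde B)\varepsilon_0+o(\varepsilon_0)$ and $e^{\varepsilon_0}=1+\varepsilon_0+o(\varepsilon_0)$ into $N$ and $D$. Keep only the $\varepsilon_0$-coefficients, divide, and take $\ln$. This bookkeeping should produce exactly
\[
\frac{1+\tilde A-\tilde B+e^\alpha(-1+\delta)-2\delta}{1+(-1+\tilde A-\tilde B)e^\alpha-\delta}.
\]
I expect this to be routine but error-prone algebra. The one genuine point is justifying Step 2, namely that the optimal $S$ stabilizes near $0$, which the finiteness argument handles.

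\emph{Step 4: the inequality $\lim\beta\le\gamma$.} I would not compare closed forms. Instead, argue pointwise in $\varepsilon_0$ and pass to the limit.
\begin{itemize}
\item By Theorem~\ref{thm:alpha_beta_delta_apl}, the $(\alpha,\gamma,\delta)$ estimate is an upper bound on $\operatorname{CPL}^\star$ at every $\varepsilon_0$, so it is at least $l(\varepsilon_0)$.
\item By construction in Theorem~\ref{thm:beta}, the $(\alpha,\beta,\delta)$ estimate, evaluated via Corollary~\ref{corollary:alpha_beta_delta_apl} at $\varepsilon_0$, equals $l(\varepsilon_0)$.
\item By Corollary~\ref{corollary:apl_monotonially_increases_with_beta}, the estimate is monotonically increasing in its second parameter. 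Hence $\beta(\varepsilon_0)\le\gamma$ for every $\varepsilon_0>0$, and the non-strict inequality survives the limit.
\end{itemize}

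\emph{Main obstacle.} The hard part is Step 4's reliance on monotonicity, which must be applied in the regime where the estimate actually depends on the lower-level parameter, that is, the ``otherwise'' branch of~\eqref{eqn:apl_cal_delta_available}. In the first branch the estimate does not depend on that parameter at all, so strict monotonicity fails there. If that branch is active at small $\varepsilon_0$, I would fall back on a direct check. Using $\tilde A\le e^\alpha\tilde B+\delta$ and $1-\tilde B\le e^\gamma(1-\tilde A)+\delta$, which come from~\eqref{eqn:delta_constraints}, I would verify the limiting ratio against $e^\gamma$ by cross-multiplying, provided the denominator is positive.
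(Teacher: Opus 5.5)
Your proposal is correct. The limit computation is essentially the paper's, but you prove the inequality by a different route.

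For the limit, the paper substitutes the exact form $e^{l}=\frac{1+A(e^{\varepsilon_0}-1)}{1+B(e^{\varepsilon_0}-1)}$ into the $\beta$ formula of Theorem~\ref{thm:beta}. It cancels the common vanishing factor and lets the selection of Algorithm~\ref{algo:related_algorithm_1} tend to the ratio-$>1$ set. Your expansion $l=(\tilde A-\tilde B)\varepsilon_0+o(\varepsilon_0)$ does the same thing asymptotically. The first-order coefficients of $N$ and $D$ do come out as $1+\tilde A-\tilde B+e^\alpha(\delta-1)-2\delta$ and $1-\delta+e^\alpha(\tilde A-\tilde B-1)$.

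For the inequality, the paper argues directly. It asserts $e^\gamma\ge\frac{1-\tilde B}{1-\tilde A}$, then shows that the limiting ratio minus $\frac{1-\tilde B}{1-\tilde A}$ equals $\frac{(\tilde A-\tilde B)(\tilde Be^\alpha+\delta-\tilde A)+\delta(1-\tilde A)(e^\alpha-1)}{(1+(\tilde A-\tilde B-1)e^\alpha-\delta)(1-\tilde A)}\le 0$, using $\tilde B\le\tilde A\le e^\alpha\tilde B+\delta$. You instead squeeze $\beta(\varepsilon_0)$ between the calibration identity and the upper-bound property of the $(\alpha,\gamma,\delta)$ estimate. This gives two gains:
\begin{itemize}
\item You get the stronger pointwise fact $\beta(\varepsilon_0)\le\gamma$ for all small $\varepsilon_0$, not just in the limit.
\item You avoid the paper's step $e^\gamma\ge\frac{1-\tilde B}{1-\tilde A}$. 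That step drops the $\hat\delta$ slack and fails when an index outside $\{i:\overline{G}_i/\overline{G'}_i>1\}$ has $\overline{G}_i=0<\overline{G'}_i$. For $\overline G=(0.4,0.4,0.2,0)$ and $\overline{G'}=(0.2,0.2,0.3,0.3)$ one gets $\frac{1-\tilde B}{1-\tilde A}=3>1.5=e^\gamma$, although the limiting ratio is $1.2$.
\end{itemize}
The cost is that your argument rests on Theorem~\ref{thm:alpha_beta_delta_apl} and Corollary~\ref{corollary:apl_monotonially_increases_with_beta} rather than being self-contained algebra.

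Three clean-ups are worth making.
\begin{itemize}
\item Your ``main obstacle'' is vacuous. Since $\delta(e^{\varepsilon_0}-1)\to 0<e^\alpha-1$, you are in the ``otherwise'' branch for all small $\varepsilon_0$ once $\alpha>0$.
\item The estimate is a linear-fractional map of $x=e^{\beta}$, so it is monotone only on each side of its pole, which lies at some $x_p\le 1$. The point $e^\gamma\ge 1$ lies to the right of the pole, where the map stays below its horizontal asymptote. Hence $e^{l}\le$ (the $\gamma$-estimate) forces $e^\beta$ onto that same branch. Strict monotonicity there, valid for $\alpha>0$ and $\delta<1$, then gives $\beta\le\gamma$.
\item In your fallback, the limiting denominator is negative, not positive. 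Write $X=e^\alpha-1$, $Y=e^\gamma-1$ and $T=\tilde A-\tilde B$. Cross-multiplying then reduces to $XY(1-T)\ge(T-\delta)(X+Y)$. This follows from your two bounds rewritten as $T-\delta\le\min\{X\tilde B,\,Y(1-\tilde A)\}$. With that sign fixed, the fallback is a complete direct proof that remains valid with sparsity.
\end{itemize}
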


The proof of Corollary~\ref{corollary:beta_upper_bound} is available in Appendix~\ref{proof:corollary:beta_upper_bound}.

\begin{corollary}\label{corollary:alpha_beta_delta_apl}
Let $X_k$ and $\hat X$ be discrete, correlated attributes. Let $(\alpha,\beta,\delta)$ be characterization parameters for $L_{\hat{X}\rightarrow X_k}$. Then the CPL characterized by $(\alpha,\beta,\delta)$ can be computes as
\begin{equation}\label{eqn:db_apl_cal_delta_available}
    \operatorname{CPL}=
    \begin{cases} 
     \ln \bigl(1+\delta(e^\varepsilon-1)\bigr)
    \text{; if }  e^\alpha-1-\delta(e^\varepsilon-1) < 0, \\
     \ln \left(\frac{e^{\alpha} + \delta - e^{\alpha}\delta - \delta e^{\varepsilon}
      + e^{\alpha+\varepsilon}(-1 + e^{\beta} + \delta)-1} {\delta(e^{\varepsilon}-1) - e^{\varepsilon}
      + e^{\beta}(-1 + e^{\alpha} + \delta + e^{\varepsilon} - \delta e^{\varepsilon})}\right)\text{; otherwise.}
    \end{cases}
\end{equation}
\end{corollary}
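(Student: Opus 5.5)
The plan is to obtain Corollary~\ref{corollary:alpha_beta_delta_apl} as a direct instantiation of Theorem~\ref{thm:alpha_beta_delta_apl}, with $\gamma$ replaced by the calibrated parameter $\beta$. The point to check is that the proof of Theorem~\ref{thm:alpha_beta_delta_apl} never uses the specific definition $\gamma = \ln\max_{i:\overline{G_i}\neq 0}\overline{G'_i}/\overline{G_i}$. It only uses that $\gamma>0$ acts as the slope parameter in the second affine constraint of~\eqref{eqn:delta_constraints}, namely $G'_i \le e^{\gamma} G_i + \hat{\delta}_i$, together with the first constraint with slope $e^{\alpha}$ and the aggregate slack bounded by $\delta$. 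I will therefore restate the feasible set of Theorem~\ref{thm:alpha_beta_delta_apl} abstractly. It consists of pairs of probability vectors $(G,G')$ satisfying $G_i\le e^{\alpha}G'_i+\tilde\delta_i$ and $G'_i\le e^{\beta}G_i+\hat\delta_i$, with $\sum\tilde\delta_i,\sum\hat\delta_i\le\delta$. I then observe that the maximization of~\eqref{eqn:optimization_problem_over_C} over this set depends only on the triple $(\alpha,\beta,\delta)$.

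The steps, in order, are as follows.

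\emph{First}, check that the calibrated $\beta$ from Theorem~\ref{thm:beta} is a legitimate slope parameter, i.e.\ $\beta>0$, so that the constraint set is of the same type. Corollary~\ref{corollary:beta_upper_bound} gives $\beta\le\gamma$, and the calibration places $e^{-\beta}$ between $e^{-\gamma}$ and $1$, which should give positivity.

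\emph{Second}, rerun the aggregation argument of Theorem~\ref{thm:alpha_beta_delta_apl} with $\beta$ in place of $\gamma$. For fixed $S$, set $g=\sum_{i\in S}G_i$ and $g'=\sum_{i\in S}G'_i$. The objective becomes $\frac{1+(e^\varepsilon-1)g}{1+(e^\varepsilon-1)g'}$. The two aggregated constraints become $g\le e^{\alpha}g'+\tilde\delta$ and $1-g'\le e^{\beta}(1-g)+\hat\delta$.

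\emph{Third}, maximize this one-dimensional linear-fractional problem at the intersection of the two lines, or at the saturation point $g=1$. The saturation case occurs when $e^\alpha-1-\delta(e^\varepsilon-1)<0$ and yields $\ln(1+\delta(e^\varepsilon-1))$. Substituting $\tilde\delta=\hat\delta=\delta$ then gives the second branch of~\eqref{eqn:db_apl_cal_delta_available} verbatim, with $e^{\gamma}$ replaced by $e^{\beta}$.

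I expect the main obstacle to be interpretive rather than computational. After calibration, the original $(\overline{G},\overline{G'})$ need not satisfy $\overline{G'_i}\le e^{\beta}\overline{G_i}+\hat\delta_i$, so the result cannot be read as ``the CPL of the actual distribution''. I will state explicitly that the corollary computes the worst-case CPL over the surrogate family parameterized by $(\alpha,\beta,\delta)$. This is exactly the quantity Theorem~\ref{thm:beta} compares with $\operatorname{CPL}^\star$, and its validity as an upper bound for $\varepsilon\ge\varepsilon_0$ is supplied by that theorem, not by this corollary.

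The remaining check is that the case condition is unchanged: it involves only $\alpha$, $\delta$ and $\varepsilon$, not $\gamma$, so it carries over untouched.
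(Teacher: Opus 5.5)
Your route is the paper's. Its proof of this corollary is just the substitution of $\beta$ for $\gamma$ in Theorem~\ref{thm:alpha_beta_delta_apl}, with the upper-bound guarantee deferred to Theorem~\ref{thm:beta}. Your remark that the formula is the worst case over the surrogate $(\alpha,\beta,\delta)$-family, not a statement about the actual $(\overline{G},\overline{G'})$, is the right reading of that. Two of the details you add are wrong as stated, though.

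\textbf{The first branch.} The value $\ln(1+\delta(e^\varepsilon-1))$ does not come from the saturation point $g=1$.
\begin{itemize}
\item It is attained at the endpoint $g'=0$, $g=\delta$ of the line $g=e^\alpha g'+\delta$. This endpoint is optimal precisely when the objective along that line decreases in $g'$, i.e.\ when $e^\alpha-1-\delta(e^\varepsilon-1)<0$ (Lemma~\ref{lemma:monotone_fn}).
\item With $\tilde\delta=\hat\delta=\delta$, saturation is never reached. The intersection abscissa $t=\frac{(e^\beta-1)(1-\delta)}{e^{\alpha+\beta}-1}$ satisfies $\frac{1-\delta}{e^\alpha}-t=\frac{(e^\alpha-1)(1-\delta)}{e^\alpha(e^{\alpha+\beta}-1)}\ge 0$.
\item The saturation value itself (Case~3 in the proof of Theorem~\ref{thm:alpha_beta_delta_apl}) is $\frac{e^{\alpha+\varepsilon}}{e^\alpha+(1-\delta)(e^\varepsilon-1)}$, not $1+\delta(e^\varepsilon-1)$.
\item Setting both slacks equal to $\delta$ in your sum-bounded family also needs the monotonicity of $V^\star$ in the slacks (Corollary~\ref{corollary:monotone_V_over_deltas}).
\end{itemize}

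\textbf{Positivity of $\beta$.} You are right that $\beta\ge0$ is needed. The point $(g,g')=(\delta,0)$ satisfies $1-g'\le e^\beta(1-g)+\delta$ if and only if $(1-\delta)(e^\beta-1)\ge0$. This is also equivalent to $t\ge0$, which is how the proof of Theorem~\ref{thm:alpha_beta_delta_apl} discards Case~1.

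Your justification does not deliver it, however. The bound $\beta\le\gamma$ constrains $\beta$ from the wrong side. ``$e^{-\beta}$ lies between $e^{-\gamma}$ and $1$'' is only the paper's informal description.

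In the $\varepsilon_0\to0$ limit that defines the metric, there is a direct argument.
\begin{itemize}
\item Write $e^\beta=N/D$ as in Corollary~\ref{corollary:beta_upper_bound} and put $\phi=\tilde A-\tilde B$. Then $N-D=(e^\alpha-1)(\delta-\phi)$.
\item $D=(1-\delta)-e^\alpha(1-\phi)\le0$. Indeed, $\tilde A\le e^\alpha\tilde B+\delta$ gives $e^\alpha(1-\phi)\ge(1-\delta)+(e^\alpha-1)(1-\tilde A)\ge 1-\delta$.
\item $\phi\ge\delta$, since the total-variation gap $\tilde A-\tilde B$ dominates the mass either vector places on the zeros of the other.
\end{itemize}
Hence $N\le D\le0$, so $e^\beta\ge1$ whenever $D\neq0$.
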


The proof of Corollary~\ref{corollary:alpha_beta_delta_apl} is available in Appendix~\ref{proof:corollary:alpha_beta_delta_apl}.

\begin{corollary}\label{corollary:apl_monotonially_increases_with_beta}
    Correlation-induced privacy leakage monotonically increases with $\gamma$.
\end{corollary}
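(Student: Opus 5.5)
The plan is to treat the claim as a statement about the closed-form estimator of Theorem~\ref{thm:alpha_beta_delta_apl}. We regard the CPL as a function of $\gamma$ with $\alpha,\delta,\varepsilon$ held fixed, and show it is nondecreasing. The case split in \eqref{eqn:apl_cal_delta_available} settles the first branch at once. When $e^\alpha-1-\delta(e^\varepsilon-1)<0$ the value $\ln(1+\delta(e^\varepsilon-1))$ does not involve $\gamma$, so it is constant and hence weakly monotone. Moreover, the branch condition itself does not involve $\gamma$, so varying $\gamma$ never moves us between branches. It therefore suffices to study the second branch.

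In the second branch, write $t=e^\gamma$, which is increasing in $\gamma$. The argument of the logarithm has the form
\[
\frac{N(t)}{D(t)}=\frac{c_1+c_2 t}{c_3+c_4 t},
\]
where
\[
c_2=e^{\alpha+\varepsilon}, \qquad c_1=e^\alpha+\delta-e^\alpha\delta-\delta e^\varepsilon+e^{\alpha+\varepsilon}(\delta-1)-1,
\]
\[
c_4=e^\alpha-1+\delta+e^\varepsilon-\delta e^\varepsilon, \qquad c_3=\delta(e^\varepsilon-1)-e^\varepsilon.
\]
A linear-fractional function of $t$ is monotone on any interval where $D\neq 0$, and its derivative has the sign of $c_2c_3-c_1c_4$. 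So the core step is to expand this determinant and show $c_2c_3-c_1c_4\ge 0$, using $\alpha>0$, $\varepsilon\ge 0$, $\delta\in[0,1)$ and the branch condition $e^\alpha-1\ge\delta(e^\varepsilon-1)$.

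As a sanity check we first do $\delta=0$, where the formula reduces to the one in Theorem~\ref{thm:ratios}. Then $c_1=e^\alpha-1-e^{\alpha+\varepsilon}$, $c_2=e^{\alpha+\varepsilon}$, $c_3=-e^\varepsilon$ and $c_4=e^\alpha-1+e^\varepsilon$. This gives
\[
c_2c_3-c_1c_4=-e^{\alpha+2\varepsilon}-(e^\alpha-1)^2-(e^\alpha-1)e^\varepsilon+e^{\alpha+\varepsilon}(e^\alpha-1)+e^{2\alpha+\varepsilon}\cdot 0 + e^{\alpha+2\varepsilon}.
\]
After cancellation this should reduce to $(e^\alpha-1)(e^\varepsilon-1)(\cdot)\ge 0$. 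We would then redo the expansion with general $\delta$ and group terms into products of the nonnegative quantities $(e^\alpha-1)$, $(e^\varepsilon-1)$, $(1-\delta)$ and $\bigl(e^\alpha-1-\delta(e^\varepsilon-1)\bigr)$. If a direct factorization is messy, the fallback is to differentiate $\ln N-\ln D$ in $\gamma$ directly and show $N'D-ND'\ge 0$, where $N'=c_2t$ and $D'=c_4t$. This is the same determinant times $t$.

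We also need $D>0$, so that the ratio is a genuine positive quantity and no pole is crossed as $\gamma$ grows. We have $D=c_3+c_4t$, and $c_4\ge \delta(e^\varepsilon-1)+e^\varepsilon-\delta e^\varepsilon+\dots$. Using the branch condition, $c_4>0$. Hence $D$ is increasing in $t$, and at $t=1$ it equals $e^\alpha-1\ge 0$, so $D>0$ for $\gamma>0$.

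\textbf{Expected obstacle.} The main difficulty is the sign of the determinant for general $\delta$: the expansion produces many terms, and finding the factorization that exposes nonnegativity is delicate. As an alternative route around the algebra, we could invoke the optimization characterization in the proof of Theorem~\ref{thm:alpha_beta_delta_apl}. Increasing $\gamma$ relaxes the constraint $\overline{G'_i}\le e^\gamma\overline{G_i}+\hat\delta_i$, which enlarges the feasible set of the maximization. The optimal value, and hence the CPL, is therefore nondecreasing in $\gamma$. This feasible-set monotonicity argument is clean and is the one we would present if the direct computation resists.
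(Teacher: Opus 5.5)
Your main route is the paper's: you split on the branch condition, where the CPL is $\gamma$-free, and in the second branch you show that the linear-fractional determinant in $e^\gamma$ equals $(1-\delta)(e^\alpha-1)(e^\varepsilon-1)\bigl(e^\alpha-1-\delta(e^\varepsilon-1)\bigr)\ge 0$, which is exactly the product of the factors you anticipated and the paper's factorization (writing $u=1-\delta$ gives $c_1=-e^\varepsilon-u(e^\alpha-1)(e^\varepsilon-1)$, $c_3=-1-u(e^\varepsilon-1)$, $c_4=e^\alpha+u(e^\varepsilon-1)$, which makes the expansion short). Your extra checks that varying $\gamma$ never crosses branches and that $D=e^\alpha-1+c_4(e^\gamma-1)>0$ (so no pole is crossed) are points the paper leaves implicit.
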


The proof of Corollary~\ref{corollary:apl_monotonially_increases_with_beta} is available in Appendix~\ref{proof:corollary:apl_monotonially_increases_with_beta}.

Next, we explore the methods to handle inaccuracies in the prior distribution in Section~\ref{section:Handling Inaccuracies in Prior Distribution}.

\subsection{Tuning \texorpdfstring{$(\alpha,\beta,\delta)$}{(alpha, beta, delta)} for Inaccuracies in Prior Distribution}\label{section:Handling Inaccuracies in Prior Distribution}

So far, we have focused on approximating the probability distribution between two correlated attributes using a reduced number of parameters (i.e., $\alpha$, $\beta$, $\delta$). In this section, we incorporate distributional uncertainty into the metric ($\textbf{OB4}$).

Estimated CPL of \acronym\ is implicitly robust to the distributional uncertainties to some extent, as it provides a tight upper bound for $\operatorname{CPL}^\star$. However, it is not controllable precisely. Therefore, next, we improve the $(\alpha,\beta, \delta)$ method to incorporate distributional uncertainties.

Let $P_A (\in \mathbb{R}^{a\times b})$ and $P_K (\in \mathbb{R}^{a\times b})$ be the actual and known conditional distribution between $X_k$ and $\hat{X}$ (i.e., $P(\hat{X}|X_k)$) respectively. 
Let $\Delta (\in \mathbb{R}^{a\times b})$ denote the entry-wise uncertainty matrix between $P_A$ and $P_K$ (i.e., $\Delta \geq |P_A-P_K|$, where the inequality is interpreted entry-wise).
Then, we can compute the updated $(\hat{\alpha},\hat{\beta},\hat{\delta})$ such that the estimated CPL of $(\hat{\alpha},\hat{\beta},\hat{\delta})$ provides an upper bound for $\operatorname{CPL}^\star$ as stated in Theorem~\ref{thm:upperbound_dist_bias}. 
Here, $\operatorname{CPL}^\star$ denotes the optimal CPL of $P_A$.

\begin{theorem}\label{thm:upperbound_dist_bias}
    Let $\Delta$ denote the entry-wise uncertainty matrix satisfying $\Delta \geq |P_A-P_K|$, where the inequality is interpreted entry-wise.
    Then, we can compute the updated $\hat{\alpha},\hat{\beta},\hat{\delta}$ such that CPL characterized by $(\hat{\alpha}, \hat{\beta},\hat{\delta})$ provides an upper bound for the optimal CPL of $P_A$ (i.e., $\operatorname{CPL}^\star$) as
\begin{align*}
    \hat{\alpha} &= \max_{\hat{x}\in \mathcal{\hat{X}},\ x,x'\in\mathcal{X}_k,\ p(\hat{x}|x')-\Delta_{\hat{x},x'}> 0} \left\{\frac{\min\{1, p(\hat{x}|x) + \Delta_{\hat{x},x}\}}{ p(\hat{x}|x') - \Delta_{\hat{x},x'}}\right\}, \\
    \hat{\beta} &= \ln\frac{1 + \hat{\phi} + e^{\hat{\alpha}}(-1 + \hat{\delta}) - 2\hat{\delta}}{1 + (-1 + \hat{\phi}) e^{\hat{\alpha}} - \hat{\delta}}, \\
    \hat{\delta} &= \max_{x,x' \in \mathcal{X}_k} \sum_{\hat{x} \in \mathcal{\hat{X}},\ p(\hat{x}|x')-\Delta_{\hat{x},x'} \leq 0} p(\hat{x}|x)+ \Delta_{\hat{x},x}.
\end{align*}
Here, $\phi = \max_{S,G,G'} \bigl(\sum_{i\in S} G_i-\sum_{i\in S}G'_i\bigl)$, subject to $S \subseteq [b]$,\ $\ p(\hat{x}_i|x)-\Delta_{\hat{x}_i,x}\leq G_i \leq p(\hat{x}_i|x)+\Delta_{\hat{x}_i,x},\ p(\hat{x}_i|x')-\Delta_{\hat{x}_i,x'}\leq G'_i \leq p(\hat{x}_i|x')+\Delta_{\hat{x}_i,x'},\ \Delta_{\hat{x}_i,x},\Delta_{\hat{x}_i,x'}\in \Delta,\ x,x' \in \mathcal{X}_k, \hat{\mathcal{X}}=\{\hat{x}_1,\dots,\hat{x}_b\},i\in [b]$.
\end{theorem}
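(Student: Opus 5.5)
The plan is to reduce the uncertain problem \eqref{eqn:optimization_problem_over_C_G_G_prime} to the fixed-distribution setting of Corollary~\ref{corollary:alpha_beta_delta_apl} and Corollary~\ref{corollary:beta_upper_bound}. The reduction rests on one observation: every admissible $P_A$ (any $P_A$ with $|P_A-P_K|\leq\Delta$ entry-wise) produces its own triple $(\alpha_A,\beta_A,\delta_A)$ at $\varepsilon_0\to 0$. By Corollaries~\ref{corollary:beta_upper_bound} and~\ref{corollary:alpha_beta_delta_apl} together with Theorem~\ref{thm:beta}, that triple already upper-bounds $\operatorname{CPL}^\star(P_A)$ for every $\varepsilon>0$. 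It therefore suffices to show two things: that $(\hat\alpha,\hat\beta,\hat\delta)$ dominates $(\alpha_A,\beta_A,\delta_A)$ uniformly over the uncertainty set, and that the estimator of Corollary~\ref{corollary:alpha_beta_delta_apl} is monotone nondecreasing in each of its three arguments.

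I will establish the three dominations separately, for a fixed pair $x,x'$.

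\emph{Step 1: $\hat\alpha$.} For every $i$ with $G'_i>0$, the ratio satisfies $G_i/G'_i\leq \min\{1,h_i+\Delta_{\hat x_i,x}\}/(h'_i-\Delta_{\hat x_i,x'})$ whenever the denominator is positive. This gives $e^{\alpha_A}\leq \hat\alpha$, reading the displayed $\hat\alpha$ as the ratio itself, i.e.\ $e^{\hat\alpha}$.

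\emph{Step 2: $\hat\delta$.} The indices where the lower bound $h'_i-\Delta_{\hat x_i,x'}\leq 0$ are exactly those where $G'_i$ may vanish. I will enlarge the sparse mass by moving all such indices into the $\delta$ bucket. The key check is that the sparsity constraint \eqref{eqn:delta_constraints} still holds with $\tilde\delta_i=\min\{1,h_i+\Delta_{\hat x_i,x}\}$ on those indices, and with the ratio bound of Step 1 elsewhere. Using this larger slack set preserves validity, because the proof of Theorem~\ref{thm:alpha_beta_delta_apl} only uses these inequalities and never uses equality. The same argument applies to the $\hat\delta_i$ terms by symmetry, and taking the maximum over $x,x'$ gives $\delta_A\leq\hat\delta$ in the relaxed sense.

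\emph{Step 3: $\hat\beta$.} The limit formula of Corollary~\ref{corollary:beta_upper_bound} depends on the data only through $\tilde A-\tilde B$, $\alpha$ and $\delta$. By definition, $\tilde A-\tilde B=\sum_{i\in S}(G_i-G'_i)$ for $S=\{i: G_i>G'_i\}$, and this is at most $\hat\phi$ since $\phi$ maximizes over all $S$ and all feasible $G,G'$.

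Next I will show monotonicity of the $\beta$-formula $\beta(\phi,\alpha,\delta)$ in $\phi$, and monotonicity of the CPL expression \eqref{eqn:db_apl_cal_delta_available} in $\alpha$, $\beta$ and $\delta$. For $\beta$ this follows Corollary~\ref{corollary:apl_monotonially_increases_with_beta}. For $\alpha$ and $\delta$ I will differentiate the closed form directly, handling the two branches and checking continuity at the switching point $e^\alpha-1=\delta(e^\varepsilon-1)$.

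I expect the main obstacle to be the interaction between the parameters. $\beta$ is not a free parameter: it is computed from $\alpha$ and $\delta$, so increasing $\alpha$ or $\delta$ changes $\beta$ as well. I therefore plan to argue monotonicity of the composite map $(\phi,\alpha,\delta)\mapsto \operatorname{CPL}(\alpha,\beta(\phi,\alpha,\delta),\delta)$ rather than coordinate-wise in $\beta$. My first attempt will be to reparametrize by the aggregate $g'$ used in the proof of Theorem~\ref{thm:beta}. In that parametrization, the calibrated CPL is a linear-fractional function of two-point instances supported on the ratios $e^{\alpha}$ and $e^{-\beta}$ plus the $\delta$ mass. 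Enlarging $\alpha$, $\phi$ or $\delta$ then corresponds to enlarging the feasible set of such two-point instances, so the supremum can only increase. If that fails, I will fall back on direct sign analysis of the partial derivatives on the domain $\alpha\geq\beta$, justified by Corollary~\ref{corollary:alpha_greater_than_beta}.
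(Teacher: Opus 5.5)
Your plan has the same skeleton as the paper's proof: bound $\alpha$, $\delta$ and $\phi=\tilde A-\tilde B$ over the uncertainty box, then invoke monotonicity of the estimator of Corollary~\ref{corollary:alpha_beta_delta_apl}. Your monotonicity step, however, is genuinely different, and the difference matters.

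The paper uses Corollary~\ref{Corollary:composition_apl}, i.e.\ coordinatewise monotonicity in $(\alpha,\beta,\delta)$ with $\overline{\beta}=\max\{\beta_1,\beta_2\}$. It then plugs $(\hat\phi,\hat\alpha,\hat\delta)$ into the $\beta$-formula and asserts that $\hat\alpha$ is ``the maximum of all possible $\alpha$''. Neither step holds as written:
\begin{itemize}
\item Lemma~\ref{lemma:monotone_fn} applied to $e^{\beta}$ as a function of $e^{\alpha}$ gives determinant $-(\phi-\delta)^2\le 0$. So raising $\alpha$ to $\hat\alpha$ lowers $\beta$, and $\hat\beta\ge\beta_A$ can fail.
\item An entry with $p(\hat x|x')-\Delta_{\hat x,x'}\le 0<p_A(\hat x|x')$ makes $\alpha_A$ arbitrarily large.
\end{itemize}
Your Step~2 (absorbing such entries into the slack) and your composite-map argument close exactly these two holes. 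The paper's route is shorter only because it skips them. Your reading of the displayed $\hat\alpha$ as $e^{\hat\alpha}$ also agrees with line~12 of Algorithm~\ref{algorithm:parameter_computation}.

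The $g'$ reparametrization works and gives a closed form. Letting $\varepsilon_0\to0$ in Proposition~\ref{prop:g_prime} yields $g'=(\phi-\delta)/(e^{\alpha}-1)$. With $c=e^{\varepsilon}-1$, the estimate on the branch $e^{\alpha}-1\ge\delta c$ is
\[
1+\frac{(e^{\alpha}-1)\,\phi\,c}{(e^{\alpha}-1)+(\phi-\delta)\,c},
\]
which meets the other branch $1+\delta c$ continuously. It is nondecreasing in $\phi$ and in $\delta$, but nondecreasing in $\alpha$ only when $\phi\ge\delta$ (i.e.\ $g'\ge 0$). The domain $\alpha\ge\beta$ in your fallback is not the relevant condition.

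Since your relaxed $\delta$ for a given $P_A$ can exceed $\phi_A$, the order of moves matters. Raise $\phi$ and $\delta$ to $\hat\phi,\hat\delta$ first and $\alpha$ last, using $\hat\phi\ge\hat\delta$. To see this inequality, take $S=\{i:p(\hat x_i|x')-\Delta_{\hat x_i,x'}\le 0\}$ in the box-constrained definition of $\phi$, with $G$ at its upper and $G'$ at its lower bounds.

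Alternatively, you can bypass Theorem~\ref{thm:beta} entirely. This also spares you re-checking that theorem for your non-tight relaxed triple, since your Step~2 justification only covers Theorem~\ref{thm:alpha_beta_delta_apl}. For every admissible $P_A$, every pair $x,x'$ and every $S$, you have $A_S\le e^{\hat\alpha}B_S+\hat\delta$ and $A_S-B_S\le\hat\phi$. These two inequalities alone show that $\frac{1+A_Sc}{1+B_Sc}$ is at most the estimate of Corollary~\ref{corollary:alpha_beta_delta_apl} at $(\hat\alpha,\hat\beta,\hat\delta)$, on either branch.
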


The proof of Theorem~\ref{thm:upperbound_dist_bias} is available in Appendix~\ref{proof:thm:upperbound_dist_bias}.

\section{\name}\label{section:Dependency Budget}

In the previous section, we described the design principles of a novel metric for quantifying dependencies between attributes.
In this section, we formally define \emph{\name{}} and present the associated algorithms.

\subsection{Definition of \name}\label{section:calibrated Dependency Budget}

The proposed metric quantifies the dependencies between attributes and indicates the privacy leakage arising from their interdependencies. 
To reflect this, we name it as ``\textbf{\name}''\ (\acronym). Next, we formally defined \acronym\ in Definition~\ref{defn:calibrated_dependency_budget}.

\begin{definition}[\name\ ($\acronym$)] \label{defn:calibrated_dependency_budget} 
Let $X_k$ and $\hat{X}$ be two dependent random variables with alphabets $\mathcal{X}_k$ and $\mathcal{\hat{X}}$, respectively. Then, $(\alpha, \beta, \delta)$-\name\ from $\hat{X}$ to $X_k$  ($(\alpha, \beta, \delta)-\operatorname{\acronym_{ \hat{X}\rightarrow X_k}}$) is defined as
\begin{align}
    \alpha &= \max_{\hat{x}\in \mathcal{\hat{X}},\ x,x'\in\mathcal{X}_k,\ p(\hat{x}|x')-\Delta_{\hat{x},x'}> 0} \left\{\frac{\min\{1, p(\hat{x}|x) + \Delta_{\hat{x},x}\}}{ p(\hat{x}|x') - \Delta_{\hat{x},x'}}\right\},\label{eqn:db_alpha} \\
    \beta &= \ln\frac{1 + \phi + e^{\alpha}(-1 + \delta) - 2\delta}{1 + (-1 + \phi) e^{\alpha} - \delta}, \label{eqn:db_beta} \\
    \delta &= \max_{x,x' \in \mathcal{X}_k} \sum_{\hat{x} \in \mathcal{\hat{X}},\ p(\hat{x}|x')-\Delta_{\hat{x},x'} \leq 0} p(\hat{x}|x)+ \Delta_{\hat{x},x}.\label{eqn:db_delta} 
\end{align}
Here, $\Delta$ denotes the distribution uncertainty, and $\phi = \max_{S,G,G'} \bigl(\sum_{i\in S} G_i-\sum_{i\in S}G'_i\bigl)$, subject to $S \subseteq [b]$,\ $\ p(\hat{x}_i|x)-\Delta_{\hat{x}_i,x}\leq G_i \leq p(\hat{x}_i|x)+\Delta_{\hat{x}_i,x},\ p(\hat{x}_i|x')-\Delta_{\hat{x}_i,x'}\leq G'_i \leq p(\hat{x}_i|x')+\Delta_{\hat{x}_i,x'},\ \Delta_{\hat{x}_i,x},\Delta_{\hat{x}_i,x'}\in \Delta,\ x,x' \in \mathcal{X}_k, \hat{\mathcal{X}}=\{\hat{x}_1,\dots,\hat{x}_b\},i\in [b]$.

\end{definition}

\subsection{Computing \texorpdfstring{$\alpha,\beta,\delta$}{alpha, beta, delta} Parameters}

Algorithm~\ref{algorithm:parameter_computation} computes $(\alpha, \beta,\delta)-\operatorname{\acronym_{\hat{X}\rightarrow X_k}}$ for two dependent attributes $\Hat{X}$ and $X_k$. 
Here, we traverse all possible $\overline{G}$ and $\overline{G'}$ values over the $\mathcal{X}_k$ alphabet as in line~3. 
Next, we compute the possible lower (i.e., $L, L'$) and upper (i.e., $U, U'$) values of the actual distribution based on the known distribution $P_K$ and uncertainty $\Delta$ in lines~4-5. 
Next, we traverse through the alphabet of $\hat{X}$, $\hat{x}\in \hat{\mathcal{X}}$ using for-loop at line~8. 
Next, we assign $P_K(\hat{X}=\hat{x}|X_k=x)$ and $P_K(\hat{X}=\hat{x}|X_k=x')$ into $g$ and $g'$, respectively at lines~9-10. 
Then, we can compute the maximum value for $\alpha$ as at line~12 and the maximum value for $\delta$ as at line~16. To compute $\beta$ for selected $\overline{G}, \overline{G'}$, we solve the relaxed linear program (LP) defined in line~18, which provides a polynomial-time upper bound to the exact MILP formulation. Then we compute the maximum $\phi$ at line~20, and $\beta$ is computed at line~22. Suppose alphabet sizes of $X_k$ and $\hat{X}$ are $|\mathcal{X}_k|=a$ and $|\hat{\mathcal{X}}|=b$, respectively. 
If we use the interior point method to solve the LP, then we have $b$ variables and $O(b)$ constraints. 
Therefore, the time complexity is approximately $O(b^{3.5})$ for fixed numerical precision~\cite{Boyd_convex_optimization}. 
Then, Algorithm~\ref{algorithm:parameter_computation} has time complexity of $O(a^2b^{3.5})$ as it traverses through all possible pairs $x,x'\in \mathcal{X}_k$.

\begin{algorithm}[h]
\caption{Compute ($\alpha, \beta,\delta$) parameters of $\operatorname{\acronym_{\hat{X}\rightarrow X_k}}$.}\label{algorithm:parameter_computation}
\begin{algorithmic}[1]
    \Statex \textbf{Input:}
    \Statex \hspace{1em} $P_K(\Hat{X}|X_k)$ \Comment{\textit{Known distribution.}}
    \Statex \hspace{1em} $\Delta$ \Comment{\textit{Distributional uncertainty.}} 
    \State $\alpha \gets 0, \quad \delta \gets 0,\quad \phi \gets 0$ \Comment{\textit{Initialisation.}}
    \ForAll{$x,x' \in \mathcal{X}_k$}\Comment{\textit{All possible values pairs in $\mathcal{X}_k$.}}
        \State $\overline{G} \gets P_K(\cdot|X_k=x),\ \overline{G'} \gets P_K(\cdot|X_k=x')$
        \State $L \gets \overline{G} - \Delta_{x},\ U \gets \overline{G} + \Delta_{x}$ \Comment{\textit{Here, $\Delta_x$ is uncertainties of $P_K(\cdot|X_k=x)$.}}
        \State $L' \gets \overline{G'} - \Delta_{x'},\ U' \gets \overline{G'} + \Delta_{x'}$ \Comment{\textit{Here, $\Delta_{x'}$ is uncertainties of $P_K(\cdot|X_k=x')$.}}
        \State $M^+_i=\max(0,U_i-L'_i),\ M^-_i =\max(0,U'_i-L_i),\ \forall i \in [b]$
        \State $\overline{\delta} \gets 0$
        \ForAll{$\hat{x} \in \hat{\mathcal{X}}$}
            \State $g \gets P_K(\hat{X}=\hat{x}|X_k=x)$
            \State $g' \gets P_K(\hat{X}=\hat{x}|X_k=x')$
            \If{$g'-\Delta_{\hat{x},x'} > 0$} \Comment{\textit{Here, $\Delta_{\hat{x},x'}$ is the uncertainty of $P_K(\hat{X}=\hat{x}|X_k=x)$.}}
            \State $\alpha \gets \max(\alpha, \ln \frac{\min(1, g+\Delta_{\hat{x},x})}{g'-\Delta_{\hat{x},x'}})$ \Comment{\textit{From~\eqref{eqn:db_alpha}.}}
            \Else
                \State $\overline{\delta} \gets \overline{\delta}+g+\Delta_{\hat{x},x}$ \Comment{\textit{From~\eqref{eqn:db_delta}.}}
            \EndIf
            \State $\delta \gets \min(1,\max(\delta, \overline{\delta}))$
                
            \EndFor
            \State \textbf{Linear program:}
\[
        \begin{array}{ll}
\underset{Z,G,G'}{\text{maximize}} & \sum_{z_i \in Z, \ i\in [b]} z_i \\[0.4em]
\text{subject to} 
& 0\leq z_i \leq \frac{M^+_i}{M^+_i + M^-_i}( G_i - G'_i +M^-_i);\ \forall i\in [b],\\
& L \leq G \le U,\quad \mathbf{1}^\top G=1,\\
& L' \leq G' \le U',\quad \mathbf{1}^\top G'=1.
\end{array}\]

\State $\overline{\phi}\gets$Solve the LP with any polynomial-time method (e.g., interior-point)
        \State $\phi \gets \max(\phi,\overline{\phi})$  
    \EndFor
    \State $\beta = \ln\frac{1 + \phi + e^{\alpha}(-1 + \delta) - 2\delta}{1 + (-1 + \phi) e^{\alpha} - \delta}$ \Comment{\textit{From~\eqref{eqn:db_beta}.}}
    \State \Return $\alpha, \beta, \delta$
\end{algorithmic}
\end{algorithm}


\subsection{Estimating CPL Using \name}\label{secion:Calculate Privacy Leakage Using Dependency Budget}

As illustrated in Algorithm~\ref{algo:APL_cal_using_db}, CPL can be estimated in constant time $O(1)$ using $(\alpha,\beta,\delta)-\operatorname{\acronym}_{ \hat{X}\rightarrow X_k}$ and~\eqref{eqn:db_apl_cal_delta_available}.

\begin{algorithm}[bth]
\caption{Compute CPL (i.e., $L_{\Hat{X} \rightarrow X_k}$) using $(\alpha,\beta,\delta)-\operatorname{\acronym}_{\hat{X}\rightarrow X_k}$.}\label{algo:APL_cal_using_db}
\begin{algorithmic}[1]
    \Statex \textbf{Input:}
    \Statex \hspace{1em} $\alpha, \beta, \delta$ \Comment{\textit{Parameters of $\operatorname{\acronym_{\hat{X}\rightarrow X_k}}$.}}
    \Statex \hspace{1em} $\varepsilon$ \Comment{\textit{Privacy budget of the mechanism of $\hat{X}$.}}
    \If{$e^\alpha-1-\delta(e^\varepsilon-1) < 0$} \Comment{\textit{From~\eqref{eqn:db_apl_cal_delta_available}.}}
        \State $L_{\hat{X} \rightarrow X_k} \gets \ln \bigl(1+\delta(e^\varepsilon-1)\bigr)$
    \Else
        \State $L_{\hat{X} \rightarrow X_k} \gets \ln \left(\frac{e^{\alpha} + \delta - e^{\alpha}\delta - \delta e^{\varepsilon}
      + e^{\alpha+\varepsilon}(-1 + e^{\beta} + \delta)-1} {\delta(e^{\varepsilon}-1) - e^{\varepsilon}
      + e^{\beta}(-1 + e^{\alpha} + \delta + e^{\varepsilon} - \delta e^{\varepsilon})}\right)$  
    \EndIf
    \State \Return $L_{\Hat{X} \rightarrow X_k}$
\end{algorithmic}
\end{algorithm}

\subsection{Defining Distribution Uncertainty}\label{section:Defining_Distribution_Uncertainty}

We consider two main approaches to modeling uncertainty in the known distributions: 
(i) a naive approach and (ii) a statistically estimated approach.

\subsubsection{Naive Approach}

If we have only one source of known distribution (i.e., $P_K$), then defining the uncertainty in the known distribution can be tricky. In such scenarios, we adopt a naive method of adding a maximum error margin for $P_K$.

Let $P_K$ be the known distribution and \textbf{\textit{e}} be the percentage error. Then the uncertainty $\Delta$ is given by,
\begin{equation}
    \Delta = eP_K
\end{equation}
In our experiments, we observe that adding $10\%-20\%$ error margin (i.e., \textbf{\textit{e}}) provides robust CPL estimation in most of the time in real data. See Section~\ref{section:Evaluating Distribution Uncertainty Defining Methods} for the experimental results.

\subsubsection{Statistically Estimated Approach}\label{section:Statistically Estimated Uncertainty}

If we have multiple sources to learn the distribution, then uncertainty can be statistically computed.
We have identified that the range provides a privacy robust uncertainty while the standard deviation provides a utility-privacy balanced uncertainty level.

\textbf{Range}
Let $P_{K1}, P_{K2},\dots, P_{KN}$ be the $N$ number of known distributions. Here, $P_{Ki}\in \mathbb{R}^{a\times b}$. Then the uncertainty $\Delta \in \mathbb{R}^{a\times b}$ is given by,
\begin{equation}\label{eqn:range}
    \Delta = \max_{i \in [N]} \{P_{Ki}\}-\min_{i \in [N]} \{P_{Ki}\}.
\end{equation}
Here, $\max_{i \in [N]} \{P_{Ki}\}$ and $\min_{i \in [N]} \{P_{Ki}\}$ compute the maximum and minimum of each entry over all known distributions, respectively.

\textbf{Standard Deviation}
Let $P_{K1}, P_{K2},\dots, P_{KN}$ be the $N$ known distributions. Here, $P_{Ki}\in \mathbb{R}^{a\times b}$. Then the uncertainty $\Delta \in \mathbb{R}^{a\times b}$ is given by,
\begin{equation}\label{eqn:std}
    \Delta = \sqrt{\frac{\sum_{i=1}^N \bigl(P_{Ki}-\overline{P_K}\bigr)^2}{N-1}}.
\end{equation}
Here, $\overline{P_K} \in \mathbb{R}^{a\times b}$ is the mean of $P_{K1}, P_{K2},\dots, P_{KN}$.

\section{Evaluation}\label{section:evaluation}

In this section, we experimentally evaluate and validate the proposed \emph{\name{}} (\acronym) metric.
We first describe the experimental setup in Section~\ref{section:Experimental Setup}.
We then compare \acronym{} with competing metrics in Section~\ref{section:Benchmark}.
Section~\ref{section:Validating_DT_with_CPL} validates the CPL estimated by \acronym{} against $\operatorname{CPL}^{\star}$, while Section~\ref{section:Tolerating Distribution Uncertainties} evaluates its robustness under distributional uncertainty.
Section~\ref{section:Evaluating_dt_different_structural_regimes} evaluates \acronym{} under different structural regimes, demonstrating that it is not limited to a single dependency pattern.
Section~\ref{section:Privacy Budget Calibration} shows how \acronym{} can be used to calibrate privacy budgets in data collection settings that incorporate prior knowledge.
Section~\ref{section:inference_attacks} conducts attribute inference attacks on data collected under privacy budgets calibrated using \acronym{}.
Finally, Section~\ref{section:random_sampling} compares \acronym{} with random sampling.

\subsection{Experimental Setup}\label{section:Experimental Setup}

\subsubsection{Environment}
We implement our experiments using Python 3.10 and run them on an AMD Ryzen Threadripper PRO 3955WX processor, 128GB of RAM, and 4TB of storage.

\subsubsection{Datasets}\label{section:datasets}
We use five real datasets. In these datasets, we have converted continuous numerical attributes into discrete attributes by binning, as this work focuses on privacy leakages related to discrete/categorical data.

\textbf{SPM}~\cite{SPM} dataset is a community survey conducted in the USA on the Supplemental Poverty Measure (SPM), which contains 38 attributes with more than 3M samples.

\textbf{CelebA}~\cite{celebA_attributes} dataset contains 40 binary attributes of human faces with more than 200K samples. 

\textbf{Adult}~\cite{adult_dataset} is an income survey dataset containing 14 features and over 40K samples.

\textbf{CVD}~\cite{sulianova2021cardiovascular} dataset consists of 11 features of 70K cardiovascular patient records.

\textbf{DSS}~\cite{dss} dataset is released by the Department of Social Services (DSS) in Australia, and the selected dataset contains 27 features, with more than 38K samples.

\textbf{Synthetic Datasets.} We generate synthetic datasets with two nonlinearly dependent attributes, $X_k$ and $\hat{X}$, using $x_k = \sin{(2\pi \hat{x})}+n$.
Here, $\hat{x} \in \mathcal{\hat{X}},\quad x_k \in \mathcal{X}_k$, $n\sim\mathbf{N}(0,2)$, where $\mathcal{\hat{X}}$ and $\mathcal{X}_k$ are the alphabets of $\hat{X}$ and $X_k$, and $\mathbf{N}$ is normal distribution.
We create multiple datasets with varying alphabet sizes from 2 to 1000.
For example, Figure~\ref{fig:synthetic_distribution} depicts $P(\hat{X}|X_k)$ for alphabet size $20$.

\begin{figure}[t] 
  \centering
  \includegraphics[trim={0cm 0.15cm .2cm 0.cm},clip,width=0.5\linewidth]{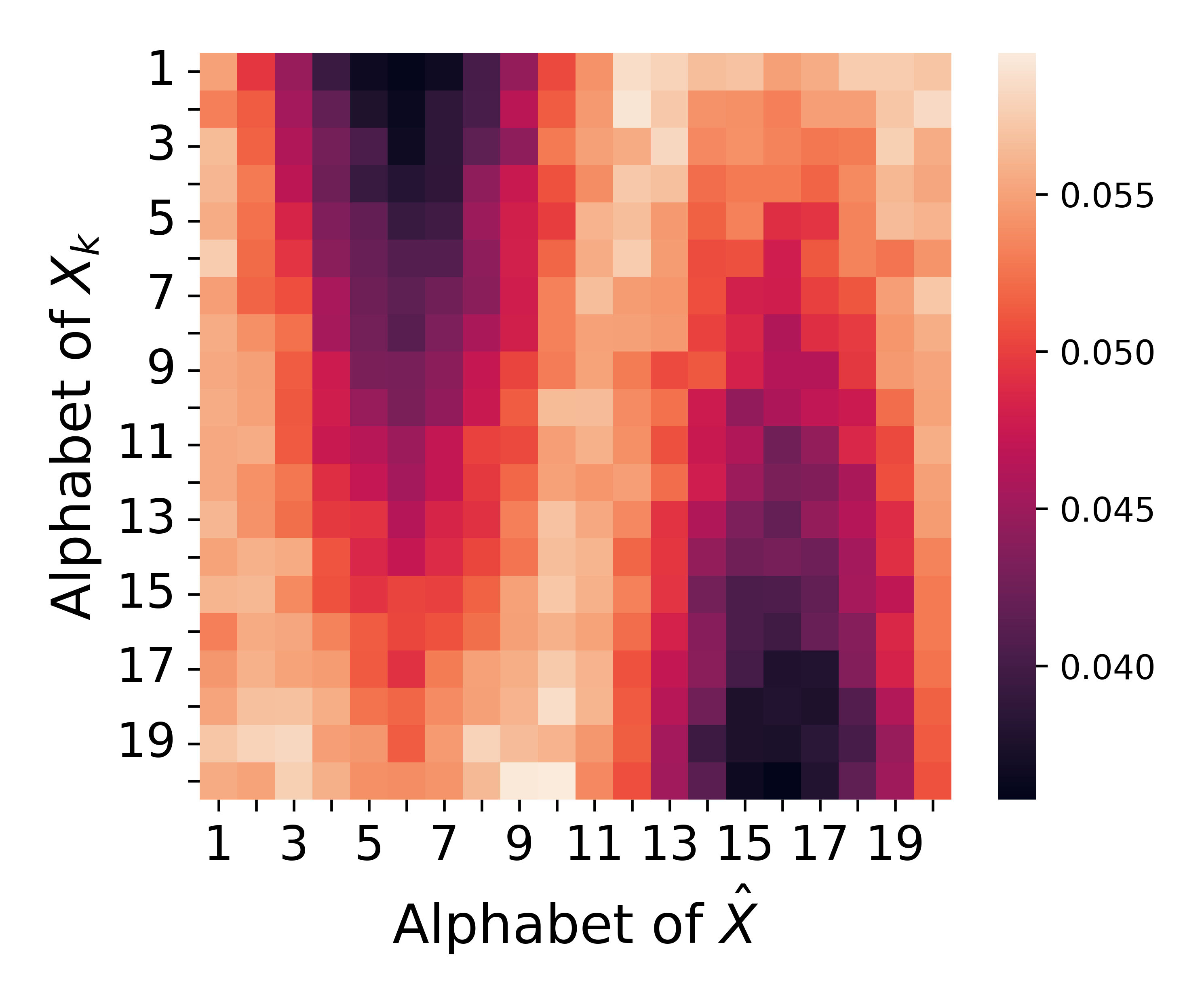}
  \caption{Heatmap of the conditional distribution $P(\hat{X}\mid X_k)$ for the synthetic dataset, where both $X_k$ and $\hat{X}$ have an alphabet size of 20.}\label{fig:synthetic_distribution}
\end{figure}

\begin{table*}[h]
\centering
\caption{Comparison between different CPL computation algorithms (CPL caused by one attribute about another) with our proposed Dependency Triad (\acronym). Here, $a$ and $b$ are alphabet sizes of the two attributes.}
\begin{tabular}{lcccccc}
\hline
Algorithm/ & \multicolumn{2}{c}{Time Complexity} & Space Complexity & Distribution Shift & Supporting & Nature of the \\ 
\cline{2-3}
Metric & Precomputation & Privacy Calculation &  & Incorporability& Mechanisms&Estimation \\\hline
ITS~\cite{data_correlation_location_similar_LDP}  & -  & $O(ab)$ & $O(ab)$ & Not Support & Two-level LDP & $\operatorname{CPL}^\star$ \\
CBP~\cite{secon_Collecting_High-Dimensional_Correlation_LDP}  & - & $O(2^a)$ & $O(ab)$ & Not Support & GRR & N/A\\
HCC-1~\cite{paper_1} & - & $O(a^2b^2)$ & $O(ab)$ & Not support & Simple LDP & Actual\\
HCC-2~\cite{paper_1} & - & $O(a^2b\log b)$ & $O(ab)$ & Not support & Any LDP & $\operatorname{CPL}^\star$\\
LTM~\cite{Quantifying_DP} & $O(a^2b+ab\log b)$ & $O(\log ab)$ & $O(ab)$ & Not Support & Any LDP & $\operatorname{CPL}^\star$\\
MI~\cite{covers_mutual_information}/PCC~\cite{PCC_new} & $O(ab)$ & $O(1)$ & $O(1)$ & Not Support & N/A & Rough Estimation\\
\textbf{\acronym\ (Ours)}  & ${O(a^2b^{3.5})}$  & $\mathbf{O(1)}$ & $\mathbf{O(1)}$ & \textbf{Support} & \textbf{Any LDP} & Upperbound \\
\hline
\end{tabular}
\label{table:comparison_CPL_algos}
\end{table*}

\subsubsection{Competitors}
We have considered five state-of-the-art CPL computation algorithms to compare and validate \acronym: ITS~\cite{data_correlation_location_similar_LDP}, CBP~\cite{secon_Collecting_High-Dimensional_Correlation_LDP}, HCC-1~\cite{paper_1}, HCC-2~\cite{paper_1} and LTM~\cite{Quantifying_DP}.
We use the CPL of HCC-2 and LTM as the reference to evaluate the estimated CPL of \acronym, because they provide the \emph{optimal CPL (i.e., $\operatorname{CPL}^\star$) for generic LDP mechanisms} when the ground-truth distributions are known (See Table~\ref{table:comparison_CPL_algos} for comparison details).
Additionally, we use random sampling (RS)~\cite{ldp_Frequency_Estimation} as a baseline to evaluate DT in high-dimensional statistical learning (i.e., k-marginal).

\subsubsection{Evaluation Metrics}\label{section:Evaluation Metrics}
We mainly use four metrics to evaluate \acronym: (i) TCPL, (ii) CPL estimation error, (iii) NMSE-CPL, and (iv) total variation distance.

\textbf{TCPL} quantifies the total CPL of all the attributes in the dataset as 
\begin{equation}\label{eqn:TCPL}
    \operatorname{TCPL} = \sum_{X_k \in X}\ \sum_{x \in X\setminus \{X_k\}} L_{x \rightarrow X_k}.
\end{equation}

\textbf{CPL Estimation Error} quantifies the algebraic error between estimated CPL ($\tilde{L}_{x \rightarrow X_k}$) and $\operatorname{CPL}^\star$ ($L^\star_{x \rightarrow X_k}$) by, 
\begin{equation}\label{eqn:cpl_estimation_error}
    \text{CPL Estimation Error} =  \tilde{L}_{x \rightarrow X_k} - L^\star_{x \rightarrow X_k}.
\end{equation}
Here, we can observe the sign of the estimation error, which is essential to identify whether the estimation is an upper bound or a lower bound for the $\operatorname{CPL}^\star$.

\textbf{NMSE-CPL} quantifies the normalized mean square error between $\operatorname{CPL}^\star$ and the estimated CPL for all the attributes in the dataset as
\begin{equation}\label{eqn:NMSE}
    \operatorname{NMSE} = \frac{\sum_{X_k \in X} \sum_{x \in X\setminus \{X_k\}} \bigl(\tilde{L}_{x \rightarrow X_k} - L^\star_{x \rightarrow X_k}\bigr)^2}{\sum_{X_k \in X} \sum_{x \in X\setminus \{X_k\}}( L^\star_{x \rightarrow X_k})^2}.
\end{equation}


\textbf{Total Variation Distance (TVD)} quantifies the discrepancy between the estimated distribution and the true distribution by,
\begin{equation}\label{eqn:tvd}
    \operatorname{TVD}(P,\hat{P}) = \frac{1}{2}\sum_{x \in \mathcal{X}} |P(x)-\hat{P}(x)|.
\end{equation}
Here, $P(x)$ denotes the true probability of value $x$, and $\hat{P}(x)$ denotes the estimated probability after reconstruction from perturbed data. A smaller TVD indicates that the estimated distribution is closer to the true distribution.

\subsubsection{Uncertainty Metrics}\label{section:Uncertainty Metrics}
We have considered four commonly used variability measurement metrics to measure the uncertainty of the known distribution. 

\textbf{Range} is a measurement of the difference between the maximum and minimum value and can be computed using~\eqref{eqn:range}.

\textbf{Standard Deviation} is a measure of how dispersed the data is in relation to the mean as defined in~\eqref{eqn:std}.

\textbf{Variance} is the square of standard deviation.

\textbf{Mean Absolute Deviation (MAD)} is a measure of variability that indicates the average distance between observations and their mean $\bigl(\overline{P_K}\bigr)$, which is given by,
\begin{equation}
    \operatorname{MAD} = \frac{\sum_{i=1}^N|P_{Ki}-\overline{P_K}|}{N}.
\end{equation}

\subsubsection{Attribute-Inference Attack}\label{sec:attack_models}

We perform attribute-inference attacks, where the adversary attempts to infer the \emph{actual} value of a target attribute from a perturbed record. 
In our threat model, we assume a \emph{worst-case} adversary that knows the perturbation mechanism and the underlying data distribution, but does not observe the exact ground-truth values of individual records.

Let $X=(X_1,\dots,X_n)$ be actual user record, and let $Y=(Y_1,\dots,Y_n)$ be the perturbed output.
For each target attribute $k\in[n]$, we train an attacker
$g_k:Y_{-k}\mapsto X_k,$
where $Y_{-k}$ denotes all released attributes except $Y_k$. Here, inference is primarily driven by inter-attribute dependencies.
During training, we split the datasets into a \emph{disjoint} training set and an evaluation set.
We apply the mechanism to the training records to obtain perturbed reports $Y$, and train $g_k$ on pairs $(Y_{-k}, X_k)$. 
We instantiate $g_k$ using \emph{a neural TabTransformer-based attacker}. 
See Appendix~\ref{appendix:section:attacks} for more details.

\textbf{Metric - }
We measure attribute-inference success (AIA) using \emph{Macro-F1}~\cite{opitz2021macrof1macrof1}.
For each attribute $X_k,\ k\in[n]$, we compute the Macro-F1 score between the attacker's prediction
$g_k(Y_{-k})$ and the ground truth $X_k$, and then average across attributes.
\begin{equation*}\label{eqn:inference_accuracy}
\text{Inference Accuracy} = \text{AIA}=\frac{1}{n}\sum_{k \in [n]}\mathrm{MacroF1}\big(g_k(Y_{-k}),\, X_k\big). 
\end{equation*}
\textbf{Lower} `Inference Accuracy' indicates \textbf{stronger resistance} to attribute inference.

\subsubsection{Privacy Budget Calibration Algorithms}
Privacy budget calibration is the process of adjusting the privacy budgets of mechanisms to trade off privacy and utility based on the requirements. 
We have proposed an algorithm (see Appendix~\ref{appendix:Unequal Privacy Budgets Across Attributes}) to calibrate the privacy budgets of each attribute when there is a targeted overall privacy for all attributes. 

\begin{figure*}[t]
  \centering
  \includegraphics[trim={0.8cm 0.cm 0.7cm 0.2cm},width=0.8\linewidth]{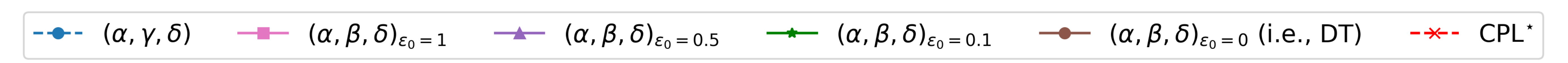}\par
  \makebox[0.9\linewidth][c]{%
  \hfill
    \begin{subfigure}[t]{0.25\linewidth}
      \centering
      \includegraphics[trim={1.45cm 0.8cm 1.2cm 0.8cm},clip,width=0.98\textwidth]{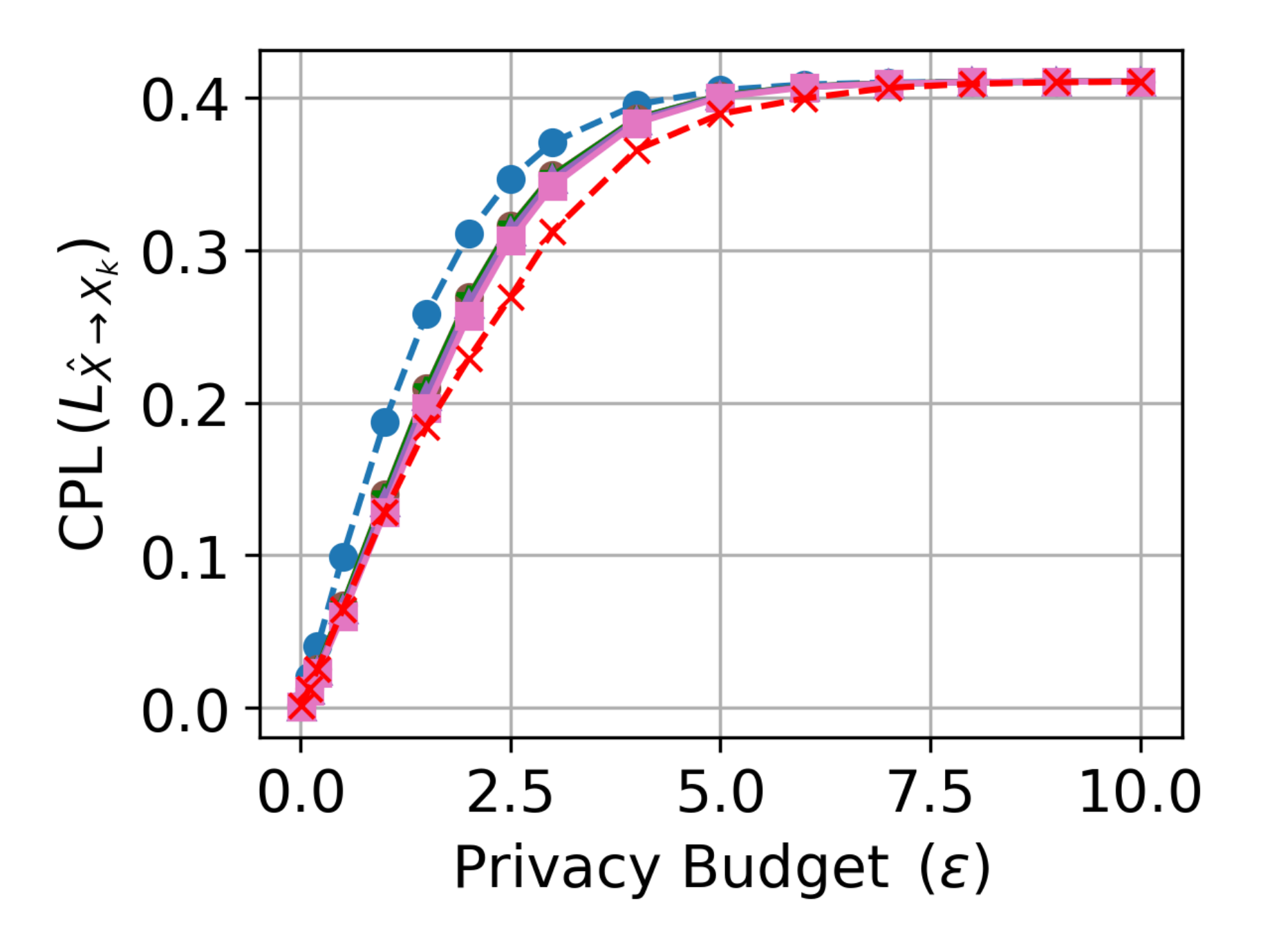}
      \caption{}\label{fig:benchmark_a}
    \end{subfigure}
    \hfill
    \begin{subfigure}[t]{0.25\linewidth}
      \centering
      \includegraphics[trim={1.45cm 0.8cm 1.2cm 0.8cm},clip,width=1\textwidth]{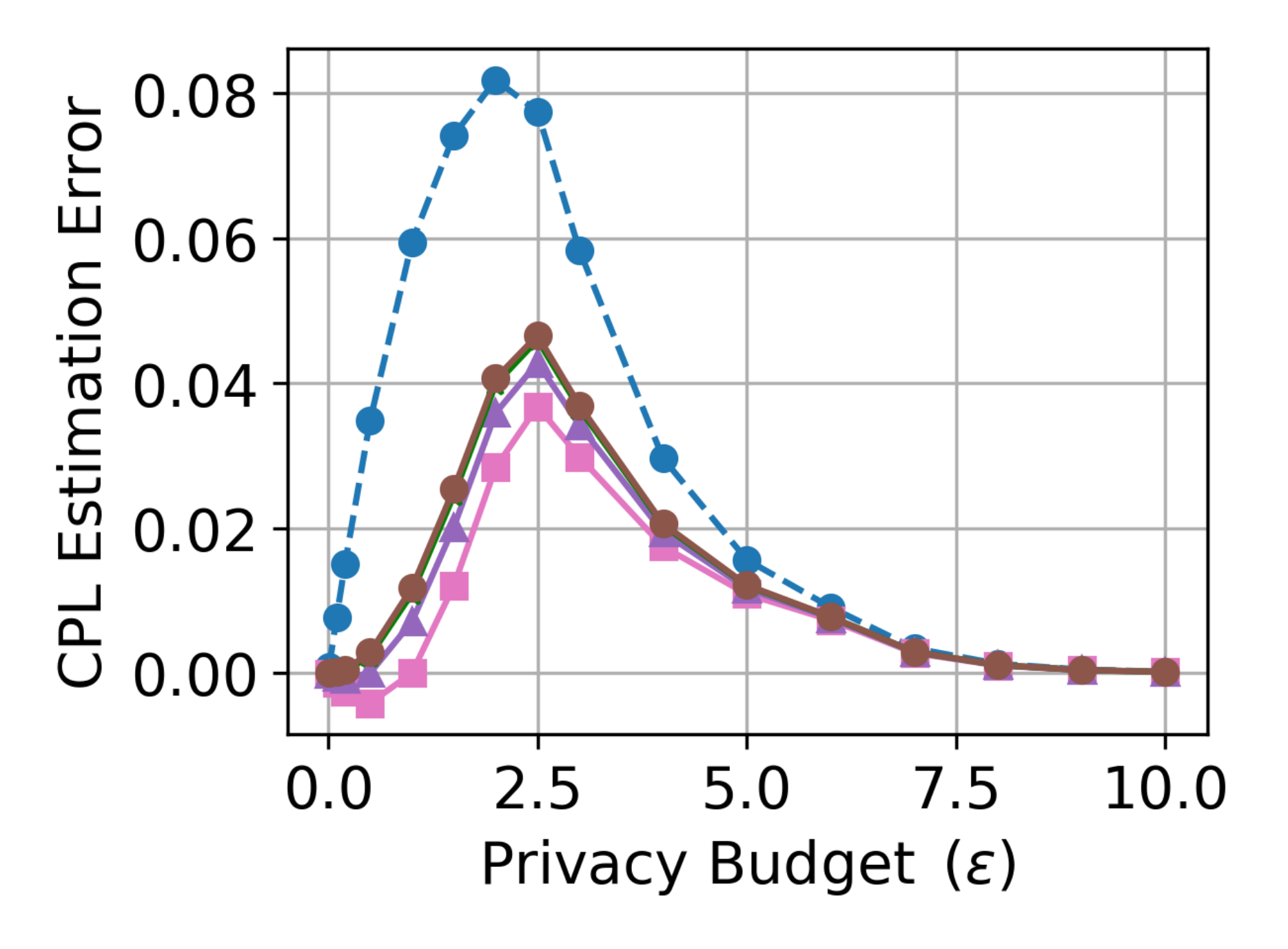}
      \caption{}\label{fig:benchmark_b}
    \end{subfigure}
    \hfill
    \begin{subfigure}[t]{0.25\linewidth}
      \centering
      \includegraphics[trim={1.45cm 0.8cm 1.2cm 0.8cm},clip,width=0.97\textwidth]{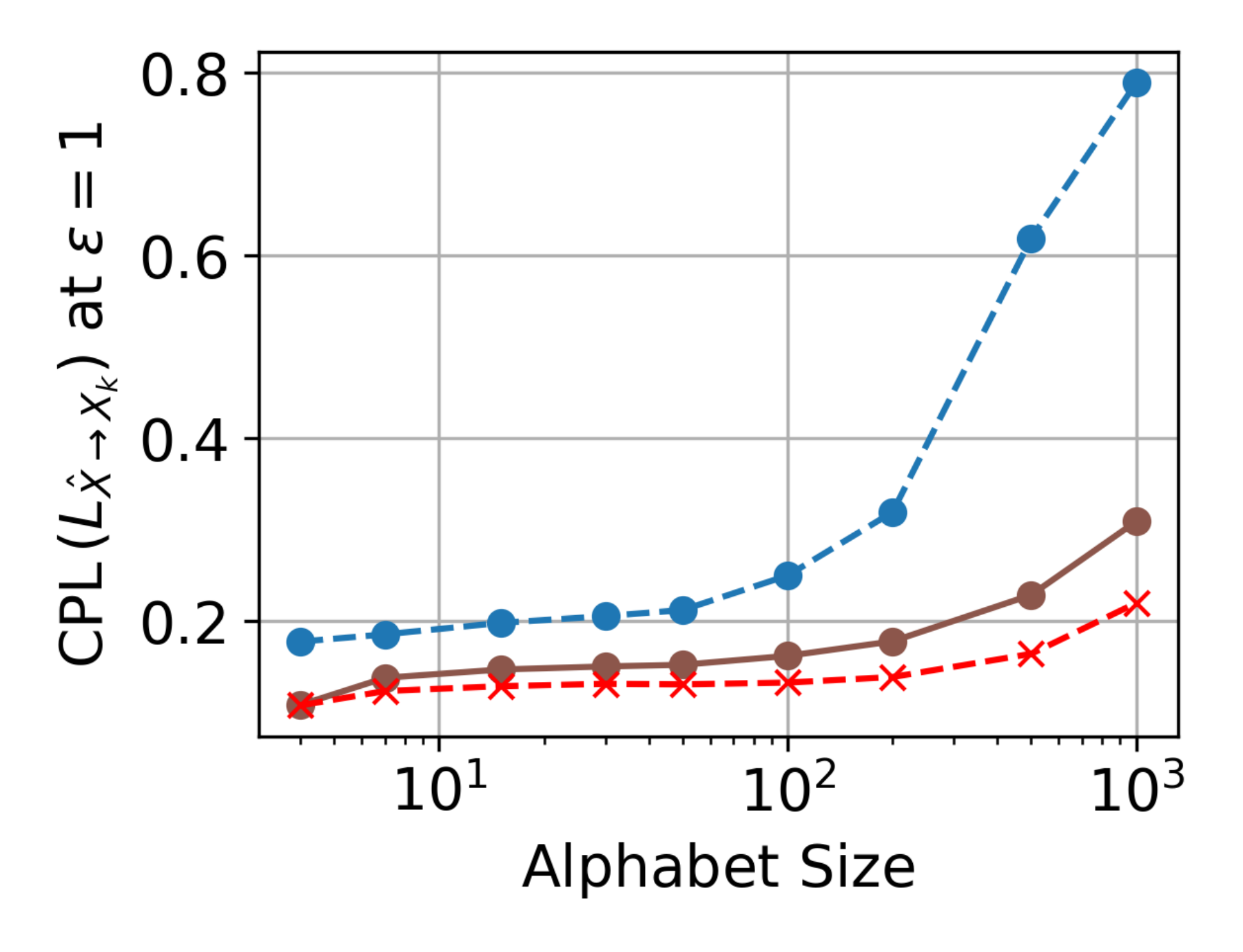}
      \caption{}\label{fig:benchmark_c}
    \end{subfigure}%
    \hfill
  }
  \caption{CPL analysis on two attribute synthetic datasets. Here, $\varepsilon_0$ refers to the calibrated parameters at $\varepsilon_0$. Further, $\varepsilon_0 = 0$ refers to the calibrated parameters limit $\varepsilon_0 \rightarrow 0$. (a) The estimated CPL of different methods varies with privacy budget $\varepsilon$. Here, the alphabet size of the dataset is 10. (b) CPL estimation error varies with $\varepsilon$. Here, the alphabet size of the dataset is 10. (c) The estimated CPLs of different methods vary with the alphabet size of the two attributes.}
  \label{fig:benchmark}
\end{figure*}

\subsection{Benchmark}\label{section:Benchmark}

We use state-of-the-art six benchmarks: IST~\cite{data_correlation_location_similar_LDP}, CBP~\cite{secon_Collecting_High-Dimensional_Correlation_LDP}, HCC-1~\cite{paper_1}, HCC-2~\cite{paper_1}, LTM~\cite{Quantifying_DP} and MI~\cite{covers_mutual_information}/PCC~\cite{PCC_new} to compare \acronym\ as summarized in Table~\ref{table:comparison_CPL_algos}. 
Unlike existing approaches, \acronym\ is the first metric to show that three parameters can characterize the dependency to quantify the potential CPL under LDP. 
The polynomial-time precomputation of \acronym\ is a one-time offline calibration step. After calibration, CPL can be evaluated in constant time for any privacy budget. 
By contrast, baselines such as HCC-1, HCC-2, and LTM require repeated full computation for each candidate privacy budget.
This distinction matters in practice, since privacy-budget calibration often requires evaluating many candidate privacy budgets to explore the privacy–utility trade-off, especially for large-cardinality attributes.
Moreover, \acronym\ has addressed the long-lasting distribution uncertainty challenge in CPL calculation as \acronym\ is adjustable to distribution uncertainties and computationally efficient. 
Appendix~\ref{appendix:Experimental Results for Time and Space Analysis} empirically validates the time complexity and space complexity of \acronym.

\subsection{Validating \acronym\ with \texorpdfstring{$\operatorname{CPL}^\star$}{CPL*}}\label{section:Validating_DT_with_CPL}

In this section, we benchmark the CPL estimations of \acronym\ against $\operatorname{CPL}^\star$. First, we evaluate three aspects of the $(\alpha,\beta,\delta)$ estimation method using synthetic distributions: (i) estimation accuracy (Figure~\ref{fig:benchmark_a}), (ii) upper bound guarantee (Figure~\ref{fig:benchmark_b}), and (iii) robustness to large alphabet sizes (Figure~\ref{fig:benchmark_c}). Four calibration levels are used in these experiments ($\varepsilon_0 = 1, 0.5, 0.1$, and $0$).
Note that \acronym\ corresponds to the $(\alpha,\beta,\delta)$ method calibrated at $\varepsilon_0=0$, i.e., $(\alpha,\beta,\delta)_{\varepsilon_0=0}$. Here, $\varepsilon_0 = 0$ refers to the calibrated parameters limit $\varepsilon_0 \rightarrow 0$ (See Section~\ref{section:Calibrated Characterisation Using} for more details).

\acronym\ (i.e., $(\alpha,\beta,\delta)_{\varepsilon_0=0}$) remains consistently close to $\operatorname{CPL}^\star$, even for alphabets containing $1,000$ symbols as shown in Figure~\ref{fig:benchmark}(c). In contrast, the estimation error of the uncalibrated $(\alpha,\gamma,\delta)$ method increases considerably with alphabet size.
This highlights the importance of calibration (Section~\ref{section:Calibrated Characterisation Using}).

Finally, we evaluate \acronym\ on five real-world datasets (Table~\ref{table:realworld_summary} and Figure~\ref{fig:nmse_real_datasets}). \acronym\ achieves very low NMSE-CPL error ($<0.025$) across all datasets for privacy budget $\varepsilon \in [0.01, 10]$, confirming the tightness of its estimates relative to $\operatorname{CPL}^\star$ on real data. NMSE-CPL error is further low ($< 0.003$) in high privacy regime $\varepsilon \in (0,1]$, which is commonly used in practice~\cite{Local_Differential_Privacy_in_Practice}.

\begin{table}[h]
\centering
\caption{The summary of optimal TCPL ($\operatorname{TCPL}^\star$) and estimated TCPL by \acronym\ for real datasets.}
\begin{tabular}{ccccc}
\hline
Dataset & $\varepsilon$ & $\operatorname{TCPL^\star}$ & \multicolumn{2}{c}{$\operatorname{\acronym}$}  \\ 
\cline{4-5}
  &  &  & Estimated TCPL & NMSE-CPL \\\hline
   & 0.5 & 291.45 & 294.33 & 0.00041  \\ 
\cline{2-5} 
SPM & 1 & 601.56 & 613.68 & 0.0017 \\ 
\cline{2-5} 
\hline
 & 0.5 & 101.04 & 101.04 & $1.03 e{-26}$ \\ 
\cline{2-5} 
CelebA & 1 & 221.53 & 221.53 & $8.73e{-27}$ \\ 
\cline{2-5} 
\hline
 & 0.5 &  40.33 & 41.00 & 0.00074  \\ 
\cline{2-5} 
Adult & 1 & 84.15 & 86.81 & 0.0025  \\ 
\cline{2-5} 
\hline
 & 0.5 & 17.88 & 18.13 & 0.00051   \\ 
\cline{2-5} 
CVD & 1 & 37.53 & 38.73 & 0.0025 \\ 
\cline{2-5} 
\hline
 & 0.5 & 215.44 & 218.99 & 0.00063  \\ 
\cline{2-5} 
DSS & 1 & 437.39 & 450.60 & 0.0022 \\ 
\cline{2-5} 
\hline
\end{tabular}
\label{table:realworld_summary}
\end{table}

\begin{figure}[htbp] 
  \centering
  \includegraphics[trim={0.1cm 0.15cm .2cm 0.cm},clip,width=.62\linewidth]{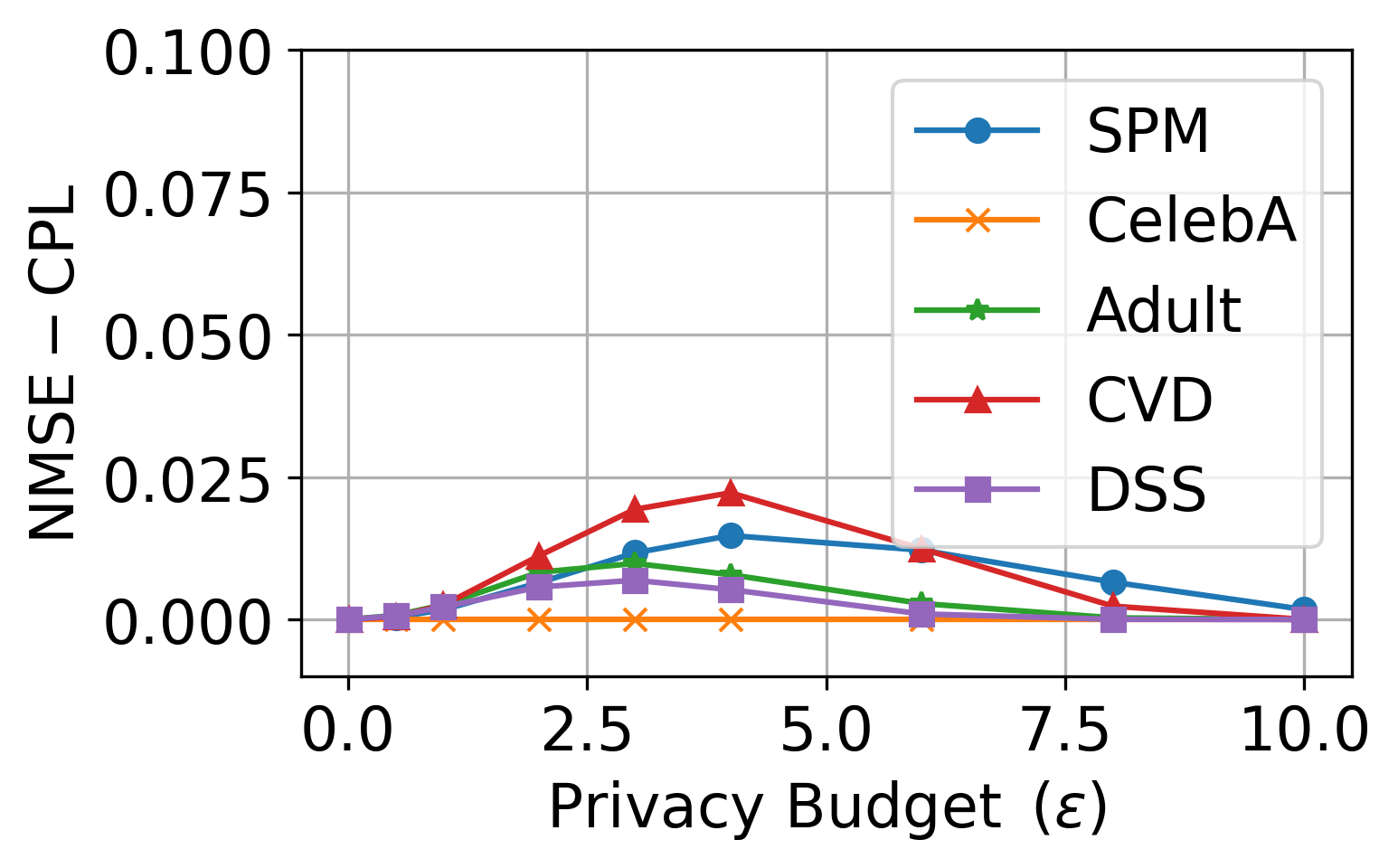}
  \caption{NMSE-CPL between $\operatorname{CPL}^\star$ and \acronym\ for the real datasets.}\label{fig:nmse_real_datasets}
\end{figure}

\subsection{Interpreting CPL Through DT Parameters}\label{section:Interpreting CPL Through DT Parameters}

Although DT characterizes CPL through three parameters $(\alpha,\beta,\delta)$, each parameter captures a distinct aspect of the dependency structure. 
In this section, we empirically examine how these parameters relate to the actual CPL observed across real datasets. Specifically, we compute CPL and the corresponding DT parameters for every pair of attributes in the five real datasets.

Figure~\ref{fig:beta_vs_cpl} shows that $\beta$ provides a useful single-parameter interpretation of CPL in high-privacy regimes, i.e., when the privacy budget $\varepsilon$ is small. In particular, the results show an approximately linear relationship between $\ln(\beta)$ and $\ln(\mathrm{CPL})$.
For $\varepsilon=1$, the observed CPL is also upper-bounded by the privacy budget, with $\max(\mathrm{CPL})=1$, as expected.

We further quantify this relationship using the Pearson correlation coefficient (PCC). Figure~\ref{fig:pcc_alpha_beta_vs_cpl} reports the PCC between $\ln(\alpha)$, and $\ln(\mathrm{CPL})$, and $\ln(\beta)$ and $\ln(\mathrm{CPL})$ across different privacy budgets. 
The results show that
$PCC(\ln(\beta), \ln(\mathrm{CPL}))$ is high in the strong-privacy regime, decreasing from $0.97$ to $0.69$ as $\varepsilon$ increases. 
This suggests that $\beta$ is the dominant explanatory parameter when the privacy budget is small. In contrast, $PCC(\ln(\alpha), \ln(\mathrm{CPL}))$ increases from $0.67$ to $0.97$ as $\varepsilon$ grows, indicating that $\alpha$ becomes more informative in lower-privacy regimes where weaker perturbation allows stronger dependency effects to be expressed.
Since $\delta$ implicitly captures edge cases where the distribution is sparse, we have not specifically evaluated.

Overall, these results provide an interpretable view of the DT parameters: $\beta$ characterizes CPL under strong privacy protection, while $\alpha$ becomes increasingly important as the privacy budget increases.

\begin{figure}[h]
  \centering
\hfill
  \begin{subfigure}[h]{0.5\columnwidth}
    \centering
    \includegraphics[width=1\textwidth]{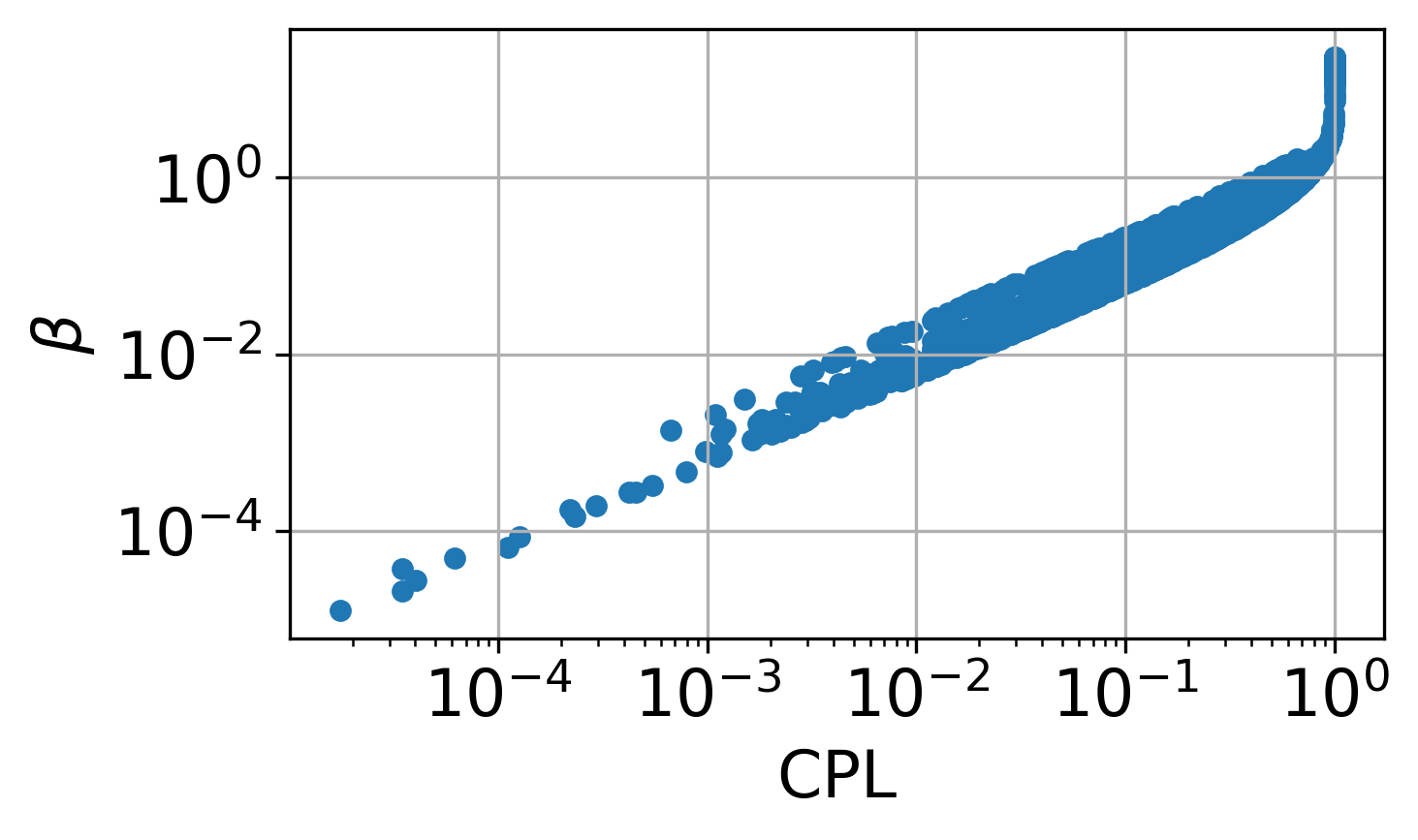}
    \caption{}\label{fig:beta_vs_cpl}
  \end{subfigure}\hfill
  \begin{subfigure}[h]{0.5\columnwidth}
    \centering
    \includegraphics[width=1\textwidth]{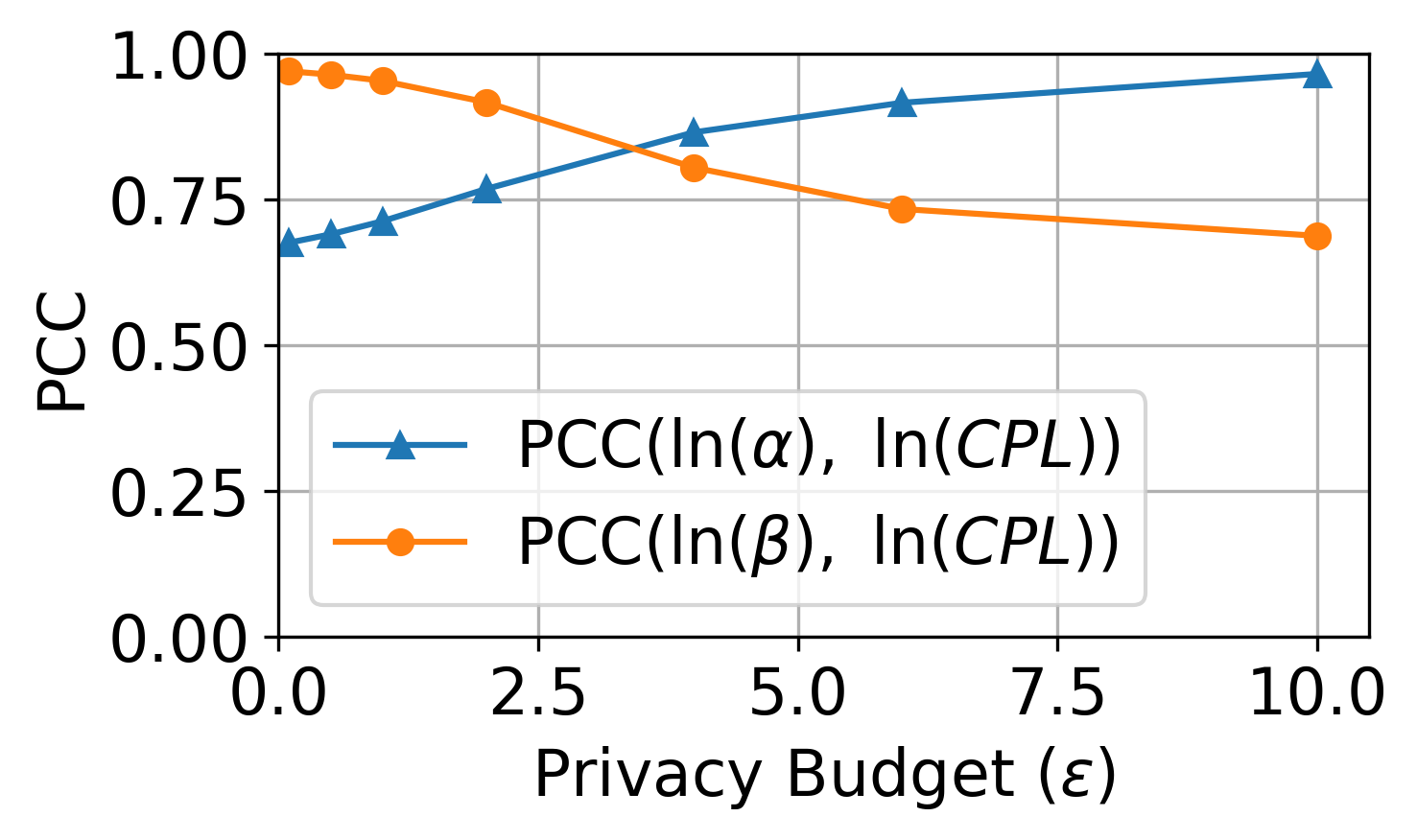}
    \caption{}\label{fig:pcc_alpha_beta_vs_cpl}
  \end{subfigure}
  \hfill
  \caption{Interpreting CPL Through DT Parameters using all five real datasets. (a) $\beta$ vs CPL at $\varepsilon = 1$. (b) Pearson correlation coefficient (PCC) between $\ln (\alpha)  (\text{or } \ln (\beta))$ and $\ln (\operatorname{CPL})$, over privacy budgets.}
  \label{fig:synthetic_alpha}
\end{figure}

\subsection{Tolerating Distribution Uncertainties}\label{section:Tolerating Distribution Uncertainties}

\subsubsection{Evaluating Distribution Uncertainty Defining Methods}\label{section:Evaluating Distribution Uncertainty Defining Methods}

Next, we evaluate the impact of estimation accuracy and privacy guarantee in different uncertainty-defining methods: \emph{statistically estimated uncertainty} and \emph{naively defined uncertainty}. Due to the limited space, we include the experimental results in Appendix~\ref{appendix:Experimental Results for Distribution Uncertainty Defining Methods}.

\textbf{Statistically Estimated Uncertainty -} Experimental results show that \textit {range} and \textit{standard deviation} (std) based uncertainty computations offer greater privacy tolerance, with std-based uncertainty being particularly robust across different prior knowledge settings.

\textbf{Naively Defined Uncertainty -} The experiments demonstrate that defining uncertainty within the range $10\% \leq \mathbf{e} \leq 20\%$ provides the best balance between preserving utility and ensuring privacy.

\begin{figure}[h]
  \centering
\hfill
  \begin{subfigure}[h]{0.5\columnwidth}
    \centering
    \includegraphics[width=1.04\textwidth]{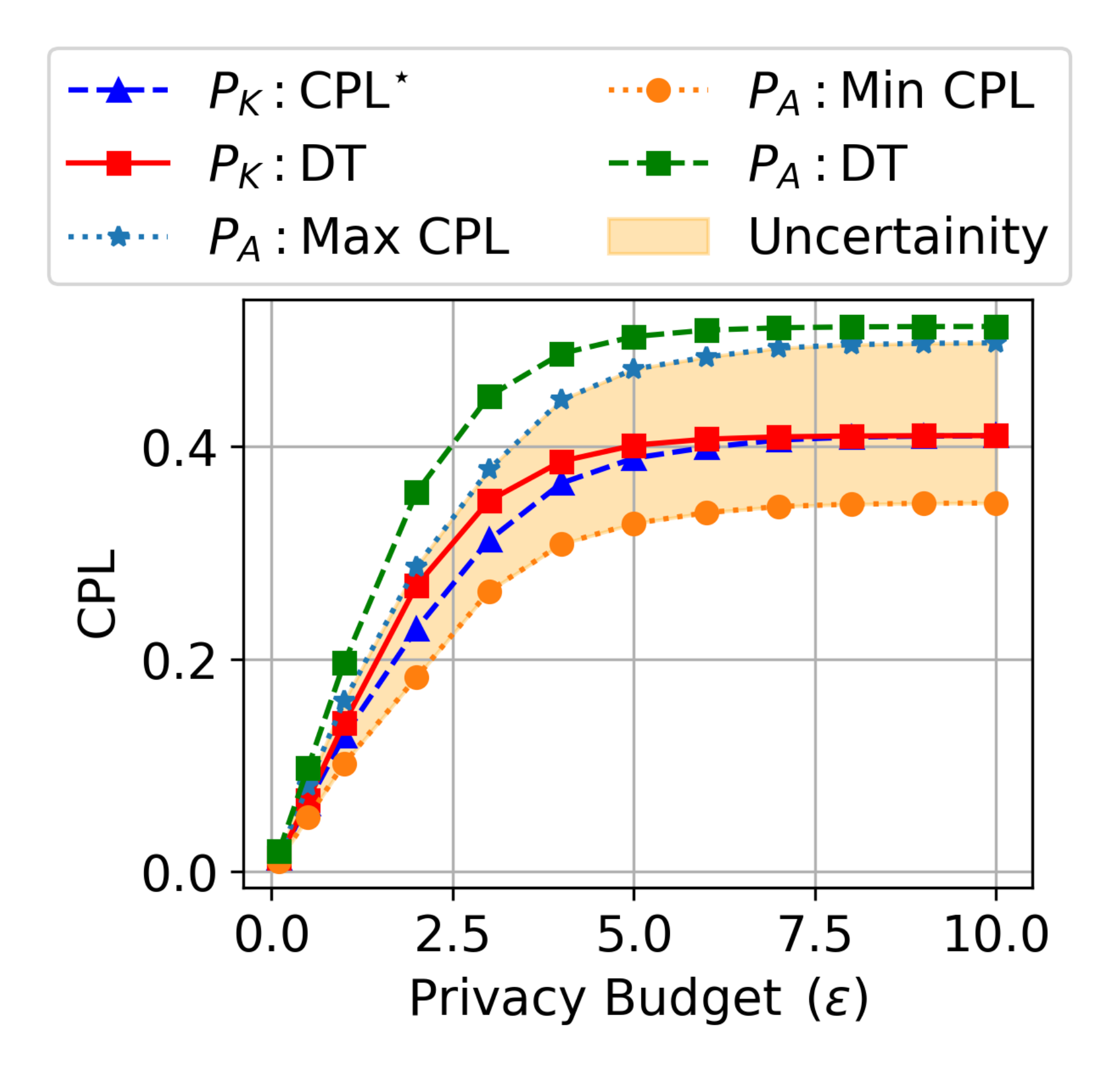}
    \caption{}\label{fig:dist_bias_a}
  \end{subfigure}\hfill
  \begin{subfigure}[h]{0.5\columnwidth}
    \centering
    \includegraphics[width=1\textwidth]{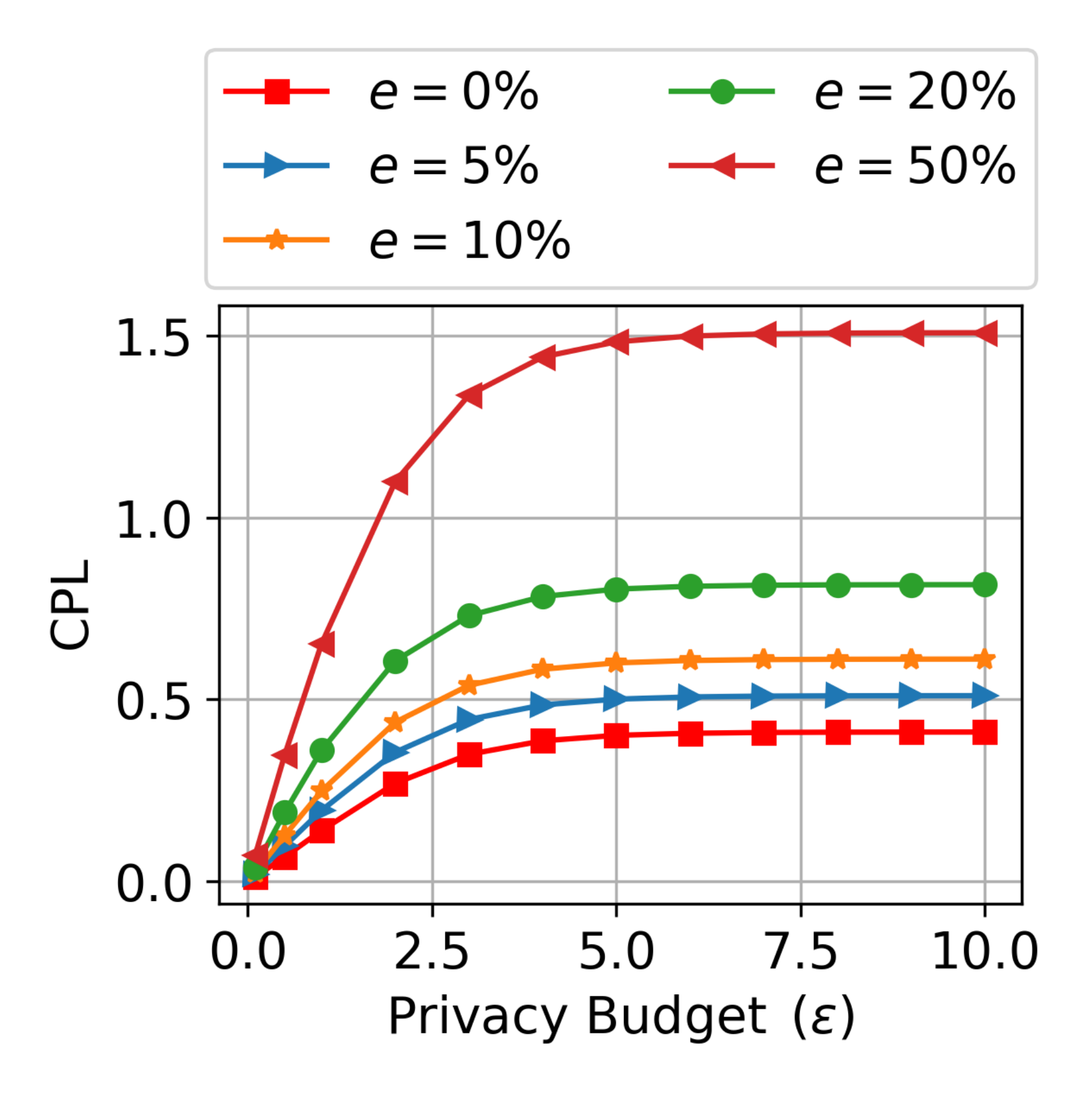}
    \caption{}\label{fig:dist_bias_b}
  \end{subfigure}
  \hfill
  \caption{(a) Evaluation of \acronym\ under distributional uncertainty using synthetic dataset. (b) Estimated CPL vs uncertainty level.}\label{fig:synthetic_distribution_uncertianty}
\end{figure}

\subsubsection{Validating \acronym\ Over Distribution Uncertainty}

First, we evaluate the CPL variation on the synthetic distribution $D_1$ with alphabet size 10, as shown in Figure~\ref{fig:synthetic_distribution_uncertianty}(a). Here, $P_K: \operatorname{CPL}^\star$ denotes the optimal CPL for the known distribution $P_K$.  As expected, the CPL estimated by \acronym\ is a tight upper bound for $P_K: \operatorname{CPL}^\star$.
Next, suppose the actual distribution $P_A$ deviates from $P_K$ by up to $\pm 10\%$ (i.e., $|P_A - P_K| \leq 10\%$). Under this uncertainty, the actual $P_A:\operatorname{CPL}^\star$ can range between the `$P_A:$ Max CPL' and `$P_A:$ Min CPL' lines. The uncertainty-aware \acronym\ estimate is depicted by `$P_A:\acronym$', which provides a tight upper bound on the maximum CPL `$P_A:$ Max CPL'. This demonstrates \acronym's ability to effectively incorporate uncertainty.

Next, we analyse how the estimated CPL of \acronym\ changes with varying error margins, as shown in Figure~\ref{fig:synthetic_distribution_uncertianty}(b). The estimated CPL grows with the error parameter $\mathbf{e}$ to accommodate a safety margin for greater uncertainty in $P_K$. The case $\mathbf{e}=0\%$ corresponds to the ideal scenario where the known distribution perfectly matches the actual.

\subsection{Evaluating \acronym\ in Different Structural Regimes}\label{section:Evaluating_dt_different_structural_regimes}

We further evaluate \acronym{} under six representative structural regimes of pairwise dependency: (i) low-entropy conditionals, (ii) sparse concentrated distributions, (iii) many-to-one mappings, (iv) one-to-many mappings, (v) near-monotone likelihood-ratio structure, and (vi) limited-tail-mass distributions. 
These regimes capture different dependency shapes, including concentrated, asymmetric, ordered, and tail-sensitive conditional distributions, which are commonly studied in information theory, sparse distribution analysis, stochastic ordering, and tail-mass estimation~\cite{covers_mutual_information,karlin1956theory}.

For each regime, we generate $100$ synthetic pairwise distributions with an alphabet size of $100$ and compare the \acronym{}-based CPL estimate with the exact value $\operatorname{CPL}^{\star}$ across different privacy budgets. 
We report the \emph{CPL estimation error} where smaller values indicate tighter estimation. Since \acronym\ is designed as a conservative estimator, a small positive gap is desirable.

\begin{table}[t]
\centering
\caption{Summary of \emph{CPL estimation errors} of \acronym\ across different structural regimes. The table reports the mean error and standard deviation; lower values indicate better estimation accuracy.}
\begin{tabular}{lcccc}
\hline
Structural & \multicolumn{4}{c}{Privacy Budget $(\varepsilon)$}  \\
\cline{2-5}
 Regime & 0.1 & 1 & 5 & 10 \\
  \hline
  Low entropy  & $9.4e{-9}\pm $ & $9.2e{-8}\pm $ & $9.3e{-8}\pm $ & $5.3e{-9}\pm $  \\ 
conditionals & $4.3e{-9}$ & $4.4e{-8}$ & $4.3e{-8}$ & $8.4e{-10}$  \\ 
\hline
 Sparse  & $3.4e{-5} \pm $ & $2.8e{-2} \pm $ & $2.4e{-1} \pm $ & $4.2e{-2} \pm $ \\ 
concentrated & $3.0e{-5} $ & $2.1e{-2} $ & $1.2e{-1}  $ & $5.0e{-2} $ \\ 
\hline
Many to one & $1.8e{-8} \pm$ &  $1.7e{-7}\pm $ & $1.6e{-7} \pm$ & $6.0e{-9} \pm $  \\  
 & $7.8e{-9}$ &  $9.1e{-8} $ & $7.6e{-8} $ & $3.0e{-9} $  \\  
\hline
One to many & $4.4e{-5} \pm$ & $2.0e{-2}\pm $ & $1.1e{-1} \pm$ & $7.0e{-3} \pm$   \\ 
 & $2.5e{-5} $ & $1.3e{-2} $ & $7.4e{-2} $ & $8.8e{-3} $   \\ 
\hline
Near monotone & $7.9e{-4} \pm$ &  $2.5e{-1} \pm$ & $7.9e{-1} \pm$ & $1.5e{-1}\pm $  \\  
likelihood ratio & $2.5e{-1} $ &  $8.6e{-2} $ &
$2.6e{-1} $& $1.8e{-1} $  \\  
\hline
Limited & $1.1e{-4} \pm $ & $3.8e{-2} \pm $ & $1.9e{-1}\pm$ & $5.9e{-2} \pm $  \\ 
tail mass & $6.9e{-5}$ & $2.2e{-2} $ & $5.3e{-2}$ & $5.0e{-2} $\\
\hline
\end{tabular}
\label{table:summary_structural_regimes}
\end{table}

Table~\ref{table:summary_structural_regimes} shows that \acronym\ provides comparable CPL estimates across diverse dependency structures. 
The results can be grouped into three cases. 
First, for concentrated or near-deterministic dependencies, such as low-entropy conditionals and many-to-one mappings, \acronym\ is extremely tight across all privacy budgets because the dominant conditional mass largely determines the worst-case leakage. 
Second, for sparse or asymmetric dependencies, such as sparse concentrated and one-to-many regimes, \acronym\ remains accurate but shows moderate looseness, especially at intermediate privacy budgets, where leakage depends on both dominant and sparse support regions. 
Third, for order-sensitive and tail-sensitive dependencies, such as near-monotone likelihood-ratio and limited-tail-mass regimes, \acronym\ becomes more conservative because small tail probabilities or gradual changes in the likelihood ratio can affect the worst-case CPL.

Overall, these results show that \acronym\ is not limited to a single dependency pattern. 
It is tightest when leakage is driven by concentrated conditional mass, while remaining conservative for sparse, asymmetric, ordered, and tail-sensitive structures.

\subsection{Privacy Budget Calibration}\label{section:Privacy Budget Calibration}

Next, we demonstrate how \acronym\ can be used to calibrate the privacy budget of LDP mechanisms, leveraging prior knowledge (with distributional uncertainty) to maximise data utility while satisfying privacy requirements (i.e., achieving a favorable privacy–utility trade-off) in a real-world scenario.

\begin{figure}[hbt] 
  \centering
  \includegraphics[trim={0.2cm 2.5cm .2cm 0.5cm},clip,width=0.9\linewidth]{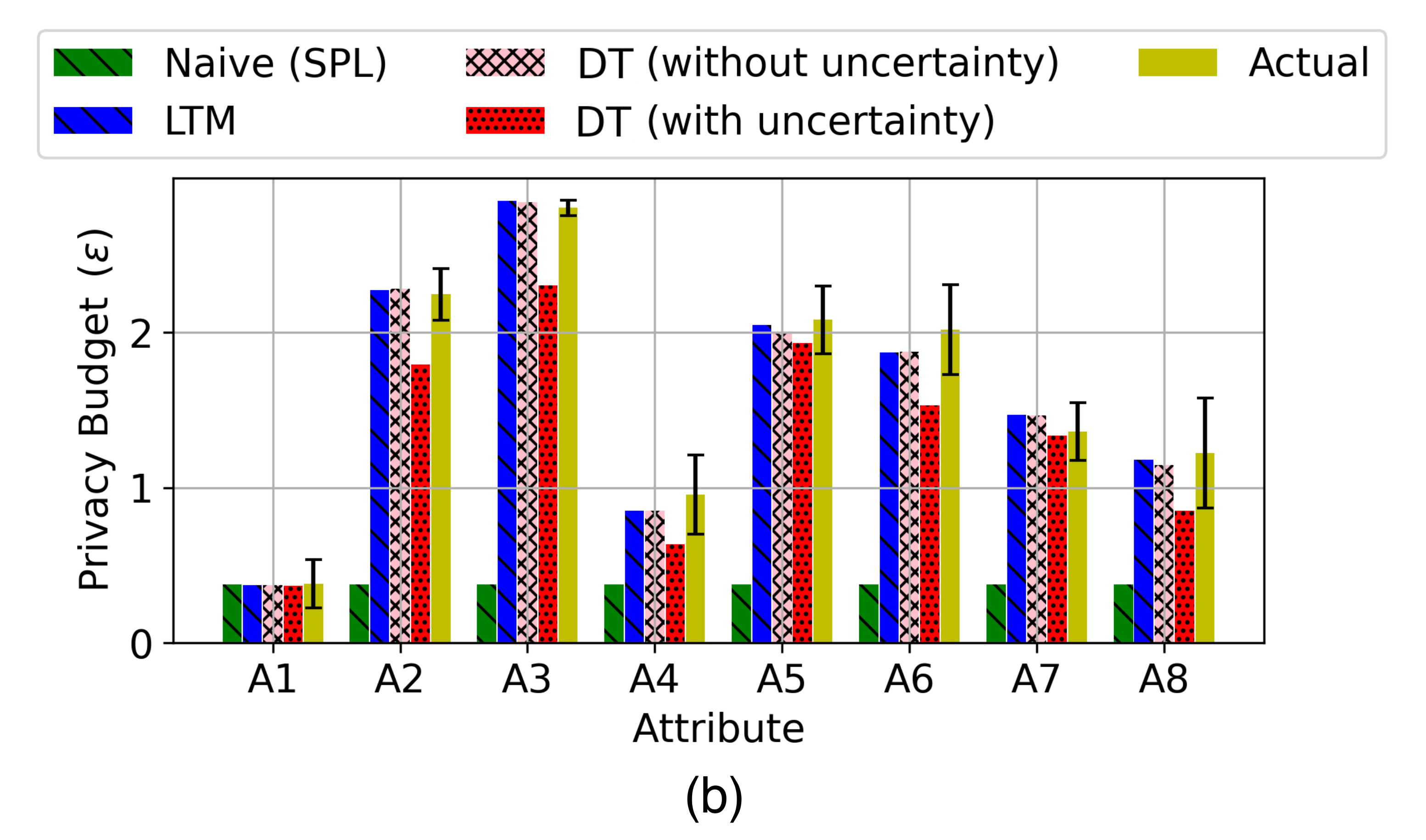}
  \caption{Calibrated privacy budgets of various algorithms at $\overline{\varepsilon}=3$. Here, `Actual' represents the mean and standard deviation of calibrated privacy budgets ($\operatorname{CPL}^\star$) for each state, except the initially collected five states.}\label{fig:privacy_budget_calibration_results}
\end{figure}

We revisit the running motivational example of state-wise data collection (Section~\ref{section:introduction}) and instantiate it on the real \textit{SPM} dataset~\cite{SPM}. 
We focus on an 8-attribute subset and assume we have prior knowledge from $5$ states (NY, CA, MD, IL, TX), yielding empirical distributions $\{P^{(s)}\}_{s=1}^5$. 
Our goal is to leverage this prior information to calibrate a per-attribute privacy-budget vector $\boldsymbol{\varepsilon}=(\varepsilon_1,\ldots,\varepsilon_8)$ that \emph{maximises utility} in data collections of \textit{remaining states}, subject to a global privacy requirement $\overline{\varepsilon}$.
Since utility typically increases with larger per-attribute budgets, calibration amounts to choosing the largest feasible $\boldsymbol{\varepsilon}$ while ensuring that the (correlation-aware) privacy leakage does not exceed $\overline{\varepsilon}$. 
We solve this constrained optimization problem using gradient descent, as detailed in Appendix~\ref{appendix:Unequal Privacy Budgets Across Attributes}.

\textbf{Nominal calibration (ignoring uncertainty).}
We first form a nominal prior by averaging the $5$ known distributions (element-wise) and calibrate $\boldsymbol{\varepsilon}$ under the \emph{no-uncertainty} assumption. Figure~\ref{fig:privacy_budget_calibration_results} shows the calibrated results for $\overline{\varepsilon}=3$. 
In this setting, \acronym\ produces calibrated budgets that closely match the theoretical optimum $CPL^\star$, which is obtained by the LTM baseline. 
This indicates that \acronym\ is \emph{tight} in the nominal regime. 
For context, the ``Actual'' calibrated budgets in Figure~\ref{fig:privacy_budget_calibration_results} report the mean and standard deviation of budgets obtained by running LTM on \emph{other} held-out states. 
The observed variation reflects the dependency shifts induced by demographic differences across states, and therefore highlights a key risk: calibrating solely on the averaged prior can be overly optimistic, and may violate the intended privacy level due to distributional mismatch.

\textbf{Uncertainty-aware calibration.}
We next quantify distributional uncertainty via the variance-based estimator presented in Section~\ref{section:Statistically Estimated Uncertainty} and calibrate $\boldsymbol{\varepsilon}$ using uncertainty-aware \acronym. 
As expected, accounting for uncertainty systematically yields \emph{smaller} per-attribute budgets, i.e., a deliberate \emph{safety margin} that hedges against plausible shifts in the dependency structure. 
This behavior is evident in Figure~\ref{fig:privacy_budget_calibration_results}(b): DT (with uncertainty) lies below LTM and closely tracks the lower envelope of the observed ``Actual'' privacy risk across attributes, thereby preventing underestimation of leakage when the data distribution changes. 
In contrast, LTM can become optimistic under mismatch (underestimating privacy risks), whereas \acronym\ remains reliable by conservatively overestimating CPL and preserving a robust privacy guarantee.

Finally, \acronym\ calibrates the full budget vector in $7$s for this example, compared to $61$s for LTM, demonstrating an order-of-magnitude speedup in \acronym.

\subsection{Attribute Inference Attacks}\label{section:inference_attacks}

Next, we evaluate \acronym{} and SPL under practical attribute inference attacks. 
We consider nine attributes with mixed dependency strengths, including both highly correlated and weakly correlated attributes as measured by mutual information. 
The data are collected using the GRR mechanism for 3-marginal estimation, and an attribute inference attack is performed separately for each target attribute.

We compare the attack accuracy of \acronym{} and SPL at the same average total variation distance (TVD), so that both methods are evaluated under comparable utility levels. 
Table~\ref{table:attribute_inference} reports the results for $\operatorname{TVD}=0.4$ and $\operatorname{TVD}=0.2$. 
As expected, SPL requires a larger overall privacy budget to achieve the same utility level as \acronym{}. 
Despite this, \acronym{} shows comparable robustness against attribute-inference attacks, even though it allocates larger per-attribute privacy budgets to selected attributes through CPL-aware budget allocation.
Additionally, we observe that weakly correlated attributes are unlikely to be exploitable via attribute-inference attacks, as inference accuracy remains largely constant across different utility levels.

These results suggest that CPL-based budget allocation improves utility while preserving comparable privacy guarantees. 
In particular, \acronym{} achieves the same TVD with a smaller overall privacy budget than SPL, while maintaining comparable attack robustness across the evaluated attributes.

\begin{table}[h]
\centering
\caption{Summary of the attribute inference attack on \acronym\ and SPL. The table reports the inference accuracy (AIA) and corresponding privacy budget for $TVD=0.2$ and $0.4$ (estimation error of 3-marginal). Here, $\varepsilon$ denotes the overall privacy budget.}
\begin{tabular}{lccccc}
\hline
Attribute & \multicolumn{1}{|c|}{$\operatorname{TVD}$} & \multicolumn{2}{c|}{$\operatorname{\acronym}$} & \multicolumn{2}{c}{$\operatorname{SPL}$}  \\ 
\cline{3-6}
  & \multicolumn{1}{|c|}{} &  \multicolumn{1}{|c|}{$\operatorname{\varepsilon}$} & \multicolumn{1}{c|}{$\operatorname{AIA}$} & $\varepsilon$ & \multicolumn{1}{|c}{$\operatorname{AIA}$} \\\hline
Marital  & 0.2 & 2.94 & 0.200  &  9.0 & 0.201 \\ 
 status & 0.4 & 1.72 & 0.187  & 6.17 & 0.190 \\ 
\hline
 Gender & 0.2 & 2.94 & 0.445 & 9.0 & 0.465 \\ 
   & 0.4 & 1.72 & 0.405 & 6.17 & 0.410 \\ 
\hline
Person's  & 0.2 & 2.94 & 0.156 & 9.0 & 0.156 \\
Medicare & 0.4 & 1.72 & 0.156 & 6.17 & 0.156 \\
\hline
Race & 0.2 & 2.94 & 0.193 & 9.0 & 0.203 \\ 
 & 0.4 & 1.72 & 0.180 & 6.17 & 0.167 \\
\hline 
Hispanic & 0.2 & 2.94 & 0.489 & 9.0 & 0.488  \\ 
& 0.4 & 1.72 & 0.489 & 6.17 & 0.489 \\
\hline
 Poverty  & 0.2 & 2.94 & 0.463 & 9.0 & 0.463  \\ 
status & 0.4 & 1.72 & 0.463 & 6.17 & 0.463 \\
\hline
Unit has   & 0.2 & 2.94 & 0.498 & 9.0 & 0.498 \\ 
a minor & 0.4 & 1.72 & 0.498 & 6.17 & 0.498 \\ 
\hline
 Unit has & 0.2 & 2.94 & 0.468 & 9.0 & 0.468 \\ 
 couples  
 & 0.4 &  1.72 & 0.468 & 6.17 & 0.468 \\ 
\hline
Child care  & 0.2 & 2.94  & 0.330 & 9.0 & 0.330 \\
expenses & 0.4 & 1.72 & 0.330 & 6.17  & 0.330 \\
\cline{2-5} 
\hline
\end{tabular}
\label{table:attribute_inference}
\end{table}

\begin{figure*}[t]
  \centering

  \includegraphics[width=0.65\textwidth]{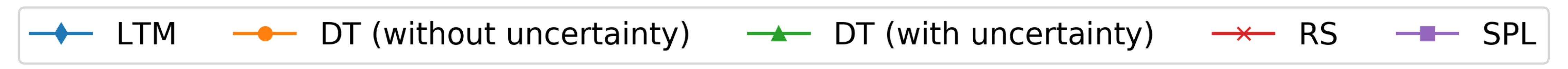}\par

  \makebox[\textwidth][c]{%
  \hfill
    \begin{subfigure}[t]{0.49\columnwidth}
      \centering
      \includegraphics[width=\linewidth]{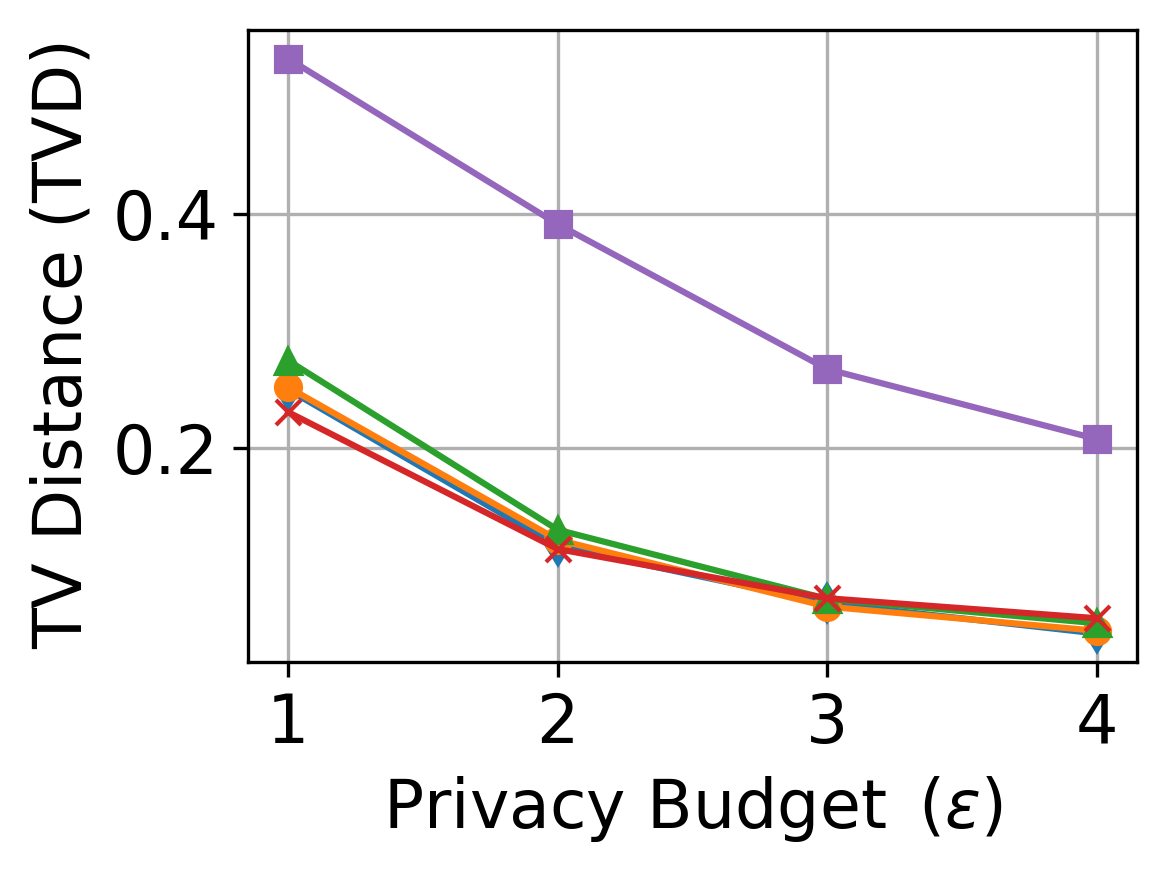}
      \caption{$2$-Marginal}
    \end{subfigure}
    \hfill
    \begin{subfigure}[t]{0.49\columnwidth}
      \centering
      \includegraphics[width=\linewidth]{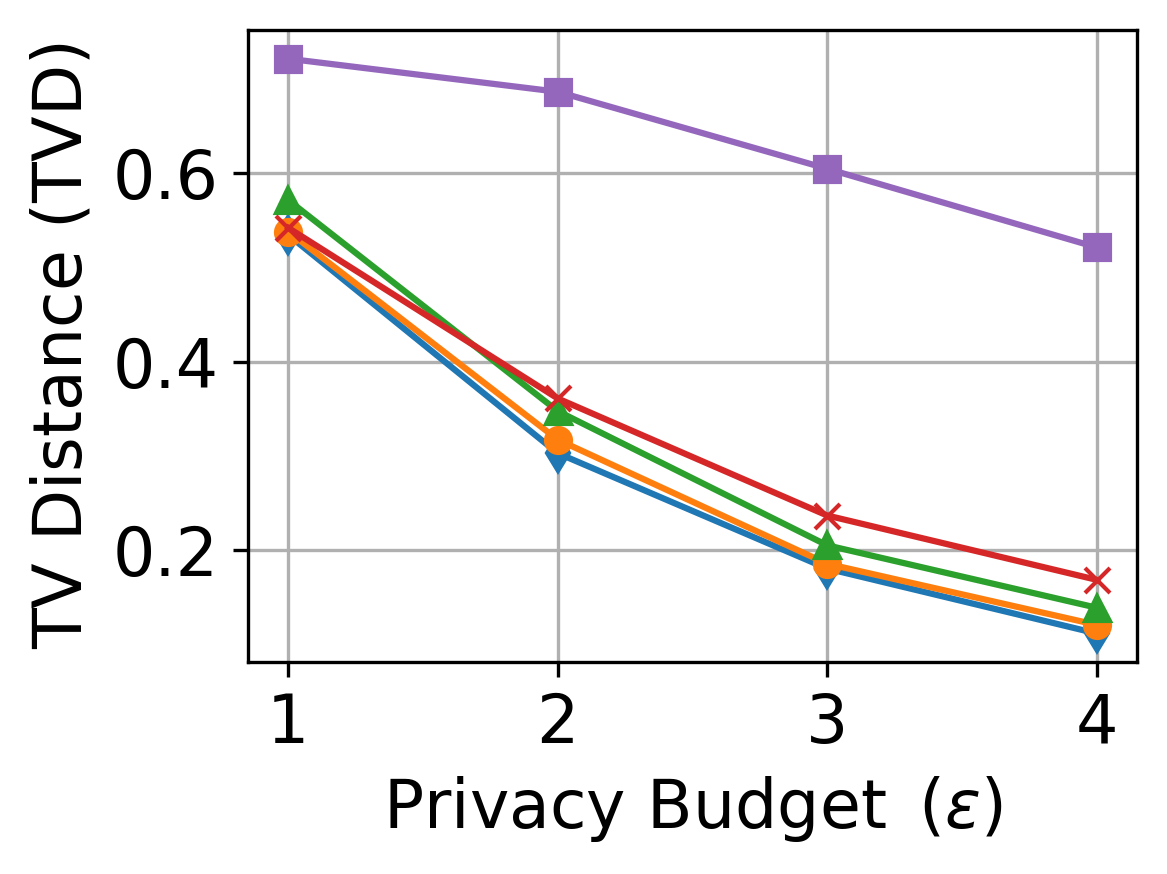}
      \caption{$3$-Marginal}
    \end{subfigure}
    \hfill
    \begin{subfigure}[t]{0.49\columnwidth}
      \centering
      \includegraphics[width=\linewidth]{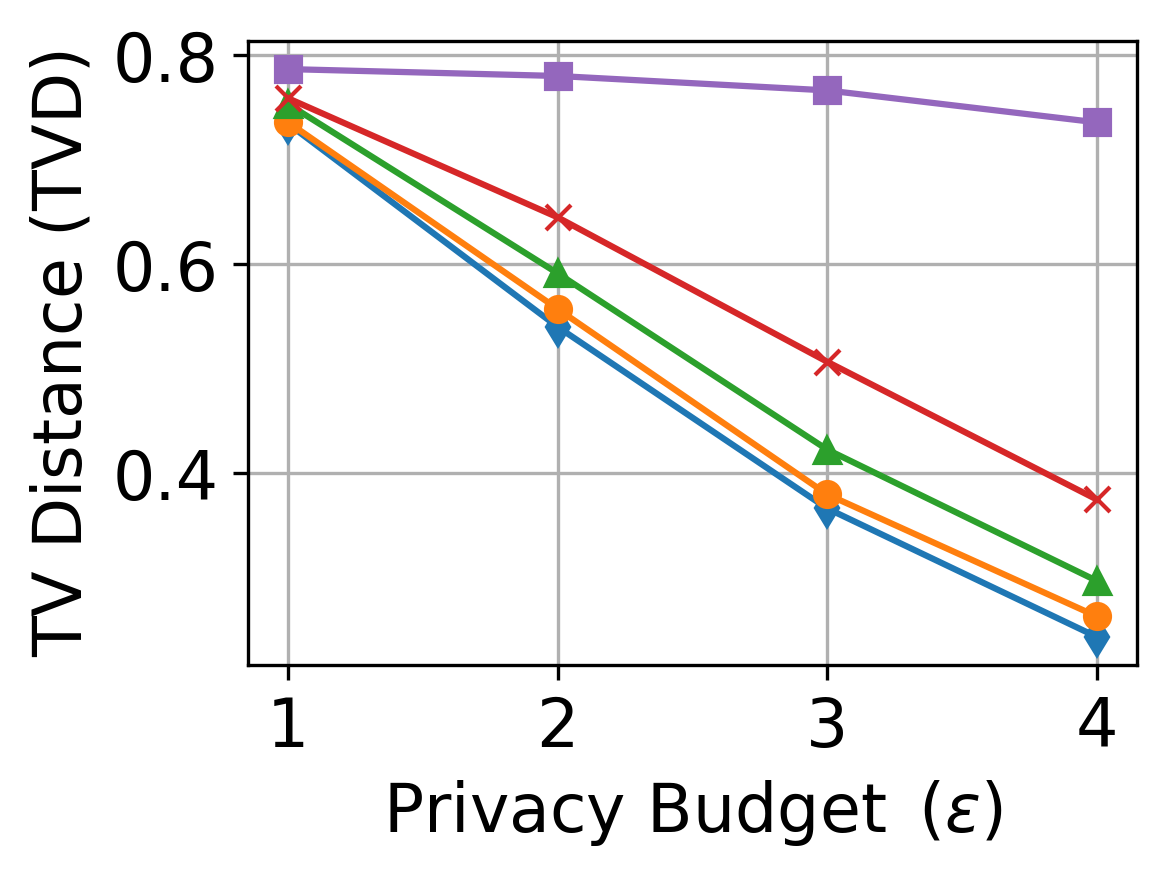}
      \caption{$4$-Marginal}
    \end{subfigure}%
    \hfill
  }

  \caption{TV distance (TVD) versus privacy budget $\varepsilon$ for $k$-marginal estimation with $k=2,3,4$. Lower TVD indicates better estimation utility. CPL-aware budget-tuning methods achieve substantially lower TVD than SPL and remain competitive with RS, with clearer advantages for higher-order marginals.}
  \label{fig:TV distance vs privacy budget}
\end{figure*}

\subsection{Comparison with Random Sampling}\label{section:random_sampling}

To evaluate whether the proposed dependency-aware budget allocation provides utility benefits beyond standard sampling-based multidimensional LDP, we include random sampling (RS) as a baseline for $k$-marginal estimation. 
In RS, each user reports only one randomly selected $k$-way attribute subset and perturbs the corresponding joint value using the full privacy budget. 
This makes RS attractive because it avoids splitting the privacy budget among multiple attributes or marginals. 
However, for joint frequency estimation, RS suffers from sample fragmentation: each $k$-way marginal is estimated only from the users who selected that specific view. 
As $k$ increases, the number of possible views and the joint domain size both grow, resulting in fewer effective samples per marginal and consequently higher estimation error.

Figure~\ref{fig:TV distance vs privacy budget} reports the TV distance for 2-, 3-, and 4-marginal estimation under different privacy budgets. 
Here, we select the same nine attributes used in the attribute inference attack.
For 2-marginals, RS is competitive with CPL-aware budget-tuning methods, since the number of views and the joint domain size remain relatively small. 
However, the advantage of \acronym{} becomes clearer for higher-order marginals. 
For 3- and 4-marginals, both variants of \acronym{} achieve lower TV distance than RS across most privacy budgets, while SPL performs substantially worse because the privacy budget is uniformly divided across attributes. 
These results demonstrate that although RS can be effective for low-dimensional marginal estimation, it becomes less reliable for higher-order joint frequency estimation. 
By exploiting the dependency structure among attributes, \acronym{} allocates the privacy budget more effectively and preserves more useful joint-distribution information, leading to better utility in correlation-sensitive estimation tasks.

\section{Related Work}\label{section:related_works}

We start this section by exploring the concept of prior knowledge and its implications for privacy risks. We then review existing CPL analysis algorithms under Differential privacy (DP) and Local differential privacy (LDP).

\subsection{CPL and Prior Knowledge}

Prior knowledge can manifest in various forms depending on the data context, ranging from statistical distributions~\cite{correlated_DP, impact_of_prior_knowledge} to temporal dependencies in time-series data~\cite{L-SRR_location_based_LBS, data_correlation_location_similar_LDP}, and spatial inferences such as estimating the current location from previously observed ones~\cite{data_correlation_location_similar_LDP, location_based_survey_LBS}. Studies have shown that adversaries may acquire such knowledge through multiple channels, including public datasets~\cite{adult_dataset, celebA_attributes}, social norms~\cite{impact_of_prior_knowledge}, or even during the data collection process itself~\cite{AAA, CALM}. These sources of prior knowledge significantly heighten privacy risks, and (L)DP mechanisms require explicitly accounting for such knowledge to safeguard privacy~\cite{impact_of_prior_knowledge, correlated_DP, CALM}.


\subsection{CPL Computation in DP and LDP}

DP~\cite{dwork2006differential} and LDP~\cite{LDP_duchi} have emerged as standard techniques for safeguarding privacy, primarily due to the limitations of earlier approaches such as \textit{k}-anonymity~\cite{k-anonymity} and \textit{l}-diversity~\cite{l-diversity}. LDP is widely employed when the data collector is considered untrusted~\cite{location_based_survey_LBS, apple2017privacy, microsoft_ldp, Local_Differential_Privacy_in_Practice}.
Study~\cite{no_free_lunch} is a pioneering work that highlights the privacy risks with correlated data under the central setting of DP. 
Then, studies propose different methods to address the privacy risks in correlated data. Group DP is a simple method to adjust the global sensitivity based on the number of correlated records~\cite{Algorithmic_Foundations_dwork}. Privacy budget splitting and sampling methods are commonly adopted methods to apply LDP in correlated data~\cite{ldp_Frequency_Estimation}.
These methods primarily assume no prior knowledge.
When prior knowledge is available, studies propose different algorithms to measure the CPL using prior knowledge. We can group them into two main categories as discussed in the following sections.

\subsubsection{Conventional Correlation Metrics Based CPL Analysis and Their Problems}
Early studies tend to adopt conventional correlation metrics (e.g., mutual information (MI), Pearson correlation coefficient (PCC), and covariance) to measure the CPL. 
Study~\cite{correlated_DP} formulates the sensitivity parameter using a correlation degree matrix. 
Another study~\cite{correlated_data_dp_conventional_metrics_MI} uses distance correlation to measure the correlation. 
In~\cite{dynamic_sensitivity_change_dp}, a periodic sensitivity calculation algorithm is proposed to address the periodic correlation changes in temporal data using autocorrelation. 
Study~\cite{secon_Collecting_High-Dimensional_Correlation_LDP} proposes a covariance-based algorithm to quantify CPL, and~\cite{LDP-PCC-dependency-graph-IoT} calibrates privacy budgets using PCC to trade-off privacy and utility. 
However, these conventional metrics are \textbf{unable} to correctly characterize some critical properties of CPL in (L)DP---e.g., asymmetric behavior and identifying the worst-case scenario~\cite{paper_1}.

\subsubsection{Bayesian Theorem-Based Algorithms to Compute CPL}

Study~\cite{Quantifying_DP} proposes an algorithm to quantify the CPL in DP. 
A recent study~\cite{data_correlation_location_similar_LDP} has analyzed privacy leakage related to spatial-temporal data, a specific family of LDP mechanisms which have two levels of perturbation probabilities (i.e., GRR~\cite{LDP_survey_composition_theorem}, Unary encoding~\cite{LDP_Frequent_Itemset_Mining}, Hash encoding~\cite{LDP_Frequent_Itemset_Mining}, etc.). 
Study~\cite{paper_1} proposes two algorithms to compute CPL in any LDP mechanism. However, all these algorithms primarily assumed that ground truth probability distributions are known, which is usually not true, especially for locally perturbing scenarios like LDP.
Furthermore, these algorithms have either polynomial or logarithmic time complexity over alphabet size, which can be challenging when CPL is iteratively evaluated at different privacy budgets to calibrate utility and privacy.

In summary, privacy leakage analysis in correlated environments is crucial to guarantee the privacy of mechanisms as well as to maximize the utility by tuning the privacy budgets.
However, the CPL analysis is currently performed using computationally expensive algorithms that assume the \emph{ground-truth} distributions are known. \textit{Therefore, currently, there is \textbf{no} method to time efficiently estimate CPL under inaccurate distributional information}.



\section{Conclusion}\label{section:conclusion}

This work demonstrates that three parameters derived from the joint distribution of two attributes are sufficient to construct a conservative surrogate for pairwise correlation-induced privacy leakage (CPL), even under distributional uncertainty. We formalise these parameters as a new metric, the \emph{Dependency Triad} (\acronym), and provide a theoretical foundation for its use. \acronym\ is not intended to fully characterise the underlying joint distribution; rather, we prove that it yields a tight, uncertainty-aware upper bound on pairwise CPL.
Further, we develop algorithms that enable practical CPL evaluation and privacy-budget calibration. 
Since exact multidimensional privacy leakage analysis is often intractable, this pairwise formulation provides a tractable building block for assessing total privacy leakage across attributes. Collectively, these results show that \acronym\ is more than an empirical heuristic: the proofs certify when and why \acronym\ remains conservative enough to mitigate privacy risks, while staying tight in the nominal regime.

Moreover, \acronym\ overcomes two key limitations of prior CPL computation methods by (i) explicitly incorporating configurable dependence through uncertainty in distributional information and (ii) enabling constant-time CPL estimation using the computed parameters of \acronym. Our empirical results corroborate the theory: \acronym\ accurately tracks CPL, particularly in high-privacy regimes (e.g., $\varepsilon \le 1$), and scales effectively to high-dimensional settings.

\noindent\textbf{Limitations and Future work - }
This work focuses on estimating CPL in pure LDP mechanisms using the proposed metric \acronym. We will extend this work to CPL estimation in approximated LDP mechanisms in our future work.

\section*{Ethical Considerations}
We use five publicly available datasets—SPM, CelebA, Adult, Cardiovascular (CVD),  DSS—under their research licenses and within their stated consent and usage scopes.




\bibliographystyle{IEEEtran}
%

\bibliography{paper/refs}

\pagebreak

\appendix


\begin{appendices}

\section*{Open Science}
Artifacts are available in \url{https://github.com/SandaruJayawardana/dependency-triad-ccs26}.

\section{The \texorpdfstring{$\operatorname{CPL}^\star$}{CPL*} Algorithm}\label{section:The Optimal Additional Privacy Leakage Algorithm}

\begin{algorithm}[h]
\caption{Compute $\operatorname{CPL}^\star$ (HCC method~\cite{paper_1}).}\label{algo:related_algorithm_1}
\begin{algorithmic}[1]
    \Statex \textbf{Input:}
    \Statex \hspace{1em} $P(\Hat{X}|X_k)$ \Comment{\textit{Conditional probability distribution.}}
    \Statex \hspace{1em} $\varepsilon$\Comment{\textit{Privacy budget of $\Hat{M}$.}}
    \State $L \gets 0$ \Comment{\textit{Variable to keep maximum leakage.}}
    \ForAll{$x,x' \in X_k$} 
        \State$ G \gets P(\Hat{X}|X_k=x)$ 
        \State$ G' \gets P(\Hat{X}|X_k=x')$ 
        \State $Q \gets \text{sort}(G \oslash G')$ \Comment{\textit{Descending order element-wise ratios.}}
        \State $A, B \gets 0$ \Comment{\textit{Variables to keep the summation.}}
        \ForAll{$Q_i \in Q$}
                \If{$Q_i \geq \frac{1+A(exp(\varepsilon)-1)}{1+B(exp(\varepsilon)-1)}$}
                    \State $A \gets A + g$ \Comment{\textit{$g$ is the element of $G$ corresponds to $Q_i$.}}
                    \State $B \gets B + g'$ \Comment{\textit{$g'$ is the element of $G'$ corresponds to $Q_i$.}}
                \EndIf
            \EndFor
                \State $L \gets \max \left(L,\frac{1+A(exp(\varepsilon)-1)}{1+B(exp(\varepsilon)-1)}\right)$ \Comment{\textit{The maximum.}}

    \EndFor
    \State \Return $L_{\hat{X} \rightarrow X_k} = \ln L$
\end{algorithmic}
\end{algorithm}

\section{ Monotonicity of Linear Fractional Functions}\label{proof:lemma:monotone_fn}
Lemma~\ref{lemma:monotone_fn} will be used in mathematical proofs.
\begin{lemma}\label{lemma:monotone_fn}
    The function \( f(x) \) defined as $f(x) = \frac{a_1 x + b_1}{a_2 x + b_2}$ is monotonic. Specifically, $f(x)$ is \textbf{increasing} with respect to $x$ if $a_1 b_2 - a_2 b_1 \geq 0$, and $f(x)$ is \textbf{strictly decreasing} with respect to $x$ if $a_1 b_2 - a_2 b_1 < 0$.
\end{lemma}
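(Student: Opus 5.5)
The plan is to compute the derivative and read off its sign. The function $f(x)=\frac{a_1x+b_1}{a_2x+b_2}$ is differentiable on every interval where $a_2x+b_2\neq 0$, and the quotient rule gives
\[
f'(x)=\frac{a_1(a_2x+b_2)-a_2(a_1x+b_1)}{(a_2x+b_2)^2}=\frac{a_1b_2-a_2b_1}{(a_2x+b_2)^2}.
\]
The terms in $x$ cancel, so the numerator is the constant $a_1b_2-a_2b_1$. The denominator is strictly positive on the domain. Hence $f'$ has a constant sign equal to the sign of $a_1b_2-a_2b_1$.

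If $a_1b_2-a_2b_1\geq 0$, then $f'\ge 0$, so $f$ is (weakly) increasing. In the boundary case where the determinant is zero, $f$ is constant, which still counts as increasing in the non-strict sense. If $a_1b_2-a_2b_1<0$, then $f'<0$ everywhere, so $f$ is strictly decreasing by the mean value theorem.

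The only subtlety, and the main caveat, is the pole at $x=-b_2/a_2$ when $a_2\neq 0$. Monotonicity can only hold on each connected interval of the domain, not across the pole. I will state the lemma on any interval not containing the pole. This matches every later use in the paper: there the variable ranges over probability aggregates with a positive denominator (for example $1+B(e^\varepsilon-1)>0$), so the relevant interval never contains the pole.

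If a derivative-free argument is preferred, I can instead compute, for $x<y$ in the same interval,
\[
f(y)-f(x)=\frac{(a_1b_2-a_2b_1)(y-x)}{(a_2x+b_2)(a_2y+b_2)}.
\]
Both denominator factors have the same sign on a pole-free interval, so their product is positive, which gives the same conclusion directly.
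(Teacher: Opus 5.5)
Your proposal is correct and follows the paper's own proof, which likewise computes $f'(x)=\frac{a_1b_2-a_2b_1}{(a_2x+b_2)^2}$ and reads off the sign. Your restriction to an interval not containing $x=-b_2/a_2$ is a sound tightening that the paper's statement leaves implicit: the claim can fail across the pole, e.g.\ for $f(x)=-1/x$. The restriction costs nothing in the paper's later uses, where the denominators are positive.
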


\begin{proof}
    We have $\frac{d f(x)}{d x} = \frac{a_1b_2 - a_2b_1}{(a_2x+b_2)^2}$.
    \begin{equation*}
        \text{Therefore,} \quad\quad \frac{d f(x)}{d x} \text{  } 
        \begin{cases} 
        \geq 0 & \text{; if }  a_1b_2 - a_2b_1 \geq 0, \\
        < 0 & \text{; otherwise.}
        \end{cases}
    \end{equation*}
This completes the proof.
\end{proof}

\section{Proof Theorem~\ref{thm:ratios}}\label{proof:thm:ratios}

First, let us formulate the optimization problem based on~\eqref{eqn:optimization_problem_over_C} as

\begin{equation} \label{eqn:exact_optimisation_alpha_beta}
\begin{split}
    \underset{G,G',S}{\textbf{maximize }} & \quad F(S,G,G', \varepsilon) = \frac{\sum_{i \in [b]\setminus S}G_i+e^{\varepsilon}\sum_{i \in S}G_i}{\sum_{i \in [b]\setminus S}G'_i+e^{\varepsilon}\sum_{i \in S}G'_i} \\
    \textbf{subject to }
    & \begin{split}
        \quad G_i = q_i G'_i; \quad  & \forall i \in [b],\ G_i \in G,\ G'_i \in G',\\[-0.5em]
        & q_i \in Q,\\
    \end{split}\\
    & \quad 0 < G_i,G'_i \leq 1; \quad \forall i \in[b],\\
    & \quad \sum_{i \in [b]} G_i = 1,\sum_{i \in [b]} G'_i = 1, S \subseteq [b].\\
\end{split}
\end{equation} 
Here, we need to maximize $F(\cdot)$ with respect to all $S,G,G'$. Next, we consider an alternative optimization problem~\eqref{eqn:alternative_optimization_problem_with_GG'_constraints} with relaxing the strict constraints $G_i = q_i G'_i$, $\forall i \in [b]$ in~\eqref{eqn:exact_optimisation_alpha_beta}. Also, let us assume that $S$ is fixed for the simplicity and $|S| = b-r$ (i.e., $|[b] \setminus S| = r$). Later, we show that the optimal result is independent of $S$.
\begin{equation} \label{eqn:alternative_optimization_problem_with_GG'_constraints}
\begin{split}
    \underset{G,G'}{\textbf{maximize }} & \quad F(S,G,G', \varepsilon) = \frac{\sum_{i \in [b]\setminus S}G_i+e^{\varepsilon}\sum_{i \in S}G_i}{\sum_{i \in [b]\setminus S}G'_i+e^{\varepsilon}\sum_{i \in S}G'_i} \\
    \textbf{subject to }
    & \quad G_i \leq e^{\alpha}G'_i;\quad \forall i \in [b], \alpha = \ln \max_{q \in Q} q,\\
    & \quad G'_i \leq e^{\gamma}G_i;\quad \forall i \in [b], \gamma =-\ln \min_{q \in Q} q,\\
    & \quad 0 <  G_i,G'_i \leq 1;\quad \forall i\in [b],\\
    & \quad \sum_{i \in [b]} G_i = 1,\quad\sum_{i \in [b]} G'_i = 1.
\end{split}
\end{equation} 

Next, we establish intermediate results: Proposition~\ref{prop:g_g'_boundaries} and Corollary~\ref{corollary:optimisation_problem_similar_t_equals_2} to prove Theorem~\ref{thm:ratios}.

\begin{proposition}\label{prop:g_g'_boundaries}
Using the same notation as above, $\linebreak\max_{G,G',\ G \oslash G' \in \tilde{Q}} F(S,G,G',\varepsilon) = \frac{e^{\alpha} - 1+e^{\alpha+\varepsilon}(e^{\gamma} - 1)}{e^{\varepsilon}(e^{\gamma} - 1)+ e^{\gamma}(e^{\alpha}-1)}$.
\end{proposition}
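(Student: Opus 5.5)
The plan is to exploit the fact that once every coordinate ratio is pinned to one of the two extreme values $e^{\alpha}$ or $e^{-\gamma}$, the normalisation constraints leave essentially one free quantity: the mass on the ``high-ratio'' coordinates. Fix $S$ and a pattern $\tilde Q\in\{e^\alpha,e^{-\gamma}\}^b$. Let $P\subseteq[b]$ be the indices with $\tilde Q_i=e^{\alpha}$, and set $A=\sum_{i\in P}G'_i$. Then $\sum_{i\in P}G_i=e^{\alpha}A$ and $\sum_{i\notin P}G_i=e^{-\gamma}(1-A)$. The constraint $\sum_i G_i=1$ then forces
\[
A=\frac{1-e^{-\gamma}}{e^{\alpha}-e^{-\gamma}}=\frac{e^{\gamma}-1}{e^{\alpha+\gamma}-1},
\]
independently of $b$, $S$ and the particular pattern, assuming the nondegenerate case $\alpha,\gamma>0$. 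The degenerate case $\alpha=\gamma=0$, i.e.\ $G=G'$ and $F\equiv 1$, is handled separately.

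Next I rewrite the objective using $\sum_i G_i=\sum_i G'_i=1$:
\[
F=\frac{1+(e^{\varepsilon}-1)g_S}{1+(e^{\varepsilon}-1)g'_S},\qquad g_S=\sum_{i\in S}G_i,\quad g'_S=\sum_{i\in S}G'_i .
\]
Split $g'_S=a+c$, where $a$ is the $G'$-mass of $S\cap P$ and $c$ is the $G'$-mass of $S\setminus P$. Then $g_S=e^{\alpha}a+e^{-\gamma}c$, with $0\le a\le A$ and $0\le c\le 1-A$.

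By Lemma~\ref{lemma:monotone_fn}, $F$ is nondecreasing in $a$, because adding mass with ratio $e^{\alpha}\ge F$ cannot lower the ratio. It is nonincreasing in $c$, because $e^{-\gamma}\le F$. Both facts follow from checking the sign of $a_1b_2-a_2b_1$, together with the observation that $F$ always lies in $[e^{-\gamma},e^{\alpha}]$ as a mediant-type average. Hence the maximum is approached at $a=A$, $c=0$, that is, when $S$ carries all of the high-ratio mass and none of the low-ratio mass. Since $\tilde Q$ ranges over all of $\{e^\alpha,e^{-\gamma}\}^b$, I can choose the pattern so that $P=S$, which realises $a=A$, $c=0$ with strictly positive entries. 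This requires $\emptyset\neq S\neq[b]$; otherwise $F=1$, which is dominated.

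Substituting $a=A$, $c=0$ gives
\[
F^\ast=\frac{1+(e^{\varepsilon}-1)e^{\alpha}A}{1+(e^{\varepsilon}-1)A}.
\]
Multiplying numerator and denominator by $e^{\alpha+\gamma}-1$ and expanding turns the numerator into $e^{\alpha}-1+e^{\alpha+\varepsilon}(e^{\gamma}-1)$ and the denominator into $e^{\varepsilon}(e^{\gamma}-1)+e^{\gamma}(e^{\alpha}-1)$, which is the stated value. This value depends on neither $S$ nor $b$, as the proposition requires.

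The main obstacle is the monotonicity step together with the attainability bookkeeping. One must verify that $F$ is genuinely bounded between $e^{-\gamma}$ and $e^{\alpha}$, so that the two monotonicity directions hold throughout the box, not just at one point. One must also make sure that the optimum does not need a zero entry; the strict constraint $0<G_i$ is the issue. If a zero entry were needed, I would instead argue by a limit, spreading vanishing mass, and note that the supremum is unchanged.
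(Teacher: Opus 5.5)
Your proof is correct, but it follows a different route from the paper's.

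\textbf{The paper's route.} The paper never restricts to two-valued ratio vectors. It solves the interval relaxation, which imposes only $G_i\le e^{\alpha}G'_i$ and $G'_i\le e^{\gamma}G_i$. It collapses the mass on $S$ into a single pair $(g,g')$ with ratio $q\le e^{\alpha}$. Summing the complementary constraint over $[b]\setminus S$ gives $g'\le \frac{e^{\gamma}-1}{(qe^{\gamma}-1)(b-r)}$. Monotonicity, first in $g'$ and then in $q$, ends at $q=e^{\alpha}$ with the complement tight at ratio $e^{-\gamma}$.

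\textbf{Your route.} You fix a two-valued pattern and note that normalization forces the high-ratio $G'$-mass to be $A=\frac{e^{\gamma}-1}{e^{\alpha+\gamma}-1}$. You then run monotonicity in the split $(a,c)$. Both arguments reach the same optimizer: $P=S$ with $\sum_{i\in S}G'_i=A$.

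\textbf{What each buys.} Your version is more careful on points the paper glosses over. It handles attainability with strictly positive entries and the degenerate cases $S\in\{\emptyset,[b]\}$ and $\alpha=\gamma=0$. It also does not rely on the paper's unproved symmetrization $G_i=g$, $G'_i=g'$ on $S$.

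The paper's version proves something stronger. Because it works with interval constraints, it shows that $F$ is at most the stated value (call it $F^\star$) for every pair whose ratios merely lie in $[e^{-\gamma},e^{\alpha}]$. This is what supplies the upper-bound half of Theorem~\ref{thm:ratios}, namely $\max_{\mathcal H}\operatorname{CPL}\le \ln F^\star$ for an arbitrary ratio vector $Q$. A pattern-only argument does not give this.

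To make your proof serve the same purpose, replace your equalities with the aggregated inequalities $g_S\le e^{\alpha}g'_S$ and $1-g'_S\le e^{\gamma}(1-g_S)$. The upper envelope for $g_S$ then has its kink exactly at $g'_S=A$. $F$ increases along the first piece and decreases along the second, so the same value results.
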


\begin{proof}
Let us take $F^\star$ as the optimal solution for \eqref{eqn:alternative_optimization_problem_with_GG'_constraints}.

Since $\sum_{i \in [b]}G_i= 1$ and $\sum_{i \in [b]} G'_i = 1$, we have
\begin{equation}
    F(S,G,G',\varepsilon) = \frac{
    1-\sum_{j \in S} G_j+e^{\varepsilon}\sum_{j \in S}G_j}{1-\sum_{j \in S} G'_j+e^{\varepsilon}\sum_{j \in S} G'_j}.
\end{equation}
Moreover, we can take $G_i=g$ and $G'_i=g'$ for all $i \in S$ as it does not change the optimal solution for~\eqref{eqn:alternative_optimization_problem_with_GG'_constraints}. Then we can write $F(S,G,G',\varepsilon)$ in a form of $U(g,g',\varepsilon,b,r)$ as
\begin{equation}
    U(g,g',\varepsilon,b,r) = \frac{1+(b-r)(e^\varepsilon-1)g}{1 + (b-r)(e^{\varepsilon}-1)g'}.
\end{equation}
Next, let us take the ratio $\frac{g}{g'} = q$. Here, $q$ should be $q>1$ to maximize $U(\cdot)$. Then we can write $U(g,g',\varepsilon,b,r)$ in a form of $V(q,g',\varepsilon,b,r)$ as
\begin{equation}
    V(q,g',\varepsilon,b,r) = \frac{1+(b-r)(e^\varepsilon-1)qg'}{1 + (b-r)(e^{\varepsilon}-1)g'}.
\end{equation}
Here, we know that $V(\cdot)$ monotonically increases with $g'$ from Lemma~\ref{lemma:monotone_fn} as $(b-r)(e^\varepsilon-1)(q-1) \geq 0$.
We can calculate the feasible $g'$ values based on the constraint $G'_i \leq e^{\gamma}G_i$ for all $i\in [b]$ in~\eqref{eqn:alternative_optimization_problem_with_GG'_constraints}. From this constraint, we have 
\begin{equation}
\sum_{i \in [b]\setminus S} G_i' \leq e^\gamma \sum_{i \in [b]\setminus S} G_i. 
\end{equation}
Here, $\sum_{i \in [b]\setminus S} G_i'$ and $\sum_{i \in [b]\setminus S} G_i$ can be written using terms of $g'$ and $q$ as $\sum_{i \in [t]\setminus S} G_i'=1-(b-r)g'$ and $\sum_{i \in [b]\setminus S} G_i=1-(b-r)qg'$. Therefore, we have 
\begin{equation}
    0 \leq g' \leq \frac{e^{\gamma}-1}{(qe^{\gamma}-1) (b-r)}.
\end{equation}
Note that the upper bound of $g'$ monotonically decreases as $q$ increases. Moreover, $V(\cdot)$ monotonically increases with $q$ as $\linebreak\frac{(b-r)(e^\varepsilon-1)g'}{1 + (b-r)(e^{\varepsilon}-1)g'} >0$. Therefore, to check which $g'$ and $q$ values maximize $V(\cdot)$, let us compute $V(q,g'=\frac{e^{\gamma}-1}{(qe^{\gamma}-1) (b-r)}, \varepsilon,b,r)$ as
\begin{equation}
    V(\cdot) = \frac{-1+ \bigl(1+e^\varepsilon (e^\gamma -1)\bigr)q}{e^\varepsilon (e^\gamma -1)-e^\gamma + e^\gamma q}.
\end{equation}
Here, we know that $V(q,g'=\frac{e^{\gamma}-1}{(qe^{\gamma}-1) (b-r)}, \varepsilon,b,r)$ monotonically increases with $q$ from Lemma~\ref{lemma:monotone_fn} as $ \bigl(1+e^\varepsilon (e^\gamma -1)\bigr)\bigl(e^\varepsilon (e^\gamma - 1) - e^\gamma \bigr) - e^\gamma= e^\varepsilon(e^\varepsilon - 1)(e^\gamma - 1)^2\geq 0$. Since $q \in \{1,e^\alpha\}$, the optimal solution $F^\star$ is given by $q = e^\alpha$ and $g' = \frac{e^{\gamma}-1}{(qe^{\gamma}-1) (b-r)}$,
\begin{equation}\label{eqn:H*_value}
    F^\star = \frac{e^{\alpha} - 1+e^{\alpha+\varepsilon}(e^{\gamma} - 1)}{e^{\varepsilon}(e^{\gamma} - 1)+ e^{\gamma}(e^{\alpha}-1)}.
\end{equation}
Therefore, the optimal solution for~\eqref{eqn:alternative_optimization_problem_with_GG'_constraints} is given by when $\frac{G}{G'}\in \{e^\alpha, e^{-\gamma}\}^b$. Thus $\max_{G,G', G \oslash G' \in \tilde{Q}} F(S,G,G',\varepsilon) = F^\star$. This completes the proof of Proposition~\ref{prop:g_g'_boundaries}.
\end{proof}

Based on Proposition~\ref{prop:g_g'_boundaries}, we can state the following Corollary~\ref{corollary:optimisation_problem_similar_t_equals_2}, which is required to prove Theorem~\ref{thm:ratios}.

\begin{corollary}\label{corollary:optimisation_problem_similar_t_equals_2}
    When $b = 2$, the solution for the optimization problem~\eqref{eqn:exact_optimisation_alpha_beta} can be solved with Proposition~\ref{prop:g_g'_boundaries}.
\end{corollary}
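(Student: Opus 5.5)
The plan is to show that for $b=2$ the exact problem~\eqref{eqn:exact_optimisation_alpha_beta} has essentially no freedom left in $(G,G')$. Its ratio vector $Q$ automatically lies in $\{e^\alpha,e^{-\gamma}\}^2$, and its unique feasible pair is exactly the maximizer found in Proposition~\ref{prop:g_g'_boundaries}. The optimal value of~\eqref{eqn:exact_optimisation_alpha_beta} is then the closed form $F^\star$ in~\eqref{eqn:H*_value}.

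\textbf{Step 1: the ratios are the extremes.} Write $Q=(q_1,q_2)$ with $G_i=q_iG'_i$. Summing gives $q_1G'_1+q_2G'_2=1=G'_1+G'_2$, so $(q_1-1)G'_1+(q_2-1)G'_2=0$. Since $G'_1,G'_2>0$, either $q_1=q_2=1$, or one ratio exceeds $1$ and the other is below $1$. In the degenerate case $G=G'$, so $F\equiv 1$ and the CPL is $0$. This matches the formula as a limit with $\alpha=\gamma=0$, and I would treat it separately. Otherwise, after relabeling, $q_1=e^\alpha>1>q_2=e^{-\gamma}$. Hence $Q$ itself is a vector in $\{e^\alpha,e^{-\gamma}\}^2$, i.e.\ $Q$ is one admissible choice of $\tilde Q$ and $\mathcal H=\tilde{\mathcal H}$ for this $Q$.

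\textbf{Step 2: the feasible set is a single point.} From $e^\alpha G'_1+e^{-\gamma}(1-G'_1)=1$ I solve $G'_1=\frac{1-e^{-\gamma}}{e^\alpha-e^{-\gamma}}=\frac{e^\gamma-1}{e^{\alpha+\gamma}-1}$, with $G_1=e^\alpha G'_1$. I will check that this equals the maximizing value $g'=\frac{e^\gamma-1}{(qe^\gamma-1)(b-r)}$ from Proposition~\ref{prop:g_g'_boundaries}, taken at $q=e^\alpha$, $b=2$, $S=\{1\}$, $r=1$. Substituting into $F$ then reproduces~\eqref{eqn:H*_value}, either directly or by that Proposition.

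\textbf{Step 3: optimize over $S$.} For $S=\emptyset$ or $S=[2]$, $F=1$. For $S=\{2\}$, $F$ is a weighted combination dominated by the ratio $e^{-\gamma}<1$, so $F\le 1$. This follows from Lemma~\ref{lemma:monotone_fn}, or simply because $G_2<G'_2$ implies $F<1$. For $S=\{1\}$, $F=F^\star\ge 1$. Hence the maximum over $S$ is $F^\star$.

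I expect the main obstacle to be bookkeeping rather than substance. Specifically, I must confirm that Proposition~\ref{prop:g_g'_boundaries}'s relaxed optimum is attained at a feasible point of the exact problem, which Step 2 settles. I must also handle the degenerate $q_1=q_2=1$ case so that the closed form is interpreted consistently.
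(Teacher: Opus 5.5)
Your proposal is correct and follows the paper's own argument. That argument is just the observation that for $b=2$ the ratio vector $Q$ already lies in $\{e^\alpha,e^{-\gamma}\}^2$, so Proposition~\ref{prop:g_g'_boundaries} applies directly. Your additional steps only make explicit, and self-contained, what the paper's one-line proof leaves implicit: that the two ratios must straddle $1$, that the unique feasible pair coincides with the Proposition's maximizer $g'=\frac{e^\gamma-1}{e^{\alpha+\gamma}-1}$, and that $S=\{1\}$ is the optimal subset.
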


\begin{proof}
    The constraint $G_i = q_i G'_i$ in~\eqref{eqn:exact_optimisation_alpha_beta} becomes $\tilde{Q}$ for $b=2$. Therefore, we can compute the optimal solution of~\eqref{eqn:exact_optimisation_alpha_beta} using the Proposition~\ref{prop:g_g'_boundaries}.
    
    This completes the proof of Corollary~\ref{corollary:optimisation_problem_similar_t_equals_2}.
\end{proof}

\noindent\textbf{Proof of Theorem~\ref{thm:ratios}}

\begin{proof}
Next, let us move to the proof of Theorem~\ref{thm:ratios}. We can find a scenario where all $G_i$ and $G_i'$ values go to zero except the value corresponding to the maximum ratio and minimum ratio (i.e., $e^{-\gamma}$ and $e^\alpha$) for any $G,G'$ without violating the constraints in~\eqref{eqn:exact_optimisation_alpha_beta}. In such a case, the impact on the objective function of~\eqref{eqn:exact_optimisation_alpha_beta} is equivalent to $b=2$. Then, from Corollary~\ref{corollary:optimisation_problem_similar_t_equals_2}, the solution for~\eqref{eqn:exact_optimisation_alpha_beta} equals to~\eqref{eqn:H*_value} (i.e., Proposition~\ref{prop:g_g'_boundaries}). Therefore, the \emph{correlation-induced privacy leakage (CPL)} satisfies $\max_{(G,G')\in\mathcal{H}}\operatorname{CPL}(G,G')=\max_{(G,G')\in\mathcal{\tilde{H}}}\operatorname{CPL}(G,G')= \ln\frac{e^{\alpha} - 1+e^{\alpha+\varepsilon}(e^{\gamma} - 1)}{e^{\varepsilon}(e^{\gamma} - 1)+ e^{\gamma}(e^{\alpha}-1)}$.

This completes the proof of Theorem~\ref{thm:ratios}.
\end{proof}

\section{Proof of Corollary~\ref{corollary:alpha_greater_than_beta}}\label{proof:corollary:alpha_greater_than_beta}

\begin{proof}
    Let $U_1$ and $U_2$ be estimated CPL values corresponds to $(\alpha_1,\gamma_1)$ and $(\alpha_2,\gamma_2)$ respectively.
    \begin{align*}
    U_1 &= \frac{
        e^{\alpha_1} - 1+e^{\alpha_1+\varepsilon}(e^{\gamma_1} - 1)}
        {e^{\varepsilon}(e^{\gamma_1} - 1)+ e^{\gamma_1}(e^{\alpha_1}-1)}\\
    U_2 &= \frac{
        e^{\alpha_2} - 1+e^{\alpha_2+\varepsilon}(e^{\gamma_2} - 1)}
        {e^{\varepsilon}(e^{\gamma_2} - 1)+ e^{\gamma_2}(e^{\alpha_2}-1)}\\
        &\text{Since $\gamma_2=\alpha_1$ and $\alpha_2=\gamma_1$},\\
        &= \frac{
        e^{\gamma_1} - 1+e^{\gamma_1+\varepsilon}(e^{\alpha_1} - 1)}
        {e^{\varepsilon}(e^{\alpha_1} - 1)+ e^{\alpha_1}(e^{\gamma_1}-1)}
    \end{align*}
    Next, let us evaluate $U_1 - U_2$
\begin{equation*}
    \begin{split}
    &= \frac{
        e^{\alpha_1} - 1+e^{\alpha_1+\varepsilon}(e^{\gamma_1} - 1)}
        {\lambda_1} - \frac{e^{\gamma_1} - 1+e^{\gamma_1+\varepsilon}(e^{\alpha_1} - 1)}
        {\lambda_2}\\
    &= \frac{(-1 + e^{\alpha_1})(e^{\alpha_1} - e^{\gamma_1})(-1 + e^{\gamma_1})(-1 + e^{\varepsilon})^2}{\lambda_1 \lambda_2}.
    \end{split}
    \end{equation*}
    Here, $\lambda_1 = e^{\varepsilon}(e^{\gamma_1} - 1)+ e^{\gamma_1}(e^{\alpha_1}-1)$ and $\lambda_2 =e^{\varepsilon}(e^{\alpha_1} - 1)+ e^{\alpha_1}(e^{\gamma_1}-1)$.
    Since $\alpha_1 \geq 1$ and $ \gamma_1 \geq 1$, denominator is positive (i.e., $\lambda_1, \lambda_2 > 0$). Therefore, if and only if $\alpha_1 \geq \gamma_1$ then $U_1 - U_2 \geq 0$. This completes the proof.
\end{proof}

\section{Proof of Theorem~\ref{thm:alpha_beta_delta_apl}}\label{proof:thm:alpha_beta_delta_apl}

\begin{proof}
    Based on new constraints in~\eqref{eqn:delta_constraints}, we can formulate the updated optimization problem as
\begin{equation} \label{eqn:new_reformed_optimization_problem_with_GG'_constraints}
\begin{split}
    \underset{S,G,G'}{\textbf{maximize }} & \quad F(S,G,G', \varepsilon) = \frac{1+\sum_{i \in S}G_i (e^{\varepsilon}-1)}{1+\sum_{i \in S}G'_i(e^{\varepsilon}-1)} \\
    \textbf{subject to }
    & \quad G_i \leq e^{\alpha}G'_i+\tilde{\delta}_i; \forall i\in [b],\\
    & \quad G'_i \leq e^{\gamma}G_i+\hat{\delta}_i; \forall i\in [b],\\
    & \quad 0 < G_i,G'_i \leq 1; \forall i\in [b],\\
    & \quad \sum_{i \in [b]} G_i=1, \sum_{i \in [b]} G'_i=1,S\subseteq [b].\\
\end{split}
\end{equation} 

\subsection{Simplifying the Problem With Fixed \texorpdfstring{$S$}{S}}
First, let us fix $S$ (see~\eqref{eqn:new_reformed_optimization_problem_with_GG'_fixed_s}) and solve the above optimization. Next, we repeat that for all $S\subseteq [b]$ and find the maximum value as the solution for~\eqref{eqn:new_reformed_optimization_problem_with_GG'_constraints}.
\begin{equation} \label{eqn:new_reformed_optimization_problem_with_GG'_fixed_s}
\begin{split}
    \underset{G,G'}{\textbf{maximize }} & \quad F(S,G,G', \varepsilon) = \frac{1+\sum_{i \in S}G_i (e^{\varepsilon}-1)}{1+\sum_{i \in S}G'_i(e^{\varepsilon}-1)} \\
    \textbf{subject to }
    & \quad G_i \leq e^{\alpha}G'_i+\tilde{\delta}_i; \forall i\in [b],\\
    & \quad G'_i \leq e^{\gamma}G_i+\hat{\delta}_i; \forall i\in [b],\\
    & \quad 0 \leq G_i,G'_i \leq 1; \forall i\in [b],\\
    & \quad \sum_{i \in [b]} G_i=1, \sum_{i \in [b]} G'_i=1,S\subseteq [b].\\
\end{split}
\end{equation} 

Let $g=\sum_{i\in S}G_i$ and $g'=\sum_{i\in S}G'_i$. Then, we can write the objective function as 
\begin{equation}
     F(\cdot) = \frac{1+g (e^{\varepsilon}-1)}{1+g'(e^{\varepsilon}-1)}.
\end{equation}
Since the objective function is monotonically increasing with respect to $g$, maximizing $F(\cdot)$ requires maximizing $g$. We therefore consider an upper bound function $V(\cdot)$ of $F(\cdot)$, obtained by replacing $g$ with its upper bound $\overline{g}$. Let $V^\star$ denote the maximum value of $V(\cdot)$. We then show that there always exist attainable vectors $G$ and $G'$ such that $F(\cdot) = V^\star$, which implies the optimality of $V^\star$ for fixed $S$.

From the constraint $G_i \leq e^{\alpha}G'_i+\tilde{\delta}_i,\ \forall G_i \in G, G'_i \in G'$, we have
\begin{equation}
\begin{split}
    \sum_{i\in [S]} G_i &\leq e^{\alpha} \sum_{i \in [S]} G'_i+\sum_{i \in [S]} \tilde{\delta}_i, \\
    g &\leq e^\alpha g' +\tilde{\delta}.
\end{split}
\end{equation}
Here, $\sum_{i\in S}\tilde{\delta}_i =\tilde{\delta}$.

From the constraint $G'_i \leq e^{\gamma}G'_i+\hat{\delta}_i,\ \forall G_i \in G, G'_i \in G'$, we have
\begin{equation}
\begin{split}
    \sum_{i\in [b]\setminus S} G'_i &\leq e^{\gamma} \sum_{i \in [b]\setminus S} G_i+\sum_{i \in [b]\setminus S} \hat{\delta}_i \\
    1-g' &\leq e^\gamma(1-g) +\hat{\delta}\\
    g &\leq \frac{e^\gamma + \hat{\delta} + g' - 1}{e^\gamma}.
\end{split}
\end{equation}
Here, $\sum_{i\in [b]\setminus S}\hat{\delta}_i =\hat{\delta}$.

There are two upper bounds for $g$ (i.e., $\overline{g}$) in terms of $g'$: $g \leq \min \{\frac{e^\gamma + \hat{\delta} + g' - 1}{e^\gamma}, \ e^\alpha g' +\tilde{\delta}\}$.

Then, we can write an upper bound $V(g',\delta,\varepsilon,\alpha)$ for $F(\cdot)$ such that
\begin{equation}
    V(g',\delta,\varepsilon,\alpha) = \frac{1 + \overline{g}(e^\varepsilon-1)}{1+ g'(e^\varepsilon-1)},
\end{equation}
where $\overline{g} = \min \{ \frac{e^\gamma + \hat{\delta} + g' - 1}{e^\gamma}, e^\alpha g' +\tilde{\delta}\}$.
Next, we compute $V^\star$, which is the maximum value of $V(\cdot)$ for $g' \in [0,1]$. 

\subsubsection{Compute $V^\star$}

We check the valid value range of $g'$ and how $V(\cdot)$ behaves with $g'$ for $\overline{g} =\frac{e^\gamma + \hat{\delta} + g' - 1}{e^\gamma}$ and $\overline{g} = e^\alpha g' +\tilde{\delta}$.


\noindent $\bullet$ If $\overline{g} = \frac{e^\gamma + \hat{\delta} + g' - 1}{e^\gamma}$ then 
\begin{equation}
    V(g',\delta,\varepsilon,\alpha) = \frac{1+e^{-\gamma}(e^\gamma + \hat{\delta} + g' - 1)(e^\varepsilon-1)}{1+ g'(e^\varepsilon-1)}.
\end{equation}
Since $e^{-\gamma}(e^\varepsilon-1)-(e^\varepsilon-1)(1+e^{-\gamma}(e^\gamma+\hat{\delta}-1)(e^\varepsilon-1))=-e^{-\gamma}(e^\varepsilon-1)(e^\varepsilon(e^\gamma-1)+\hat{\delta}(e^\varepsilon-1)) < 0$, $V(g',\delta,\varepsilon,\alpha)$ monotonically decreases with $g'$ as per Lemma~\ref{lemma:monotone_fn}. Further, we have
\begin{equation}
\begin{split}
    g \leq \frac{e^\gamma + \hat{\delta} + g' - 1}{e^\gamma} &\leq e^{\alpha}g'+\tilde{\delta}\\
    g' &\geq \frac{e^\gamma-1 - \tilde{\delta}e^\gamma +\hat{\delta}}{e^{\alpha+\gamma}-1}.    
\end{split}
\end{equation}
Therefore, $\frac{e^\gamma + \hat{\delta} + g' - 1}{e^\gamma} \leq e^{\alpha}g'+\tilde{\delta}$ holds for $g' \in [\frac{e^\gamma-1 - \tilde{\delta}e^\gamma +\hat{\delta}}{e^{\alpha+\gamma}-1} ,1]$ and $g'=\max\{0,\frac{e^\gamma-1 - \tilde{\delta}e^\gamma +\hat{\delta}}{e^{\alpha+\gamma}-1}\}$ maximizes $V$.

\noindent $\bullet$ If $\overline{g} = e^\alpha g' +\tilde{\delta}$ then 
\begin{equation}
    V(g',\delta,\varepsilon,\alpha) = \frac{1+(e^\alpha g' +\tilde{\delta})(e^\varepsilon-1)}{1+ g'(e^\varepsilon-1)}.
\end{equation}
Since $e^\alpha(e^\varepsilon-1)-(e^\varepsilon-1)(1+\tilde{\delta}(e^\varepsilon-1))=(e^\varepsilon-1)(e^\alpha-1-\tilde{\delta}(e^\varepsilon-1))$, from Lemma~\ref{lemma:monotone_fn}, we know that $V(\cdot)$ monotonically increases with $g'$ if $e^\alpha-1-\tilde{\delta}(e^\varepsilon-1) \geq 0$. Otherwise $V(\cdot)$ monotonically decreases with $g'$. Further, we have
\begin{equation}
\begin{split}
    g \leq e^{\alpha}g'+\tilde{\delta} &\leq \frac{e^\gamma + \hat{\delta} + g' - 1}{e^\gamma} \\
    g' &\leq \frac{e^\gamma-1 - \tilde{\delta}e^\gamma +\hat{\delta}}{e^{\alpha+\gamma}-1}.    
\end{split}
\end{equation}
Therefore, $e^{\alpha}g'+\tilde{\delta} \leq \frac{e^\gamma + \hat{\delta} + g' - 1}{e^\gamma}$ holds for $g' \in [0,\frac{e^\gamma-1 - \tilde{\delta}e^\gamma +\hat{\delta}}{e^{\alpha+\gamma}-1}]$. If $e^\alpha-1-\tilde{\delta}(e^\varepsilon-1) < 0$, then $g'=0$ gives the maximum $V$. Otherwise $g'=\min\{\frac{1-\tilde{\delta}}{e^\alpha},\frac{e^\gamma-1 - \tilde{\delta}e^\gamma +\hat{\delta}}{e^{\alpha+\gamma}-1}\}$ gives the maximum $V$, as $g=1$ when $g' = \frac{1-\tilde{\delta}}{e^\alpha}$. 

Interestingly, $e^\alpha g' +\tilde{\delta}$ and $\frac{e^\gamma + \hat{\delta} + g' - 1}{e^\gamma}$ become upper bounds for $g$ in the mutually exclusive $g'$ range. 
\begin{enumerate}
    \item If $g' \in [0,\frac{e^\gamma-1 - \tilde{\delta}e^\gamma +\hat{\delta}}{e^{\alpha+\gamma}-1}]$ then $\overline{g} = e^\alpha g' +\tilde{\delta}$.
    \item If $g' \in [\frac{e^\gamma-1 - \tilde{\delta}e^\gamma +\hat{\delta}}{e^{\alpha+\gamma}-1}, 1]$ then $\overline{g} = \frac{e^\gamma + \hat{\delta} + g' - 1}{e^\gamma}$.
\end{enumerate}


Furthermore, $e^\alpha g' +\tilde{\delta}=\frac{e^\gamma + \hat{\delta} + g' - 1}{e^\gamma}$ at $g'=\frac{e^\gamma-1 - \tilde{\delta}e^\gamma +\hat{\delta}}{e^{\alpha+\gamma}-1}$. Therefore, we can compute $V^\star$ based on the value of $\frac{e^\gamma-1 - \tilde{\delta}e^\gamma +\hat{\delta}}{e^{\alpha+\gamma}-1}$ under three cases.  

\textbf{Case 1 - ($\frac{e^\gamma-1 - \tilde{\delta}e^\gamma +\hat{\delta}}{e^{\alpha+\gamma}-1} \leq 0$)}
Since $g' \geq 0$, the feasible upper bound is $\overline{g} = \frac{e^\gamma + \hat{\delta} + g' - 1}{e^\gamma}$. Therefore, if $\frac{e^\gamma-1 - \tilde{\delta}e^\gamma +\hat{\delta}}{e^{\alpha+\gamma}-1} \leq 0$ then $g'=0$ gives the maximum $V(\cdot)$ as
\begin{equation}
    V^\star=1+e^{-\gamma}(e^\gamma + \hat{\delta}  - 1)(e^\varepsilon-1).
\end{equation}

\textbf{Case 2 - ($0 < \frac{e^\gamma-1 - \tilde{\delta}e^\gamma +\hat{\delta}}{e^{\alpha+\gamma}-1} \leq \frac{1-\tilde{\delta}}{e^\alpha}$)}  When $ e^\alpha-1-\tilde{\delta}(e^\varepsilon-1) \geq 0$ both $\overline{g} = e^\alpha g' + \tilde{\delta}$ and $\overline{g} = \frac{e^\gamma + \hat{\delta} + g' - 1}{e^\gamma}$ have the same maximum $V(\cdot)$ at $g'=\frac{e^\gamma-1 - \tilde{\delta}e^\gamma +\hat{\delta}}{e^{\alpha+\gamma}-1}$ as $e^\alpha g' +\tilde{\delta}=\frac{e^\gamma + \hat{\delta} + g' - 1}{e^\gamma}$. If $ e^\alpha-1-\tilde{\delta}(e^\varepsilon-1) < 0$ then  the maximum $V(\cdot)$ of $\overline{g} = e^\alpha g' +\tilde{\delta}$ is $1+\tilde{\delta}(e^\varepsilon-1)$ which is larger than the maximum $V(\cdot)$ given by $\overline{g} = \frac{e^\gamma + \hat{\delta} + g' - 1}{e^\gamma}$.
Therefore, 
\begin{equation}\label{eqn:V_upper_bound2}
    V^\star=
    \begin{cases} 
     1+\tilde{\delta}(e^\varepsilon-1)
    \text{; if }  e^\alpha-1-\tilde{\delta}(e^\varepsilon-1) < 0, \\
     \frac{
 e^{\alpha+\gamma+\varepsilon} - e^{\alpha}(-1+\hat{\delta} ) + e^{\alpha+\varepsilon}(-1+\hat{\delta} )+ \tilde{\delta} - e^{\varepsilon}\tilde{\delta}-1}{-\hat{\delta} + e^{\varepsilon}\left(-1+\hat{\delta} - e^{\gamma}(-1+\tilde{\delta})\right)+ e^{\gamma}\left(-1 + e^{\alpha} + \tilde{\delta}\right)}\text{; Otherwise.}
    \end{cases}
\end{equation}

\textbf{Case 3 - ($\frac{1-\tilde{\delta}}{e^\alpha} < \frac{e^\gamma-1 - \tilde{\delta}e^\gamma +\hat{\delta}}{e^{\alpha+\gamma}-1}$)} Here, both $\overline{g}=e^\alpha g' +\tilde{\delta}$ and $\overline{g}=\frac{e^\gamma + \hat{\delta} + g' - 1}{e^\gamma}$ become $>1$ when $g' > \frac{1-\tilde{\delta}}{e^\alpha}$. However, we know that $\max g = 1$. Therefore, $\overline{g}=e^\alpha g' +\tilde{\delta}$ gives more tight upper bound for $g$ for this case. Therefore,
\begin{equation}\label{eqn:V_upper_bound}
    V^\star=
    \begin{cases} 
     1+\tilde{\delta}(e^\varepsilon-1)
    \text{; if }  e^\alpha-1-\tilde{\delta}(e^\varepsilon-1) < 0, \\
      \frac{e^{\varepsilon+\alpha}}{e^\alpha+ (1-\tilde{\delta})(e^\varepsilon-1)}\text{; Otherwise.}
    \end{cases}
\end{equation}



\subsubsection{Optimality of $V^\star$}
Next, let us evaluate the optimality of $V^\star$. 
Interestingly, each of the above cases has attainable $G$ and $G'$. 



For example, consider the case 
\[
\frac{e^{\gamma} - 1 - \tilde{\delta} e^{\gamma} + \hat{\delta}}
     {e^{\alpha + \gamma} - 1} \le 0.
\]
In this scenario, we select $G'_i$ for all $i \in [b] \setminus S$ as 
\[
G'_i = e^{\gamma} G_i + \hat{\delta}_i,
\]
such that $\sum_{i \in [b] \setminus S} G'_i = 1$ and $\sum_{i \in S} G'_i = 0$. 
Accordingly, for all $i \in S$, we set $G_i \le \tilde{\delta}_i$ to ensure
\[
\sum_{i \in S} G_i = \frac{e^{\gamma} + \hat{\delta} - 1}{e^{\gamma}}
\quad \text{and} \quad
\sum_{i \in [b] \setminus S} G_i = \frac{1 - \hat{\delta}}{e^{\gamma}}.
\]
Since these assignments satisfy the constraints in~\eqref{eqn:new_reformed_optimization_problem_with_GG'_fixed_s}, 
there exists at least one feasible pair $(G_i, G'_i)$ for all $i \in [b]$. 
Substituting these values yields $F(\cdot) = V^\star$. 

A similar construction shows that feasible $(G_i, G'_i)$ exist for the remaining two cases: $0 < \frac{e^{\gamma} - 1 - \tilde{\delta} e^{\gamma} + \hat{\delta}}{e^{\alpha + \gamma} - 1}\leq \frac{1 - \tilde{\delta}}{e^{\alpha}}\text{ and }\frac{1 - \tilde{\delta}}{e^{\alpha}} <\frac{e^{\gamma} - 1 - \tilde{\delta} e^{\gamma} + \hat{\delta}}{e^{\alpha + \gamma} - 1}$.
Therefore, $V^\star$ serves as an attainable upper bound for $F(\cdot)$, and there exist feasible vectors 
$G$ and $G'$ such that $F(\cdot) = V^\star$. 
Hence, $V^\star$ is the optimal value of the problem in~\eqref{eqn:new_reformed_optimization_problem_with_GG'_fixed_s}.

\subsection{Variable \texorpdfstring{$S$}{S}}

As we defined $\tilde{\delta}$ and $\hat{\delta}$ in~\eqref{eqn:define_delta}, we have $\tilde{\delta_i} > 0 \implies \hat{\delta_i}=0$ and $\hat{\delta_i} > 0 \implies \tilde{\delta_i}=0$ for all $i\in[b]$.
Next, let us consider the variable $S$ scenario. Here, we need to obtain the maximum $V(\cdot)$ for all $S\subseteq [b]$. Note that, only $\tilde{\delta}$ and $\hat{\delta}$ depends on $S$. 
 Since $V(\cdot)$ is monotonically increases with $\tilde{\delta}$ and $\hat{\delta}$ as stated in Corollary~\ref{corollary:monotone_V_over_deltas}. Therefore, the maximum $V(\cdot)$ is given by any subset $S \subseteq [b]$ when $\sum_{i\in S} \tilde{\delta} = \sum_{i\in [b]} \tilde{\delta} $ and $\sum_{i\in [b]\setminus S} \hat{\delta} = \sum_{i\in [b]} \hat{\delta}$. Therefore, at the maximum of $V(\cdot)$, we have $\tilde{\delta}=\sum_{i\in[b] }\tilde{\delta}_i $ and $\hat{\delta}=\sum_{i\in[b] }\hat{\delta}_i $. 

For the simplicity let us take a single parameter $\delta$ instead of two parameters $\hat{\delta}$ and $\tilde{\delta}$ such that $\delta = \max\{\hat{\delta},\tilde{\delta}\}$. This $\delta$ holds the constraints in~\eqref{eqn:new_reformed_optimization_problem_with_GG'_constraints}. Furthermore, this simplifies the optimal results as Case 1 and Case 3 are not feasible. 
\begin{align*}
    \hat{\delta} = \tilde{\delta}=\delta &\implies \frac{e^\gamma-1 - \tilde{\delta}e^\gamma +\hat{\delta}}{e^{\alpha+\gamma}-1}  = \frac{(e^\gamma-1) (1-\delta)}{e^{\alpha+\gamma}-1} \geq 0,\\
    \hat{\delta} = \tilde{\delta}=\delta &\implies \frac{1-\tilde{\delta}}{e^\alpha} - \frac{e^\gamma-1 - \tilde{\delta}e^\gamma +\hat{\delta}}{e^{\alpha+\gamma}-1}\\
    &\qquad\ =\frac{(e^\alpha-1)(1-\delta)}{e^\alpha (e^{\alpha+\gamma} - 1)} \geq 0.
\end{align*}
 Therefore, the maximum of $V(\cdot)$ (i.e., $\operatorname{CPL(\alpha,\gamma,\delta)}$) is given by
\begin{equation}
    \operatorname{CPL}=
    \begin{cases} 
     \ln \bigl(1+\delta(e^\varepsilon-1)\bigr)
    \text{; if }  e^\alpha-1-\delta(e^\varepsilon-1) < 0, \\
     \ln \left(\frac{e^{\alpha} + \delta - e^{\alpha}\delta - \delta e^{\varepsilon}
      + e^{\alpha+\varepsilon}(-1 + e^{\gamma} + \delta)-1} {\delta(e^{\varepsilon}-1) - e^{\varepsilon}
      + e^{\gamma}(-1 + e^{\alpha} + \delta + e^{\varepsilon} - \delta e^{\varepsilon})}\right)\text{; otherwise.}
    \end{cases}
\end{equation}

In addition, $\operatorname{CPL(\alpha,\gamma,\delta)}$ can be characterised as 
\begin{equation}\label{eqn:alternative_rep}
    \operatorname{CPL(\alpha,\gamma, \delta)} = \frac{1+(e^\alpha g' + \delta)(e^\varepsilon-1)}{1+g'(e^\varepsilon-1)}.
\end{equation}
This result is used to prove Proposition~\ref{prop:g_prime} in Section~\ref{proof:thm:beta}.

This completes the proof of Theorem~\ref{thm:alpha_beta_delta_apl}.
\end{proof}

\begin{corollary}\label{corollary:monotone_V_over_deltas}
    $V^\star$ monotonically increases with both $\tilde{\delta}$ and $\hat{\delta}$. 
\end{corollary}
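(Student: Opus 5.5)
The plan is to verify monotonicity branch by branch on $V^\star$ as computed in the three cases of the proof of Theorem~\ref{thm:alpha_beta_delta_apl}. We treat $\varepsilon,\alpha,\gamma$ as fixed and regard $V^\star$ as a function of $(\tilde{\delta},\hat{\delta})$. The cleanest route avoids differentiating the closed forms directly and argues at the level of the relaxed problem. Write
\[
V^\star=\max_{g'\in[0,1]}\frac{1+\overline{g}(g')(e^\varepsilon-1)}{1+g'(e^\varepsilon-1)},\qquad \overline{g}(g')=\min\Bigl\{1,\ \tfrac{e^\gamma+\hat{\delta}+g'-1}{e^\gamma},\ e^\alpha g'+\tilde{\delta}\Bigr\}.
\]
Both affine upper bounds are nondecreasing in $\tilde{\delta}$ and in $\hat{\delta}$: the first has coefficient $e^{-\gamma}>0$ on $\hat{\delta}$, and the second has coefficient $1$ on $\tilde{\delta}$. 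Hence for every fixed $g'$ the bound $\overline{g}(g')$ is pointwise nondecreasing in each slack. The objective is increasing in $\overline{g}$ for fixed $g'$, since $e^\varepsilon-1\geq 0$ and the denominator is positive. It follows that the integrand is pointwise nondecreasing, and therefore so is its supremum over the fixed set $g'\in[0,1]$. This already gives the corollary, because the case split in the theorem only evaluates this same maximum explicitly.

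As a consistency check, we will also verify the closed forms directly. In Case 1, $V^\star=1+e^{-\gamma}(e^\gamma+\hat{\delta}-1)(e^\varepsilon-1)$, which is affine in $\hat{\delta}$ with positive slope and independent of $\tilde{\delta}$. The first branch of Cases 2 and 3 is $1+\tilde{\delta}(e^\varepsilon-1)$, which is increasing. In Case 3 the second branch is $e^{\varepsilon+\alpha}/(e^\alpha+(1-\tilde{\delta})(e^\varepsilon-1))$; its denominator decreases in $\tilde{\delta}$, so the branch increases. For the Case 2 interior branch, we view it as the value of $V$ at the intersection point $g'_0=\frac{e^\gamma-1-\tilde{\delta}e^\gamma+\hat{\delta}}{e^{\alpha+\gamma}-1}$. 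Applying an envelope-type argument, we would compute the partial derivative in $\hat{\delta}$ and in $\tilde{\delta}$ via Lemma~\ref{lemma:monotone_fn} applied to the linear-fractional dependence on each slack separately. The quantity to check is the sign of $a_1b_2-a_2b_1$, which should factor into products of $(e^\alpha-1)$, $(e^\gamma-1)$, $(e^\varepsilon-1)$ and positive terms.

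The main obstacle is continuity across case boundaries. The direct branch-wise check shows monotonicity only within each region, and varying a slack can move $(\tilde{\delta},\hat{\delta})$ from one case or branch into another. The pointwise-supremum argument above handles this automatically, so we will lead with it. We also need to confirm that the feasible set $g'\in[0,1]$ does not depend on the slacks, which holds since $g'$ ranges over $[0,1]$ in every case. Finally, we must check that the relaxed maximum coincides with the case formulas, which is exactly what the proof of Theorem~\ref{thm:alpha_beta_delta_apl} established.
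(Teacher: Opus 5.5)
Your proof is correct, but it takes a different route from the paper's. The paper works directly with the closed forms. It checks branch by branch that each expression for $V^\star$ (Case 1, and the two branches of each of Cases 2 and 3) is nondecreasing in $\tilde{\delta}$ and $\hat{\delta}$, using Lemma~\ref{lemma:monotone_fn} for the linear-fractional Case 2 branch. It then handles moves between cases by asserting, without proof, that $V^\star$ is continuous across the case boundaries. You instead go back to $V^\star$ as a maximum over the slack-independent interval $g'\in[0,1]$, where the slacks enter only through $\overline{g}(g')=\min\{1,(e^\gamma+\hat{\delta}+g'-1)/e^\gamma,\,e^\alpha g'+\tilde{\delta}\}$. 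A maximum over a fixed set of functions, each nondecreasing in a parameter, is itself nondecreasing in that parameter. This avoids the closed-form algebra entirely. More importantly, it covers case and branch transitions automatically, whereas piecewise monotonicity alone would still need the continuity step the paper leaves unproved. The one fact you rely on from the theorem is that the paper's case formulas equal this capped maximum, and you correctly flag it. It does hold: adding the cap $g\le 1$ does not change the optimum in Cases 1 and 2, and it is exactly what produces the Case 3 formula. In exchange, the paper's route gives explicit sign factors for each branch, which would matter only if you wanted strict monotonicity. Your unfinished envelope computation for the Case 2 interior branch is harmless, since it serves only as a consistency check.
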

\begin{proof}
    First, we analyze how $V^\star$ behaves in each case.
    
    \noindent\textbf{Case 1:} Here, $V^\star$ monotonically increases with $\hat{\delta}$ and does not vary with $\tilde{\delta}$.

    \noindent\textbf{Case 2:} Here, if $e^\alpha-1-\tilde{\delta}(e^\varepsilon-1) <0$ then $V^\star$ monotonically increases with $\tilde{\delta}$ and does not vary with $\hat{\delta}$ as $V^\star = 1 + \tilde{\delta}(e^\varepsilon - 1)$.
    If $e^\alpha-1-\tilde{\delta}(e^\varepsilon-1) \geq 0$ then $V^\star$ monotonically increases with $\tilde{\delta}$ according to Lemma~\ref{lemma:monotone_fn} as $(-1 + e^{\alpha + \gamma})(e^{\varepsilon}-1)(e^{\varepsilon}(e^{\gamma}-1) + \hat{\delta}(e^{\varepsilon}-1))\geq 0$. Similarly, we can show that $V^\star$ monotonically increases with $\hat{\delta}$.

    \noindent\textbf{Case 3:} Similar to Case 2, if $e^\alpha-1-\tilde{\delta}(e^\varepsilon-1) <0$ then $V^\star$ monotonically increases with $\tilde{\delta}$ and does not vary with $\hat{\delta}$.
    If $e^\alpha-1-\tilde{\delta}(e^\varepsilon-1) \geq 0$ then $V^\star$ monotonically increases with $\tilde{\delta}$ and does not vary with $\hat{\delta}$.

    Therefore, in each case, $V^\star$ monotonically increases with $\hat{\delta}$ and $\tilde{\delta}$. Next, let us consider the transition between cases due to increasing the $\hat{\delta}$ or $\tilde{\delta}$. It can be easily proved that $V^\star$ is continuous between each case over $\tilde{\delta}$ and $\hat{\delta}$. 
    
    Therefore, $V^\star$ monotonically increases with both $\tilde{\delta}$ and $\hat{\delta}$. This completes the proof of Corollary~\ref{corollary:monotone_V_over_deltas}.
\end{proof}

\section{Proof of Theorem~\ref{thm:beta}}\label{proof:thm:beta}

\begin{proof}

First, we prove the calibrated level $e^{-\beta}$. According to Thm.~\ref{thm:alpha_beta_delta_apl}, Computed CPL does not depends on $\gamma$ when $e^\alpha-1-\delta(e^\varepsilon-1) < 0$. Therefore, we do not need a calibrated $\beta$ for this case as $\beta$ is a replacement for $\gamma$. From Theorem~\ref{thm:alpha_beta_delta_apl}, we can compute $\operatorname{CPL}$ when $e^\alpha-1-\delta(e^\varepsilon-1) \geq 0$ with,
\begin{equation}
        = \ln \left(\frac{e^{\alpha} + \delta - e^{\alpha}\delta - \delta e^{\varepsilon}
      + e^{\alpha+\varepsilon}(-1 + e^{\gamma} + \delta)-1} {\delta(e^{\varepsilon}-1) - e^{\varepsilon}
      + e^{\gamma}(-1 + e^{\alpha} + \delta + e^{\varepsilon} - \delta e^{\varepsilon})}\right).
\end{equation}

Let $e^{-\beta}$ be the updated lower bound by calibrating at privacy budget $\varepsilon_0$ and let $l$ be $\operatorname{CPL}^\star$ at privacy budget $\varepsilon_0$. Then we have  
$l = \ln \left(\frac{e^{\alpha} + \delta - e^{\alpha}\delta - \delta e^{\varepsilon_0}
      + e^{\alpha+\varepsilon_0}(-1 + e^{\beta} + \delta)-1} {\delta(e^{\varepsilon_0}-1) - e^{\varepsilon_0}
      + e^{\beta}(-1 + e^{\alpha} + \delta + e^{\varepsilon_0} - \delta e^{\varepsilon_0})}\right)$.
Therefore, 
\begin{equation}
    \beta =\ln \frac{ e^{\alpha} + \delta + e^{l} \delta - e^{\alpha} \delta - e^{\varepsilon_0} \bigl ((e^{l} - e^{\alpha})(-1 + \delta) + \delta \bigr)-1}{-e^{\varepsilon_0} \bigl( e^{\alpha} + e^{l} (-1 + \delta) \bigr) + e^{l} (-1 + e^{\alpha} + \delta)}.
\end{equation}

Next, we prove that the estimated CPL of $(\alpha,\beta,\delta)$ method is an upper bound for the $\operatorname{CPL}^\star$ for $\varepsilon>\varepsilon_0$. To prove this, we first establish the following intermediate results: Proposition~\ref{prop:g_prime}, Lemma~\ref{lemma:monotonically_increasing_f} and Proposition~\ref{prop:g_g'_boundaries}.

\begin{proposition}\label{prop:g_prime}
    Estimated CPL (calibrated at privacy budget $\varepsilon_0$) can be characterised using $\ln\frac{1+(e^\alpha g'+\delta)(e^\varepsilon - 1)}{1+ g'(e^\varepsilon - 1)}$. Here, $g' = \linebreak \frac{A - \delta + B(-1 + \delta - e^{\varepsilon_0} \delta)}{-1 + A - A e^{\varepsilon_0} + e^{\alpha} \left(1 + B(-1 + e^{\varepsilon_0})\right)}$, and $A,B$ are sum of selected elements of $\overline{G}$ and $\overline{G'}$ respectively by the Algorithm~\ref{algo:related_algorithm_1} at $\varepsilon_0$.
\end{proposition}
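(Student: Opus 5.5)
The plan is to combine the alternative representation \eqref{eqn:alternative_rep} from the proof of Theorem~\ref{thm:alpha_beta_delta_apl} with the calibration equation that defines $\beta$ in Theorem~\ref{thm:beta}. The key observation is that this calibration equation is \emph{linear} in the scalar $g'$, so it determines $g'$ directly.

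\textbf{Step 1 (express the estimate through $g'$).} In the regime $e^\alpha-1-\delta(e^\varepsilon-1)\ge 0$ (the only one where $\beta$ matters), Case~2 of the proof of Theorem~\ref{thm:alpha_beta_delta_apl} with $\hat\delta=\tilde\delta=\delta$ puts the maximiser at the intersection point of the two affine bounds. With $\gamma$ replaced by $\beta$, this point is
\[
g'_\beta=\frac{(e^{\beta}-1)(1-\delta)}{e^{\alpha+\beta}-1}.
\]
Substituting $\overline g=e^\alpha g'_\beta+\delta$ into $V$ gives the estimate as $\frac{1+(e^\alpha g'_\beta+\delta)(e^\varepsilon-1)}{1+g'_\beta(e^\varepsilon-1)}$, which is \eqref{eqn:alternative_rep}. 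I would confirm by direct substitution that this agrees with the closed form of Corollary~\ref{corollary:alpha_beta_delta_apl}.

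\textbf{Step 2 (use the calibration condition).} By construction in Theorem~\ref{thm:beta}, $\beta$ is chosen so that the estimate at $\varepsilon_0$ equals $e^{l}=\operatorname{CPL}^\star$ there. By Algorithm~\ref{algo:related_algorithm_1}, that value is $\frac{1+A c}{1+B c}$ with $c=e^{\varepsilon_0}-1$, where $A,B$ are the sums of the selected entries of $\overline G$ and $\overline{G'}$. Equating this with the Step~1 expression at $\varepsilon_0$ and cross-multiplying gives
\[
(1+(e^\alpha g'+\delta)c)(1+Bc)=(1+Ac)(1+g'c).
\]
This is linear in $g'$, so
\[
g'=\frac{A-B-\delta-B\delta c}{e^\alpha(1+Bc)-(1+Ac)}.
\]
Expanding $c$, this is exactly $\frac{A-\delta+B(-1+\delta-e^{\varepsilon_0}\delta)}{-1+A-Ae^{\varepsilon_0}+e^{\alpha}(1+B(e^{\varepsilon_0}-1))}$. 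Since the linear equation has a unique root, $g'_\beta$ coincides with this value, which proves the stated characterisation for every $\varepsilon$.

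\textbf{Main obstacle.} The delicate point is nondegeneracy, which has two parts.
\begin{itemize}
\item The denominator $e^\alpha(1+Bc)-(1+Ac)$ must be nonzero. This is equivalent to $\operatorname{CPL}^\star(\varepsilon_0)<e^\alpha$. I would obtain it from Corollary~\ref{corollary:max_privacy_leakage_privacy_budget} together with the strict inequality at finite $\varepsilon_0$, and treat the boundary case $A=1$, $B=e^{-\alpha}$ separately by continuity.
\item One must check that $\beta\mapsto g'_\beta$ is injective, so that the calibrated $\beta$ indeed corresponds to this root. This follows from Lemma~\ref{lemma:monotone_fn}, since $g'_\beta$ is a linear-fractional function of $e^\beta$ with nonzero determinant.
\end{itemize}
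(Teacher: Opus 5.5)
Your proposal is correct and follows essentially the same route as the paper, which invokes \eqref{eqn:alternative_rep}, equates its value at $\varepsilon_0$ with $\frac{1+A(e^{\varepsilon_0}-1)}{1+B(e^{\varepsilon_0}-1)}$ from Algorithm~\ref{algo:related_algorithm_1}, and solves the resulting linear equation for $g'$. Your additions make explicit what the paper leaves implicit, namely identifying $g'$ as the $\varepsilon$-independent intersection point $\frac{(e^{\beta}-1)(1-\delta)}{e^{\alpha+\beta}-1}$ and restricting to the regime $e^\alpha-1-\delta(e^\varepsilon-1)\ge 0$. Two of your side remarks can be simplified: the injectivity check is unnecessary, since uniqueness of the root already suffices, and nondegeneracy of the denominator is cleaner to obtain from $A\le e^{\alpha}B+\delta$, which bounds it below by $e^\alpha-1-\delta(e^{\varepsilon_0}-1)$, than from Corollary~\ref{corollary:max_privacy_leakage_privacy_budget}.
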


\begin{proof}
From \eqref{eqn:alternative_rep}, we can compute the estimated $\operatorname{CPL}$ as $\operatorname{CPL} =\ln \frac{1+ (e^\alpha g'+\delta) (e^{\varepsilon}-1)}{1+ g' (e^{\varepsilon}-1)}$.

Let $l$ be $\operatorname{CPL}^\star$ at privacy budget $\varepsilon_0$. Here, $A,B$ are sum of selected elements of $\overline{G}$ and $\overline{G'}$ respectively by the Algorithm~\ref{algo:related_algorithm_1} at $\varepsilon_0$. Then we have, $l = \ln\frac{1+ A (e^{\varepsilon_0}-1)}{1+ B(e^{\varepsilon_0}-1)} = \ln\frac{1+ (e^\alpha g' +\delta)(e^{\varepsilon_0}-1)}{1+ g' (e^{\varepsilon_0}-1)}$. Therefore,
\begin{equation}\label{eqn:g_prime}
g' = \frac{A - \delta + B(-1 + \delta - e^{\varepsilon_0} \delta)}
{-1 + A - A e^{\varepsilon_0} + e^{\alpha} \left(1 + B(-1 + e^{\varepsilon_0})\right)}.
\end{equation}
This completes the proof of Theorem~\ref{thm:beta}.
\end{proof}

\begin{lemma}\label{lemma:monotonically_increasing_f}
    Let $\frac{1+(e^\alpha g+\delta)(e^\varepsilon-1)}
     {1+g(e^\varepsilon-1)}$, where $\alpha > 0,\ \varepsilon >0,\ 0 \leq g \leq 1$. 
     If $e^\alpha-1-\delta(e^\varepsilon-1)\geq0$, then $f(g)$ is monotonically non-decreasing in $g$.
\end{lemma}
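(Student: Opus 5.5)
The plan is to treat $f(g)=\frac{1+(e^\alpha g+\delta)(e^\varepsilon-1)}{1+g(e^\varepsilon-1)}$ as a linear fractional function of $g$ and apply Lemma~\ref{lemma:monotone_fn} directly. Expanding the numerator gives
\[
f(g)=\frac{a_1 g + b_1}{a_2 g + b_2},
\]
with
\[
a_1=e^\alpha(e^\varepsilon-1),\quad b_1=1+\delta(e^\varepsilon-1),\quad a_2=e^\varepsilon-1,\quad b_2=1.
\]

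First, check that $f$ is well defined on the domain. For $\varepsilon>0$ and $0\le g\le 1$, the denominator satisfies $a_2g+b_2=1+g(e^\varepsilon-1)\ge 1>0$. So there is no pole in $[0,1]$, and the derivative formula in the proof of Lemma~\ref{lemma:monotone_fn} applies on the whole interval.

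Next, compute the determinant quantity from the lemma:
\[
a_1b_2-a_2b_1=e^\alpha(e^\varepsilon-1)-(e^\varepsilon-1)\bigl(1+\delta(e^\varepsilon-1)\bigr)=(e^\varepsilon-1)\bigl(e^\alpha-1-\delta(e^\varepsilon-1)\bigr).
\]
This is the same factorization that already appears in the proof of Theorem~\ref{thm:alpha_beta_delta_apl}, in the branch $\overline{g}=e^\alpha g'+\tilde{\delta}$. Since $\varepsilon>0$, the factor $e^\varepsilon-1$ is positive, and by hypothesis $e^\alpha-1-\delta(e^\varepsilon-1)\ge 0$. Hence $a_1b_2-a_2b_1\ge 0$, so $df/dg\ge 0$ on $[0,1]$, and Lemma~\ref{lemma:monotone_fn} gives that $f$ is monotonically non-decreasing in $g$.

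There is no real obstacle here. The only points needing care are confirming the positivity of the denominator so that the lemma is valid on $[0,1]$, and noting that the conclusion is only non-strict monotonicity. In the boundary case $e^\alpha-1=\delta(e^\varepsilon-1)$, $f$ is constant in $g$.
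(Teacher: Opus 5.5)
Your proof is correct and matches the paper's argument: the paper also applies Lemma~\ref{lemma:monotone_fn} and reduces the sign condition to $(e^\varepsilon-1)\bigl(e^\alpha-1-\delta(e^\varepsilon-1)\bigr)\ge 0$. You also check explicitly that the denominator $1+g(e^\varepsilon-1)\ge 1$ stays positive on $[0,1]$, which the paper leaves implicit.
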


\begin{proof}
    From Lemma~\ref{lemma:monotone_fn}, the function $f(g)$ is monotonically increasing with $g$ when $(e^\varepsilon-1)\bigl(e^\alpha - \delta(e^\varepsilon - 1) -1\bigl) \geq 0$.
    Therefore, if $e^\alpha-1-\delta(e^\varepsilon-1)\geq0$, then $f(g)$ is monotonically non-decreasing in $g$.

    This completes the proof of Lemma~\ref{lemma:monotonically_increasing_f}.
\end{proof}

\begin{proposition}\label{prop:f1_geq_f2}
    Let $g'_0 = \frac{A - \delta + B(-1 + \delta - e^{\varepsilon_0} \delta)}{-1 + A - A e^{\varepsilon_0} + e^{\alpha} \left(1 + B(-1 + e^{\varepsilon_0})\right)}$ and $g'_1 = \frac{A - \delta + B(-1 + \delta - e^{\varepsilon_1} \delta)}{-1 + A - A e^{\varepsilon_1} + e^{\alpha} \left(1 + B(-1 + e^{\varepsilon_1})\right)}$  where, $0 < A, B < 1$, $\quad 0< \varepsilon_0 < \varepsilon_1$. Then, $g'_0 \geq g'_1$.
\end{proposition}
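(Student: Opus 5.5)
The plan is to treat both $g'_0$ and $g'_1$ as values of one function of $t=e^{\varepsilon}$ and show that this function is non-increasing on $t\ge 1$. Fix $A,B,\delta,\alpha$ and define
\[
h(t)=\frac{-B\delta\, t+\bigl(A-B-\delta+B\delta\bigr)}{\bigl(e^{\alpha}B-A\bigr)t+\bigl(e^{\alpha}(1-B)-(1-A)\bigr)},
\]
so that $g'_0=h(e^{\varepsilon_0})$ and $g'_1=h(e^{\varepsilon_1})$. Since $\varepsilon_0<\varepsilon_1$ gives $e^{\varepsilon_0}<e^{\varepsilon_1}$, it suffices to show that $h$ is non-increasing on $[1,\infty)$.

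The map $h$ is linear-fractional with
\[
a_1=-B\delta,\quad b_1=A-B-\delta(1-B),\quad a_2=e^{\alpha}B-A,\quad b_2=e^{\alpha}(1-B)-(1-A).
\]
By Lemma~\ref{lemma:monotone_fn}, $h$ is non-increasing on any interval where its denominator has constant sign, provided $a_1b_2-a_2b_1\le 0$. Expanding,
\[
a_1b_2-a_2b_1=-B\delta\bigl(e^{\alpha}(1-B)-(1-A)\bigr)-\bigl(e^{\alpha}B-A\bigr)\bigl(A-B-\delta(1-B)\bigr).
\]
I will show that each bracket is non-negative, using the structure of Algorithm~\ref{algo:related_algorithm_1} and the definitions of $\alpha$ and $\delta$:
\begin{itemize}
\item $e^{\alpha}B-A\ge 0$, and $e^{\alpha}(1-B)-(1-A)\ge 0$. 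Every index (selected or not) with $\overline{G'_i}>0$ has ratio at most $e^{\alpha}$, so the per-index constraints of \eqref{eqn:delta_constraints} can be summed over the selected set and over its complement.
\item $A-B-\delta(1-B)\ge 0$. Write $l=\ln\frac{1+A(e^{\varepsilon_0}-1)}{1+B(e^{\varepsilon_0}-1)}$ for the exact leakage. The calibrated $(\alpha,\beta,\delta)$ estimate lies in the ``otherwise'' branch of Corollary~\ref{corollary:alpha_beta_delta_apl}, so $l\ge \ln(1+\delta(e^{\varepsilon_0}-1))$. Clearing denominators gives $A-B\ge \delta\bigl(1+B(e^{\varepsilon_0}-1)\bigr)\ge \delta(1-B)$.
\end{itemize}
With these three inequalities both products are $\le 0$, hence $a_1b_2-a_2b_1\le 0$.

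The sign of the denominator also has to be controlled, because a pole between $e^{\varepsilon_0}$ and $e^{\varepsilon_1}$ would break the monotonicity argument. The denominator equals $e^{\alpha}\bigl(1+B(t-1)\bigr)-\bigl(1+A(t-1)\bigr)$. This is strictly positive as long as the exact leakage ratio $\frac{1+A(t-1)}{1+B(t-1)}$ stays strictly below $e^{\alpha}$, which follows from Corollary~\ref{corollary:max_privacy_leakage_privacy_budget} except in the degenerate saturated case. I will treat that case separately: there the selected set is exactly the index attaining the ratio $e^{\alpha}$, and $h$ can be evaluated directly.

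I expect the main obstacle to be the bullet $A-B\ge\delta(1-B)$, because it needs the branch condition to hold at $\varepsilon_0$. If that route fails, the fallback is to use the requirement $g'_0\ge 0$ together with positivity of the denominator. Since the calibrated $g'$ must be a feasible aggregate, the numerator must be non-negative at $t=e^{\varepsilon_0}$. That gives $b_1\ge B\delta e^{\varepsilon_0}\ge 0$, which is again sufficient.

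A further subtlety is that $A$ and $B$ are fixed at their $\varepsilon_0$ selection, as in the statement; segment changes are handled elsewhere in the proof of Theorem~\ref{thm:beta}. Once the sign conditions are in place, Lemma~\ref{lemma:monotone_fn} gives $h(e^{\varepsilon_0})\ge h(e^{\varepsilon_1})$, that is, $g'_0\ge g'_1$.
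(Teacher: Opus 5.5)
Your reduction matches the paper's. Treating $g'$ as a linear-fractional function $h(t)$ of $t=e^{\varepsilon}$ and signing $a_1b_2-a_2b_1$ is equivalent to the paper's direct computation $g'_0-g'_1=\frac{(B-A)(e^{\varepsilon_1}-e^{\varepsilon_0})(A-Be^{\alpha}-\delta)}{\omega_0\omega_1}$. The step that fails is your first bullet, $e^{\alpha}B-A\ge 0$. Your justification covers only indices with $\overline{G'_i}>0$. But indices with $\overline{G'_i}=0<\overline{G_i}$ have ratio $+\infty$ and are always selected first by Algorithm~\ref{algo:related_algorithm_1}. Summing \eqref{eqn:delta_constraints} over the selected set therefore gives only $A\le e^{\alpha}B+\tilde{\delta}\le e^{\alpha}B+\delta$. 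For instance, take $\overline{G}=(0.1,0.4,0.5)$ and $\overline{G'}=(0,0.2,0.8)$, so $e^{\alpha}=2$ and $\delta=0.1$. For $e^{\varepsilon_0}<11$ the algorithm selects the first two indices, giving $A=0.5$, $B=0.2$, $a_2=e^{\alpha}B-A=-0.1$, $b_1=0.22$ and $-a_2b_1=0.022>0$. Your two products are then not both non-positive, so the argument breaks exactly in the sparse regime $\delta>0$ that the statement must cover (here the total happens to be $0$).

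The repair is not to split the determinant. Expanding gives the identity $a_1b_2-a_2b_1=(A-B)(A-e^{\alpha}B-\delta)$. You then need only two facts, which is precisely the paper's argument:
\begin{itemize}
\item $A\ge B$, because each selected index has ratio at least the running threshold, which is at least $1$;
\item $A\le e^{\alpha}B+\delta$, which is the summed constraint with its slack kept.
\end{itemize}
This also makes your second and third bullets unnecessary, which helps, because the third is justified circularly. Being in the ``otherwise'' branch does not by itself give $l\ge\ln\bigl(1+\delta(e^{\varepsilon_0}-1)\bigr)$. That inequality is equivalent to the numerator of $g'_0$ being non-negative, i.e.\ (given a positive denominator) to $g'_0\ge 0$, which your fallback also assumes rather than proves. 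The actual reason is that $\operatorname{CPL}^\star$ is at least the value of the subset $S=\{i:\overline{G'_i}=0\}$ for the pair attaining $\delta$.

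Your explicit control of the denominator is more careful than the paper, which tacitly assumes $\omega_0\omega_1>0$. However, the correct justification is the mediant property of Algorithm~\ref{algo:related_algorithm_1}: when $B>0$, the final ratio $\frac{1+A(t-1)}{1+B(t-1)}$ is at most the ratio of the last selected finite-ratio index, hence at most $e^{\alpha}$. Corollary~\ref{corollary:max_privacy_leakage_privacy_budget} does not give this, since its limit is infinite when zero denominators occur.
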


\begin{proof}
    Let us consider $g'_0 - g'_1$,

    \begin{equation*}
        \begin{split}
           &= \begin{split}
                &\frac{A - \delta + B(-1 + \delta - e^{\varepsilon_0} \delta)}{-1 + A - A e^{\varepsilon_0} + e^{\alpha} \left(1 + B(-1 + e^{\varepsilon_0})\right)} - \\
                &\qquad \qquad \frac{A - \delta + B(-1 + \delta - e^{\varepsilon_1} \delta)}{-1 + A - A e^{\varepsilon_1} + e^{\alpha} \left(1 + B(-1 + e^{\varepsilon_1})\right)}
            \end{split}\\
            &= \begin{split}
                &\frac{(A - \delta + B(-1 + \delta - e^{\varepsilon_0} \delta))\omega_1}{\omega_0 \omega_1} -\\
            &\qquad\qquad\frac{(A - \delta + B(-1 + \delta - e^{\varepsilon_1} \delta))\omega_0}{\omega_0 \omega_1}
            \end{split}\\
            &\text{Here, $\omega_0 = 1 + A (e^{\varepsilon_0} - 1)+ e^\alpha(B - B e^{\varepsilon_0} -1)$, and}\\
            & \text{$\omega_1 = 1 + A (e^{\varepsilon_1} - 1)+ e^\alpha(B - B e^{\varepsilon_1} -1)$.}\\
            &= \frac{(B-A) (e^{\varepsilon_1} - e^{\varepsilon_0})(A-Be^{\alpha}-\delta)}{\omega_0 \omega_1}\\
            & \text{Since, } (B-A) \leq 0, \ (e^{\varepsilon_1} - e^{\varepsilon_0}) > 0, \ (A-Be^{\alpha}-\delta) \leq 0,\\
            &g_0' - g'_1 \geq 0.
        \end{split}
    \end{equation*}

    This completes the proof of Proposition~\ref{prop:f1_geq_f2}.
\end{proof}

    Next, let us move to the proof that the estimated CPL of $(\alpha,\beta,\delta)$ method is an upper bound for the $\operatorname{CPL}^\star$ for $\varepsilon>\varepsilon_0$. Let $\operatorname{CPL_0}=\frac{1+(e^\alpha g'_0+\delta)(e^\varepsilon - 1)}{1+ g'_0(e^\varepsilon - 1)}$ and $\operatorname{CPL_1}=\frac{1+(e^\alpha g'_1+\delta)(e^\varepsilon - 1)}{1+ g'_1(e^\varepsilon - 1)}$ be two estimated CPLs at $\varepsilon$ that  are calibrated at privacy budgets $\varepsilon_0$ and $\varepsilon_1$, respectively using~\eqref{eqn:alternative_rep}. Let $A_0,B_0$ and $A_1,B_1$ be the sum of selected elements of $\overline{G},\overline{G'}$ respectively at $\varepsilon_0$ and $\varepsilon_1$ by Algorithm~\ref{algo:related_algorithm_1}. Moreover, we know that the Algorithm~\ref{algo:related_algorithm_1} selects elements of $\overline{G},\overline{G'}$ in a discrete manner over privacy budget $\varepsilon$. For example, suppose $A,B$ are the selected $\overline{G},\overline{G'}$ respectively at $\varepsilon'$. Then, $A,B$ would not change for a range around $\varepsilon'$, which we call a segment. Therefore, the previously considered $A_0,B_0$ and $A_1,B_1$ can be vary based on their segments. We can analyze them in three different cases.

    \textbf{Case 1 - } (within same segment, i.e., $A_0=A_1$ and $B_0=B_1$) According to Prop.~\ref{prop:g_prime}, $g'_0=\frac{A_0 - \delta + B_0(-1 + \delta - e^{\varepsilon_0} \delta)}{-1 + A_0 - A_0 e^{\varepsilon_0} + e^{\alpha} \left(1 + B_0(-1 + e^{\varepsilon_0})\right)}$ and $g'_1=\frac{A_1 - \delta + B_1(-1 + \delta - e^{\varepsilon_1} \delta)}{-1 + A_1 - A_1 e^{\varepsilon_1} + e^{\alpha} \left(1 + B_1(-1 + e^{\varepsilon_1})\right)}$. From Prop.~\ref{prop:f1_geq_f2}, we have $g'_0 > g'_1$ as $\varepsilon_0 \leq \varepsilon_1$. Therefore, from Lemma~\ref{lemma:monotonically_increasing_f}, the estimated CPL by calibrated parameters at $\varepsilon_0$ is an upper bound for the CPL by calibrated parameters at $\varepsilon_1$. Therefore, the estimated CPL by calibrated parameters at $\varepsilon_0$ is an upper bound for $\operatorname{CPL}^\star$ within the same segment.

    \textbf{Case 2 - } (between neighboring segments, i.e., $A_0 < A_1$ and $B_0 \leq B_1$) Let us take $\operatorname{CPL}_2$ be estimated CPL at $\varepsilon_2$ which is calibrated at $\varepsilon_2$. Here, $\varepsilon_2$ is selected such that the two segments join each other. From previous case 1, $\operatorname{CPL_0} \geq \operatorname{CPL}_2$ and $\operatorname{CPL_2} \geq \operatorname{CPL}_1$. Therefore, $\operatorname{CPL_0} \geq \operatorname{CPL}_1$.

    \textbf{Case 3 - } (between any two segments $\varepsilon > \varepsilon_0$) Using Case 1 and Case 2, we have $\operatorname{CPL_0} \geq \operatorname{CPL}_1$ for any two segments.\\

Therefore, estimated CPL by calibrated parameters at $\varepsilon_0$ provides an upper bound for $\operatorname{CPL}^\star$ for privacy budgets $\varepsilon>\varepsilon_0$.
This completes the proof of Theorem~\ref{proof:thm:beta}.
\end{proof}

\section{Proof of Corollary~\ref{corollary:beta_upper_bound}}\label{proof:corollary:beta_upper_bound}

\begin{proof}

From Theorem~\ref{proof:thm:beta}, $\beta =\ln \frac{-1 + e^{\alpha} + \delta + e^{l} \delta - e^{\alpha} \delta - e^{\varepsilon_0} \bigl (e^{l} - e^{\alpha})(-1 + \delta) + \delta \bigr)}{-e^{\varepsilon_0} \bigl( e^{\alpha} + e^{l} (-1 + \delta) \bigr) + e^{l} (-1 + e^{\alpha} + \delta)}$. Since $\operatorname{CPL}^\star$ at $\varepsilon_0$ can be computed as $l = \ln\frac{1+ A (e^{\varepsilon_0}-1)}{1+ B(e^{\varepsilon_0}-1)}$, we can write $\beta$ in terms of $A$ and $B$ as $\beta = \ln \frac{\lambda_1+\lambda_2}{e^{\varepsilon_0} (B e^{\alpha} + A(-1 + \delta)) - (-1 + A)(-1 + e^{\alpha} + \delta)}$. Here, $\lambda_1=-1 + e^{\alpha} + A e^{\varepsilon_0} (-1 + \delta) + B (-1 + e^{\alpha})(-1 + \delta),\ \lambda_2=2\delta - A\delta - e^{\alpha} \delta + B e^{\varepsilon_0} (e^{\alpha} + \delta - e^{\alpha} \delta)$ and $A,B$ are selected elements from $G,G'$ respectively by the Algorithm~\ref{algo:related_algorithm_1} at privacy budget $\varepsilon_0$. Then, 
\begin{equation}
    \lim_{\varepsilon_0 \rightarrow 0}\beta = \ln\frac{1 + \tilde{A} - \tilde{B} + e^{\alpha}(-1 + \delta) - 2\delta}{1 + (-1 + \tilde{A} - \tilde{B}) e^{\alpha} - \delta}.
\end{equation} 
Here, $\tilde{A} = \sum_{i\in [b], \frac{G_i}{G'_i}>1,} G_i$ and $\tilde{B} = \sum_{i\in [b], \frac{G_i}{G'_i}>1,} G'_i$. 

Furthermore, we know that $\gamma \geq \ln\frac{1-\tilde{B}}{1-\tilde{A}}$ as $\gamma = \ln \sup_{i\in[b],\ G_i\neq 0} \frac{G'_i}{G_i}$ from~\eqref{eqn:gamm_and_alpha_with_delta}. Next, let us evaluate
\begin{equation*}
    \begin{split}
        &= \frac{1 + \tilde{A} - \tilde{B} + e^{\alpha}(-1 + \delta) - 2\delta}{1 + (-1 + \tilde{A} - \tilde{B}) e^{\alpha} - \delta}-\frac{1-\tilde{B}}{1-\tilde{A}}\\
        &= \frac{\tilde{A}\tilde{B}-\tilde{A}^2-\delta+2\tilde{A}\tilde{B}-\delta \tilde{B}+e^\alpha ((\tilde{A}-\tilde{B})\tilde{B}+\delta-\tilde{A}\delta)}{(1 + (-1 + \tilde{A} - \tilde{B}) e^{\alpha} - \delta)(1-\tilde{A})}\\
        &= \frac{(\tilde{A}-\tilde{B})(\tilde{B}e^\alpha+\delta-\tilde{A})+\delta(1-\tilde{A})(e^\alpha-1)}{(1 + (-1 + \tilde{A} - \tilde{B}) e^{\alpha} - \delta)(1-\tilde{A})}\\
        &\text{Since $0\leq \tilde{B} \leq \tilde{A} <1$ and $\tilde{A}\leq e^\alpha \tilde{B} + \delta$},\\
        &\leq 0\\
    \end{split}
\end{equation*}
Therefore,
\begin{equation}
    \lim_{\varepsilon_0 \rightarrow 0}\beta \leq \gamma.
\end{equation}
This completes the proof of Corollary~\ref{corollary:beta_upper_bound}.
\end{proof}

\section{Proof of Theorem~\ref{corollary:alpha_beta_delta_apl}}\label{proof:corollary:alpha_beta_delta_apl}

\begin{proof}
    As per Theorem~\ref{thm:beta}, the calibrated parameter $\beta$ holds the privacy guarantee of the estimated CPL. Therefore, we can compute the estimated CPL of the ($\alpha,\beta,\delta$) method in the same way as in Theorem~\ref{thm:alpha_beta_delta_apl}.
\end{proof}

\section{Proof of Corollary~\ref{corollary:apl_monotonially_increases_with_beta}}\label{proof:corollary:apl_monotonially_increases_with_beta}

\begin{proof}
    From~\eqref{eqn:apl_cal_delta_available}, $\operatorname{CPL}(\alpha,\gamma,\delta)$ does not depends on $\gamma$ when $e^\alpha-1-\delta(e^\varepsilon-1) < 0$.

    When $e^\alpha-1-\delta(e^\varepsilon-1) \geq 0$, let us check the monotonicity by evaluating $a_1b_1 - a_2b_2$. Here, we remove $\ln(x)$ as it is a monotonically increasing function with $x$.
    \begin{equation*}
    \begin{split}
        &a_1b_1 - a_2b_2 \\
        &= e^{\alpha+\varepsilon}(\delta(e^{\varepsilon}-1) - e^{\varepsilon})-(-1 + e^{\alpha} + \delta + e^{\varepsilon} - \delta e^{\varepsilon})\lambda\\
        &\text{Here, } \lambda=(e^{\alpha} + \delta - e^{\alpha}\delta - \delta e^{\varepsilon}
      + e^{\alpha+\varepsilon}(-1  + \delta)-1)\\
        &= - (e^{\alpha}-1)\,( e^{\varepsilon}-1)\,( \delta-1)\,(e^{\alpha} + \delta - e^{\varepsilon}\delta-1) \geq 0.
    \end{split}
    \end{equation*}
    Since $a_1b_1 - a_2b_2 \geq 0$, CPL is monotonically increasing with $\gamma$.
    This completes the proof of Corollary~\ref{corollary:apl_monotonially_increases_with_beta}.
\end{proof}

\begin{figure*}[t]
  \centering
  \includegraphics[trim={0.8cm 0.cm 0.7cm 0.2cm},width=0.3\linewidth]{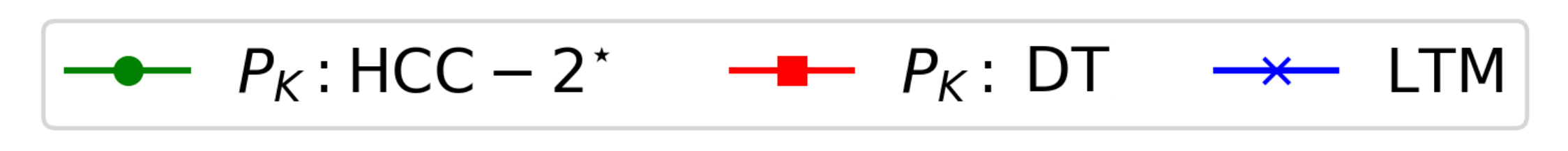}\par
  \makebox[0.9\linewidth][c]{%
    \begin{subfigure}[t]{0.25\linewidth}
      \centering
      \includegraphics[trim={1.45cm 0.8cm 1.2cm 0.8cm},clip,width=0.97\textwidth]{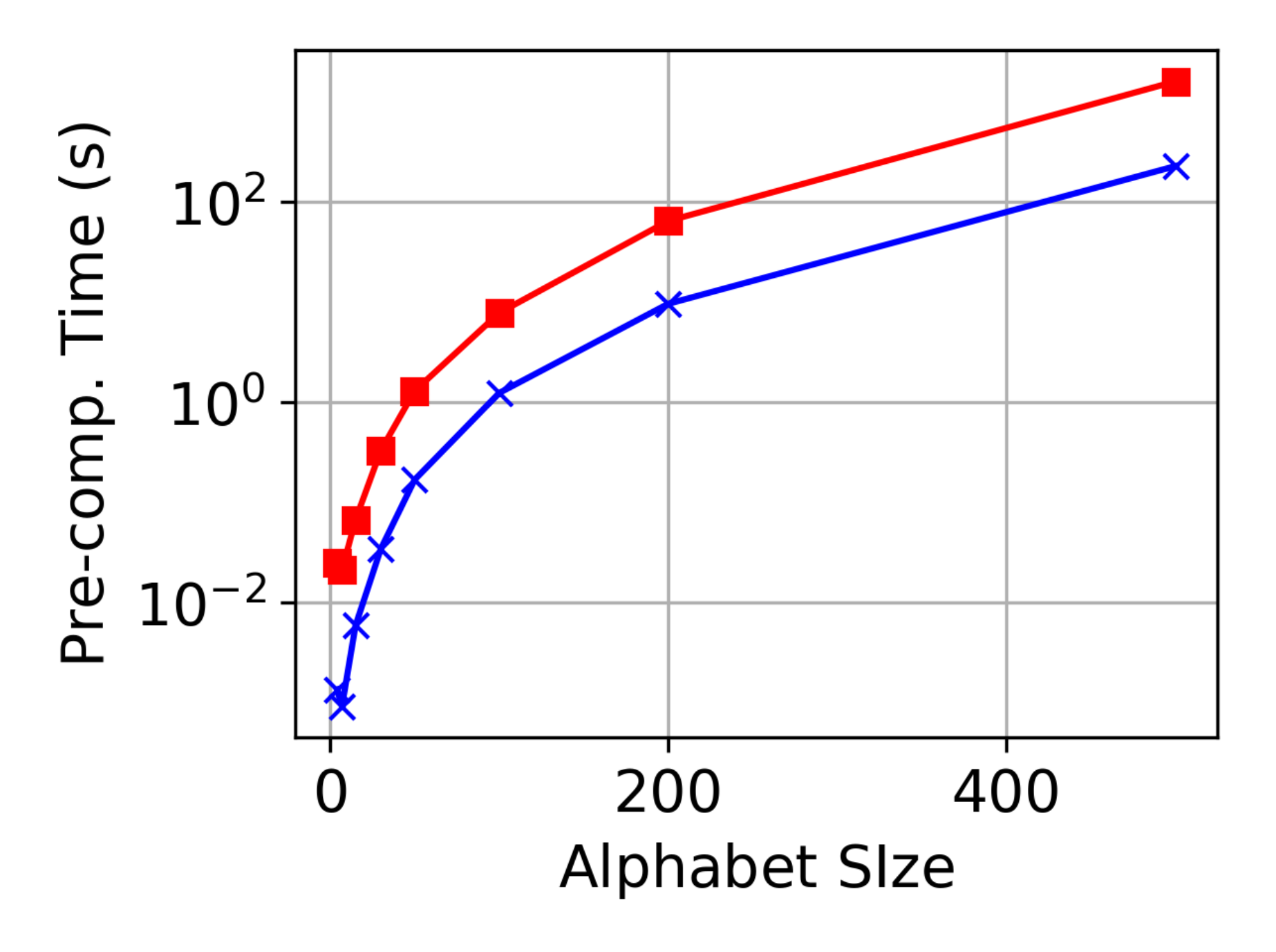}
      \caption{}\label{fig:time_space_a}
    \end{subfigure}
    \hspace{0.07\linewidth}
    \begin{subfigure}[t]{0.25\linewidth}
      \centering
      \includegraphics[trim={1.45cm 0.8cm 1.2cm 0.8cm},clip,width=0.97\textwidth]{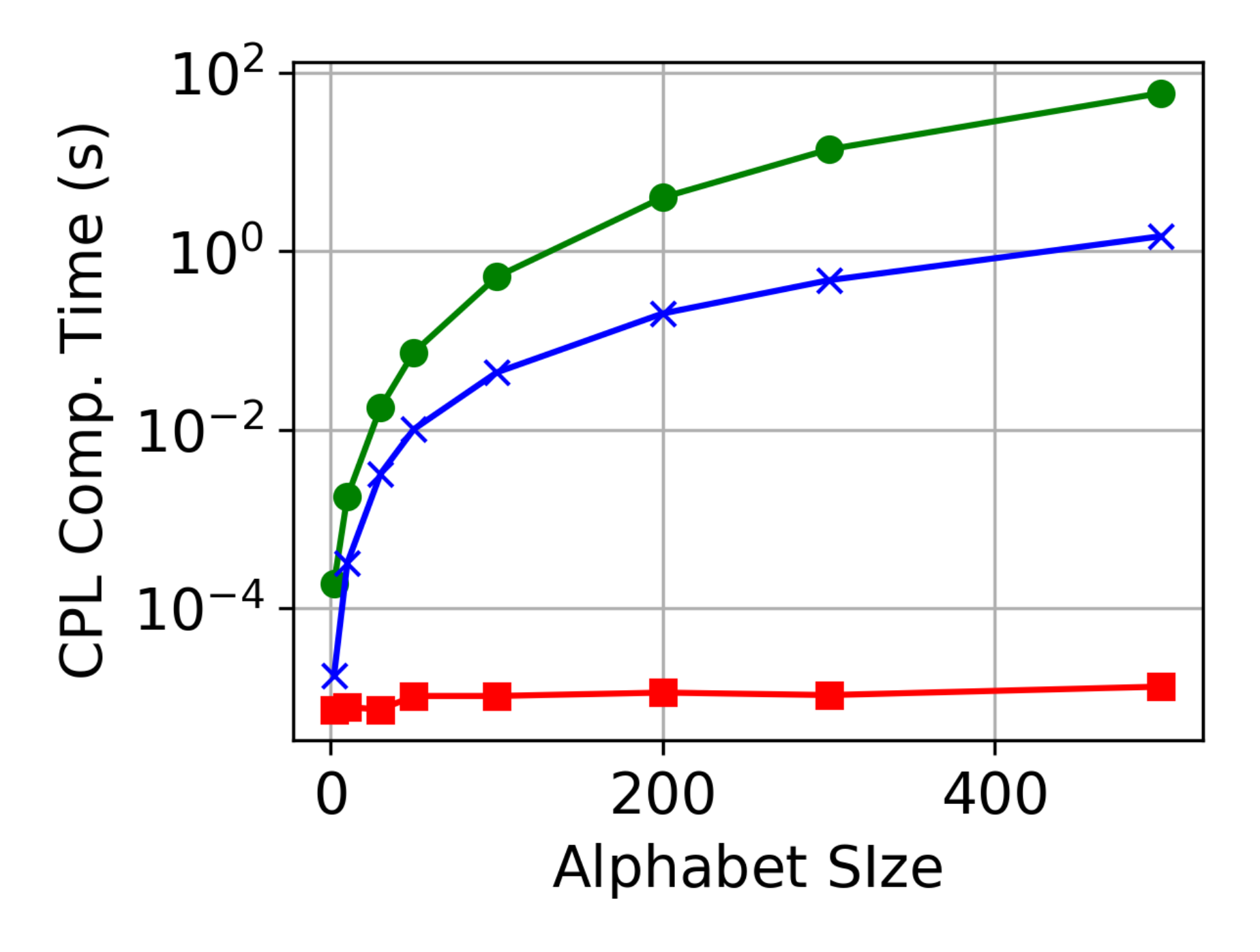}
      \caption{}\label{fig:time_space_b}
    \end{subfigure}
    \hspace{0.07\linewidth}
    \begin{subfigure}[t]{0.25\linewidth}
      \centering
      \includegraphics[trim={1.45cm 0.8cm 1.2cm 0.8cm},clip,width=0.97\textwidth]{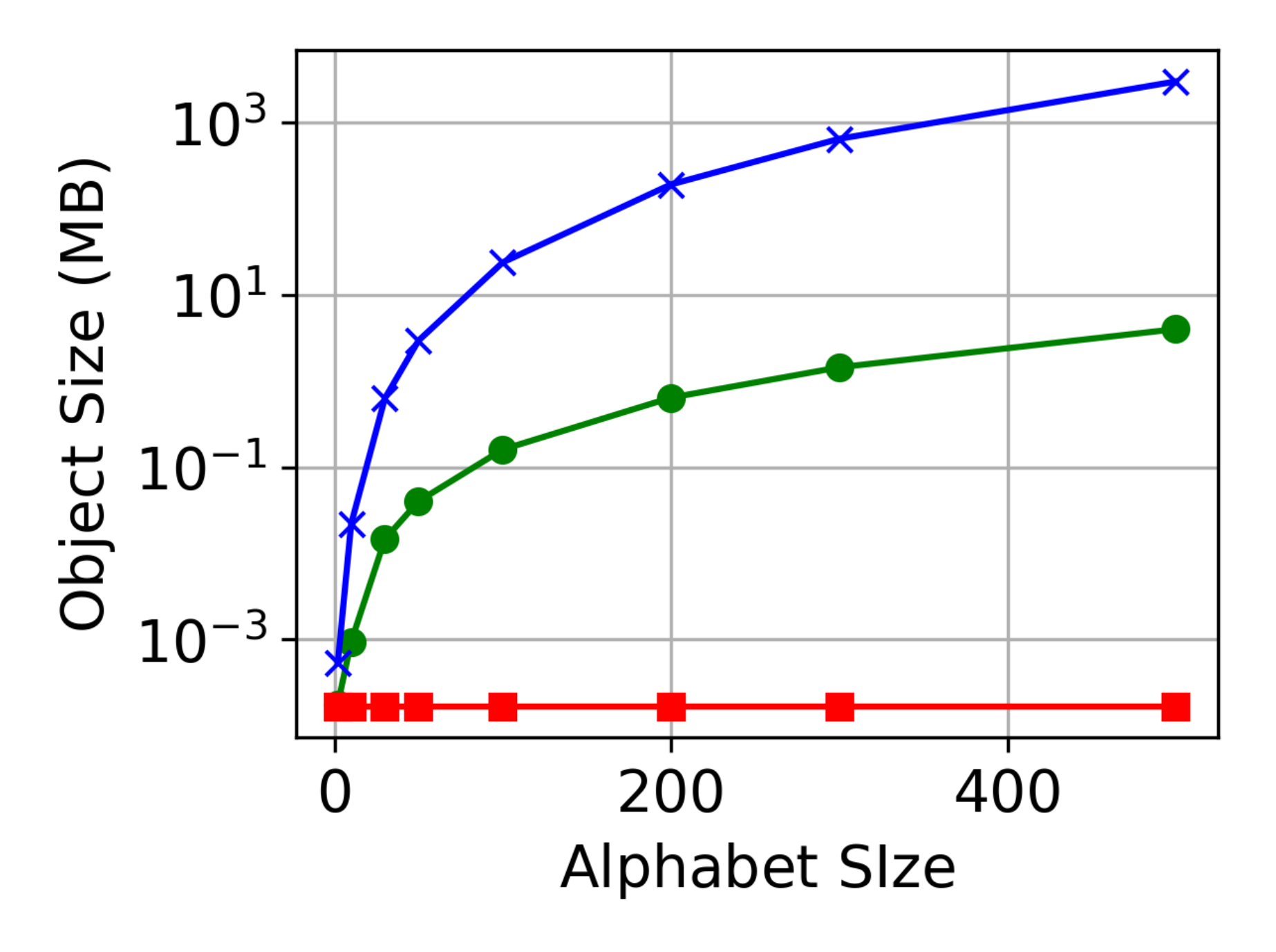}
      \caption{}\label{fig:time_space_c}
    \end{subfigure}%
  }
  \caption{Empirically validate the space and time complexities of \acronym. Here, space is measured with the Python object allocated to store the values.}
    \label{fig:space_time_empirical}
\end{figure*}

\section{Proof of Theorem~\ref{thm:upperbound_dist_bias}}\label{proof:thm:upperbound_dist_bias}

We use Corollary~\ref{Corollary:composition_apl} to prove Theorem~\ref{thm:upperbound_dist_bias}.
\begin{corollary}\label{Corollary:composition_apl}
        Let ($\alpha_1,\beta_1,\delta_1$) and ($\alpha_2,\beta_2,\delta_2$) be two characterisations for the Dependency budget. Then, ($\overline{\alpha},\overline{\beta},\overline{\delta}$) gives an upper bound for both characterisations. Here, $\overline{\alpha} = \max \{\alpha_1,\alpha_2\}, \overline{\beta} = \max \{\beta_1,\beta_2\}$, and $\overline{\delta} = \max \{\delta_1,\delta_2\}$.
\end{corollary}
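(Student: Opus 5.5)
The plan is to reduce everything to monotonicity of the closed-form estimator in Corollary~\ref{corollary:alpha_beta_delta_apl} with respect to each parameter separately. Write $E(\alpha,\beta,\delta;\varepsilon)$ for the estimate in \eqref{eqn:db_apl_cal_delta_available}. Because $(\overline{\alpha},\overline{\beta},\overline{\delta})$ dominates each $(\alpha_j,\beta_j,\delta_j)$ coordinatewise, coordinatewise monotonicity gives
\[
E(\overline{\alpha},\overline{\beta},\overline{\delta};\varepsilon)\geq E(\alpha_j,\beta_j,\delta_j;\varepsilon),\quad j=1,2,
\]
for every $\varepsilon$. Since each $E(\alpha_j,\beta_j,\delta_j;\varepsilon)$ already upper-bounds the leakage it characterises, the combined triple bounds both characterisations. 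I would pass from $(\alpha_1,\beta_1,\delta_1)$ to $(\overline{\alpha},\overline{\beta},\overline{\delta})$ by raising one coordinate at a time, and check that the estimate never decreases at any step.

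\textbf{Step 1, $\beta$.} This coordinate is already settled. Corollary~\ref{corollary:apl_monotonially_increases_with_beta} shows that the second branch of the formula is non-decreasing in the lower-level parameter, and that the first branch does not depend on it.

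\textbf{Step 2, $\delta$.} I would use the alternative representation \eqref{eqn:alternative_rep} together with Corollary~\ref{corollary:monotone_V_over_deltas}. The latter states that the optimal value $V^\star$ is non-decreasing in both $\tilde{\delta}$ and $\hat{\delta}$, and hence in $\delta=\tilde{\delta}=\hat{\delta}$. A cleaner alternative is a feasibility argument: the estimate is the optimum of problem \eqref{eqn:new_reformed_optimization_problem_with_GG'_constraints}, and enlarging $\delta$ only relaxes the constraints $G_i\le e^{\alpha}G'_i+\tilde{\delta}_i$ and $G'_i\le e^{\beta}G_i+\hat{\delta}_i$. The feasible set therefore grows and the maximum cannot drop.

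\textbf{Step 3, $\alpha$.} The same feasibility argument applies. Increasing $\alpha$ loosens $G_i\le e^{\alpha}G'_i+\tilde{\delta}_i$, and the optimum of the relaxed program is exactly the closed form, by Theorem~\ref{thm:alpha_beta_delta_apl} (with $\beta$ in place of $\gamma$ as in Corollary~\ref{corollary:alpha_beta_delta_apl}). The objective $F$ does not involve $\alpha,\beta,\delta$, so monotonicity in all three parameters follows at once from nestedness of the feasible sets.

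\textbf{Main obstacle.} The delicate point is making sure the feasibility argument really matches the piecewise formula, including the branch switch at $e^\alpha-1-\delta(e^\varepsilon-1)=0$. Raising $\delta$ can move the triple from the second branch to the first, and raising $\alpha$ can move it back. I would first confirm continuity of the closed form across this boundary: at $e^\alpha-1=\delta(e^\varepsilon-1)$ both branches should evaluate to $1+\delta(e^\varepsilon-1)$, which is a short substitution. Continuity plus monotonicity within each branch then gives global monotonicity. If the optimisation-based argument proves awkward, because the closed form was derived under the simplification $\tilde{\delta}=\hat{\delta}=\delta$, I would fall back on differentiating the second branch directly with Lemma~\ref{lemma:monotone_fn} in $e^\alpha$ and in $\delta$. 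I would then check the sign of the resulting determinant under $\delta<1$ and $e^\alpha-1-\delta(e^\varepsilon-1)\ge0$, in the same way the sign was checked for Corollary~\ref{corollary:apl_monotonially_increases_with_beta}.
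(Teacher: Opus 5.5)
Your proposal is correct, and its backbone is the paper's. The paper's entire proof is the one-line claim that the closed form of Corollary~\ref{corollary:alpha_beta_delta_apl} is non-decreasing in each of $\alpha,\beta,\delta$ by Lemma~\ref{lemma:monotone_fn}, which is your fallback route. The determinants do come out non-negative. In $e^\alpha$ one gets $(e^\varepsilon-1)(e^\beta-1)(1-\delta)\bigl(e^\varepsilon(e^\beta-1)+\delta(e^\varepsilon-1)\bigr)$. In $\delta$ the numerator is increasing while the denominator is positive and decreasing.

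You go beyond the paper in two ways. First, your continuity check at $e^\alpha-1=\delta(e^\varepsilon-1)$ is right: both branches equal $1+\delta(e^\varepsilon-1)$ there. The paper skips this step, yet the Lemma-based route needs it, since raising $\alpha$ or $\delta$ can switch branches.

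Second, your primary nested-feasibility argument does not work literally on the per-coordinate problem in the proof of Theorem~\ref{thm:alpha_beta_delta_apl}, exactly as you feared. There the slacks are distribution-determined vectors $\tilde\delta_i,\hat\delta_i$, and the scalar $\delta$ does not appear. Moreover, the closed form is not that problem's optimum. It arises only after both aggregates are inflated to $\delta$, and, for a calibrated triple, after $\gamma$ is replaced by $\beta$.

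Run the argument instead on the aggregated two-variable problem: maximize $\frac{1+g(e^\varepsilon-1)}{1+g'(e^\varepsilon-1)}$ over $0\le g,g'\le 1$, subject to $g\le e^\alpha g'+\delta$ and $1-g'\le e^\beta(1-g)+\delta$. Take the case analysis in that proof with $\tilde\delta=\hat\delta=\delta$, so that only Case~2 arises and the cap $g\le 1$ is inactive. It shows this problem's optimum is exactly the two-branch formula for all admissible parameters. Its feasible set visibly grows in each of $\alpha,\beta,\delta$, because $g'\ge 0$ and $1-g\ge 0$, while the objective does not depend on the parameters.

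This version yields monotonicity in all three parameters at once, branch switch included, so neither the continuity check nor the sign computations are needed. The paper's computational route is shorter to state, but without your continuity remark it is incomplete.
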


\begin{proof}
    It can be easily proved that the estimated CPL given by Corollary~\ref{corollary:alpha_beta_delta_apl} monotonically increases with $\alpha,\beta$, and $\delta$ using Lemma~\ref{lemma:monotone_fn}.
\end{proof}

\begin{proof} 
From Corollary~\ref{Corollary:composition_apl}, we need to compute the maximum values of $\alpha$ and $\delta$ for all possible distributions within the uncertainty. Let $\hat{\alpha}$ and $\hat{\delta}$ be the updated values of $\alpha$ and $\delta$, respectively.
    
\textbf{Compute $\hat{\alpha}$}.
Since we consider a $\Delta$ uncertainty between known distribution (i.e., $P_K$) and actual distribution (i.e., $P_A$), $|p_A(a|x)-p_K(a|x)| \leq \Delta_{a,x}$ for all $a\in \mathcal{\hat{X}},\ x \in \mathcal{X}_k$, where $\Delta_{a,x} \in \Delta,\ p_A(\cdot|\cdot) \in P_A,\ p_K(\cdot|\cdot) \in P_K$. Therefore, updated $\hat{\alpha}$ should be the maximum of all possible $\alpha$ which is given by,
\begin{equation}
    \hat{\alpha} = \max_{a\in \mathcal{\hat{X}},x,x'\in\mathcal{X}_k,p(a|x')-\Delta_{a,x'}> 0} \left\{\frac{\min\{1, p(a|x) + \Delta_{a,x}\}}{p(a|x')- \Delta_{a,x'}}\right\}.
\end{equation}
    
\textbf{Compute $\hat{\delta}$}. Similarly the maximum $\delta$ can be computed as
\begin{equation}
    \hat{\delta} = \max_{x,x' \in \mathcal{X}_k} \sum_{a \in \mathcal{\hat{X}},\ p(a|x')-\Delta_{a,x'} \leq 0} p(a|x)+ \Delta_{a,x}.
\end{equation}
    
\textbf{Compute $\hat{\beta}$}. From Corollary~\ref{corollary:beta_upper_bound}, $\hat{\beta}$ can be computed for given distribution by,
\begin{equation}
        \hat{\beta} =\ln\frac{1 + A-B + e^{\hat{\alpha}}(-1 + \hat{\delta}) - 2\hat{\delta}}{1 + (-1 + A-B) e^{\hat{\alpha}} - \hat{\delta}}.
\end{equation}
Since $\hat{\beta}$ is monotonically increasing with $A-B$ (can be proved using Lemma~\ref{lemma:monotone_fn} similar to the proof~\ref{proof:corollary:apl_monotonially_increases_with_beta}), we need to find the maximum $A-B$ for all possible distributions within the uncertainty to compute $\hat{\beta}$. We can formulate this as a relaxed linear program, which provides a polynomial-time upper bound to the exact MILP formulation. 
\begin{equation} \label{eqn:optimisation_max_beta_with_uncertainty}
\begin{split}
\underset{Z,G,G'}{\textbf{maximize}} \quad &\sum_{z_i \in Z, \ i\in [b]} z_i \\[0.4em]
\textbf{subject to} \quad
& 0\leq z_i \leq \frac{M^+_i}{M^+_i + M^-_i}( G_i - G'_i +M^-_i);\quad \forall i\in [b],\\
& L \leq G \le U,\quad \mathbf{1}^\top G=1,\\
& L' \leq G' \le U',\quad \mathbf{1}^\top G'=1.
\end{split}
\end{equation} 
Here, $M^+_i=\max\{0,\ U_i-L'_i\},\ M^-_i =\max\{0,\ U'_i-L_i\},\ \forall i \in [b],$ and $\max_{G,G'}\sum_{i\in [b]} z_i$ gives the maximum $A-B$ for all possible distributions within the uncertainty. We can compute the solution using standard LP algorithms such as the simplex method, the interior point method, etc.

Therefore, from Theorem~\ref{Corollary:composition_apl}, $\hat{\alpha},\hat{\beta},\hat{\delta}$ estimates an upper bound for $\operatorname{CPL}^\star$ for any privacy budget level.

This completes the proof of Theorem~\ref{thm:upperbound_dist_bias}.
\end{proof}


\section{Privacy Budget Calibration}\label{appendix:Unequal Privacy Budgets Across Attributes}
We model the privacy budget calibration as a maximization problem as
\begin{equation*}
    \begin{split}
        \textbf{minimize} & \quad \operatorname{Loss}\\
        \textbf{subject to} & \quad \overline{\varepsilon}_k + \sum_{i \in [n]\setminus \{k\}} L_{X_i\rightarrow X_k} \leq \tilde{\varepsilon}; \quad\forall k \in [n],\\
        & \quad \overline{\varepsilon}_k \geq \frac{\tilde{\varepsilon}}{n}; \quad\forall k \in [n].\\
    \end{split}
\end{equation*}

Here, $\overline{\varepsilon}_k$ is the privacy budget of the $X_k$ attribute and $\frac{\tilde{\varepsilon}}{n}$ is the value of the basic privacy budget splitting (SPL) algorithm.
Since SPL is assigning privacy budgets assuming perfect correlation, we can always find a better or equal privacy budget for each attribute.
Note that the loss function is a flexible choice based on the requirements.
For example, we use~\eqref{eqn:loss_fn}.
Here, intuitively, we need to minimize the difference between the target overall privacy and the privacy leakage of each attribute.
Someone can define a different loss function by assigning weights for some attributes that are significant in privacy/utility compared to the rest. Since $L_{X_i\rightarrow X_k}$ is a non-convex function, this is a non-convex optimization problem.
In our experiments, we use a standard gradient descent algorithm to solve this optimization problem with~\eqref{eqn:loss_fn} loss function (differentiable).
\begin{equation}\label{eqn:loss_fn}
    \operatorname{Loss} = \sum_{k\in [n]} \left(\overline{\varepsilon}_k + \sum_{i \in [n]\setminus \{k\}} L_{X_i\rightarrow X_k} - \tilde{\varepsilon} \right)^2.
\end{equation}

\section{Experimental Results for Time and Space Analysis}\label{appendix:Experimental Results for Time and Space Analysis}

To validate the space and time complexity analysis, we empirically compute the time and space with varying the alphabet size as shown in Figure~\ref{fig:space_time_empirical}a--c. Here, we use a logarithmic scale on the time and space axes to validate polynomial-time/space and constant-time/space complexities. Empirical results validate the computational and space efficiency of \acronym\ compared to other algorithms.

\begin{figure}[tbh]
    \centering
\includegraphics[trim={0.1cm 0.2cm 0cm 0.2cm},clip,width=0.65\linewidth]{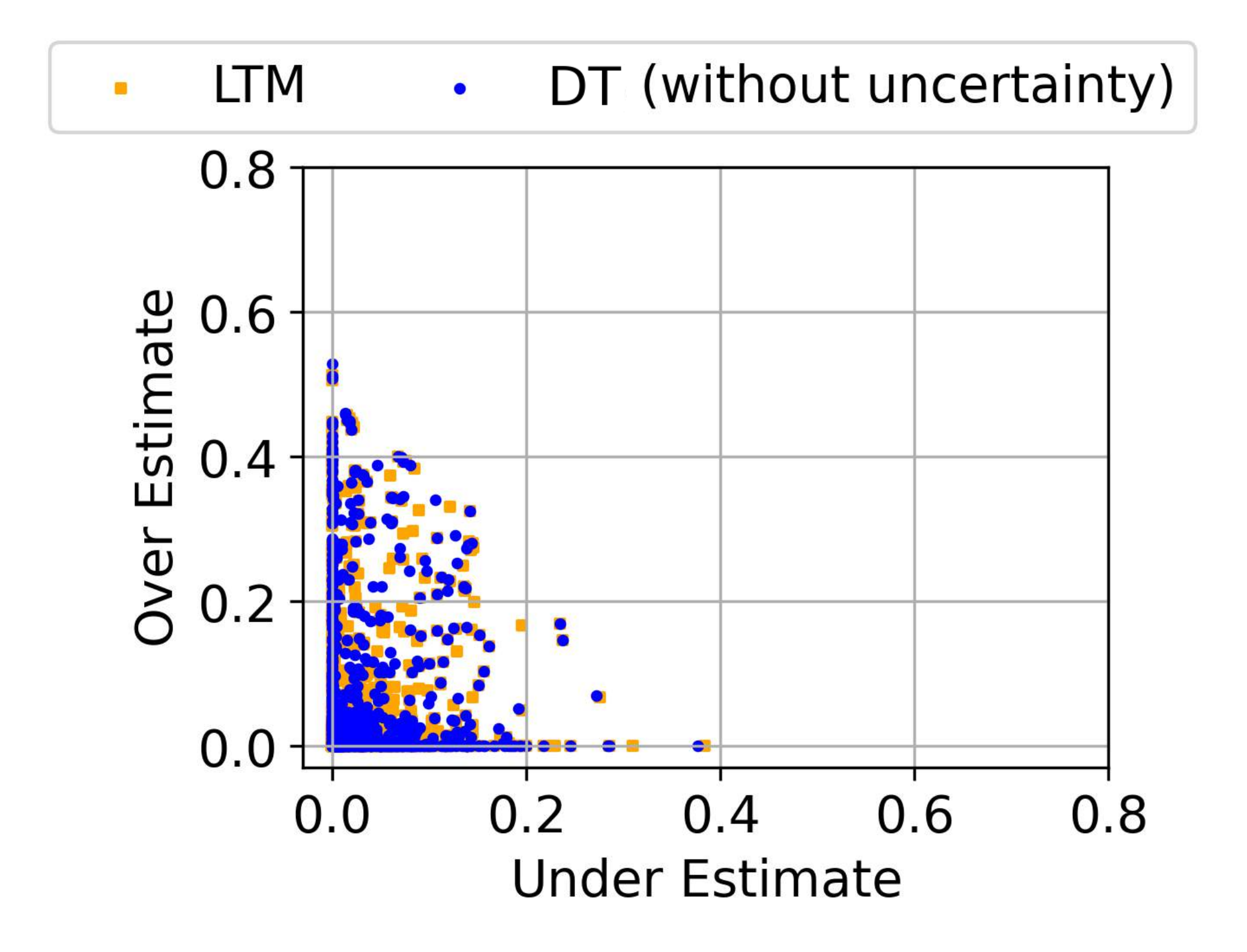}
    \caption{Normalized over estimation and under estimation of estimated CPL of \acronym\ and LTM with respect to ground truth $\operatorname{CPL}^\star$. Here, \acronym\ does not incorporate uncertainty.}
    \label{fig:overestimatio_underestimation_DB_LTM}
\end{figure}

\begin{figure*}[t]
  \centering
  \includegraphics[trim={0.8cm 0.cm 0.7cm 0.2cm},width=0.35\linewidth]{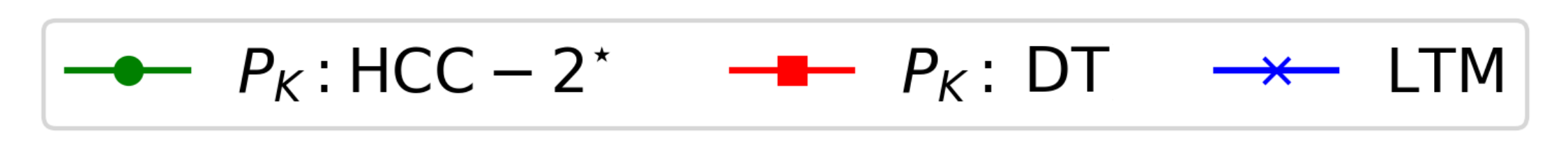}\par
  \makebox[0.9\linewidth][c]{%
    \begin{subfigure}[t]{0.23\linewidth}
      \centering
      \includegraphics[trim={1.45cm 0.8cm 1.2cm 0.8cm},clip,width=1.09\textwidth]{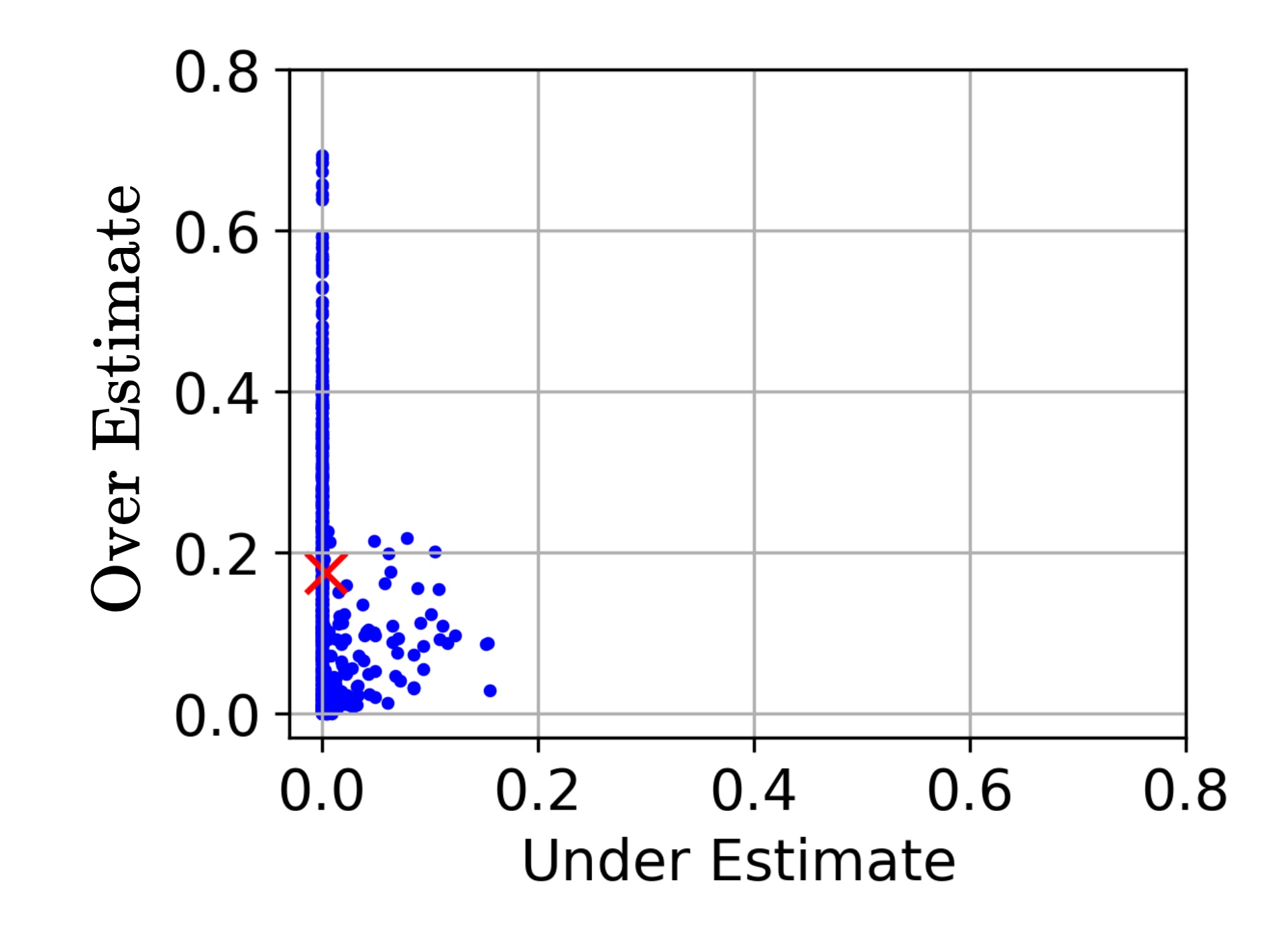}
      \caption{Range}\label{fig:over_under_est_a}
    \end{subfigure}
    \hspace{0.02\linewidth}
    \begin{subfigure}[t]{0.23\linewidth}
      \centering
      \includegraphics[trim={1.45cm 0.8cm 1.2cm 0.8cm},clip,width=0.97\textwidth]{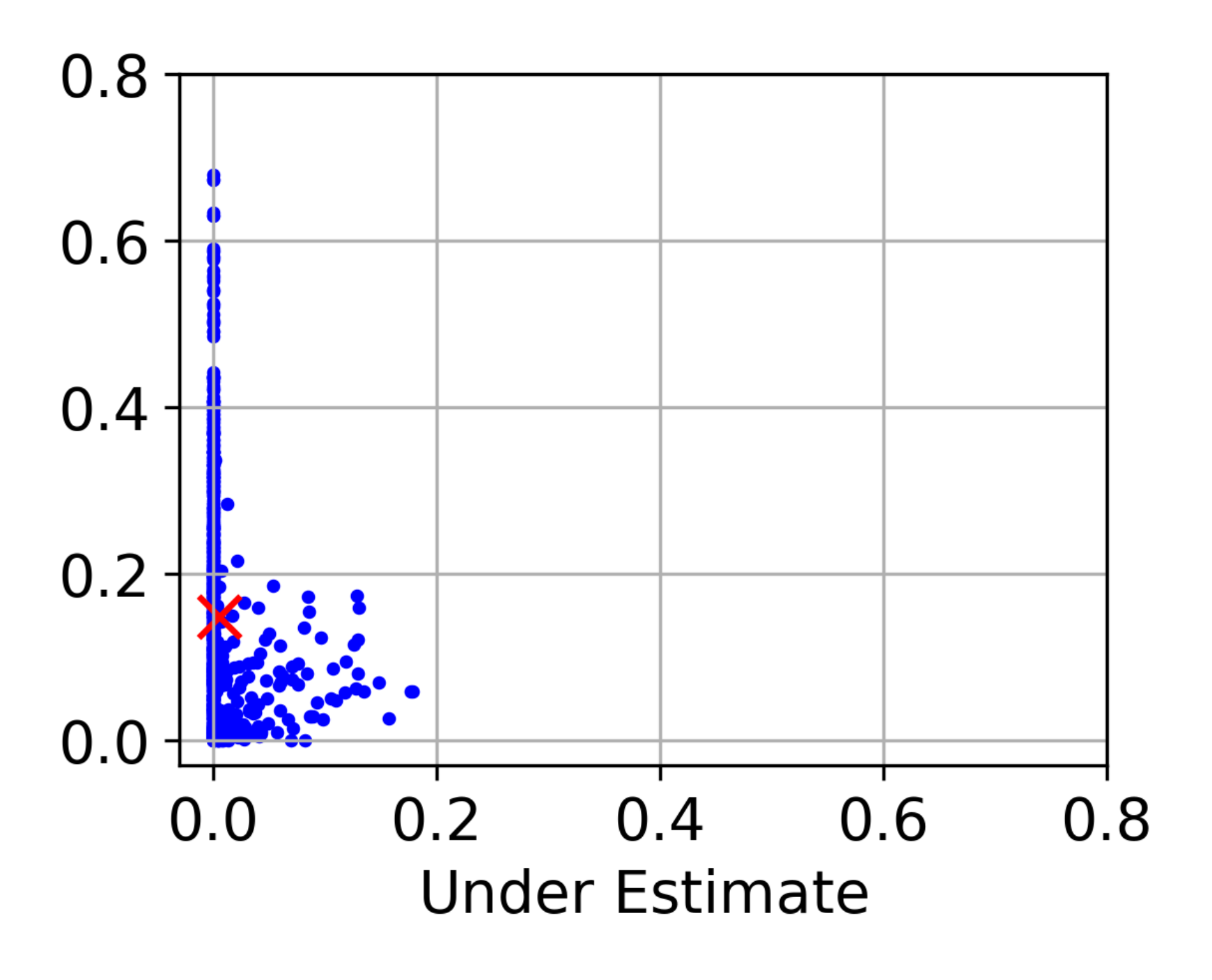}
      \caption{Standard deviation}\label{fig:over_under_est_b}
    \end{subfigure}
    \hspace{0.02\linewidth}
    \begin{subfigure}[t]{0.23\linewidth}
      \centering
      \includegraphics[trim={1.45cm 0.8cm 1.2cm 0.8cm},clip,width=0.97\textwidth]{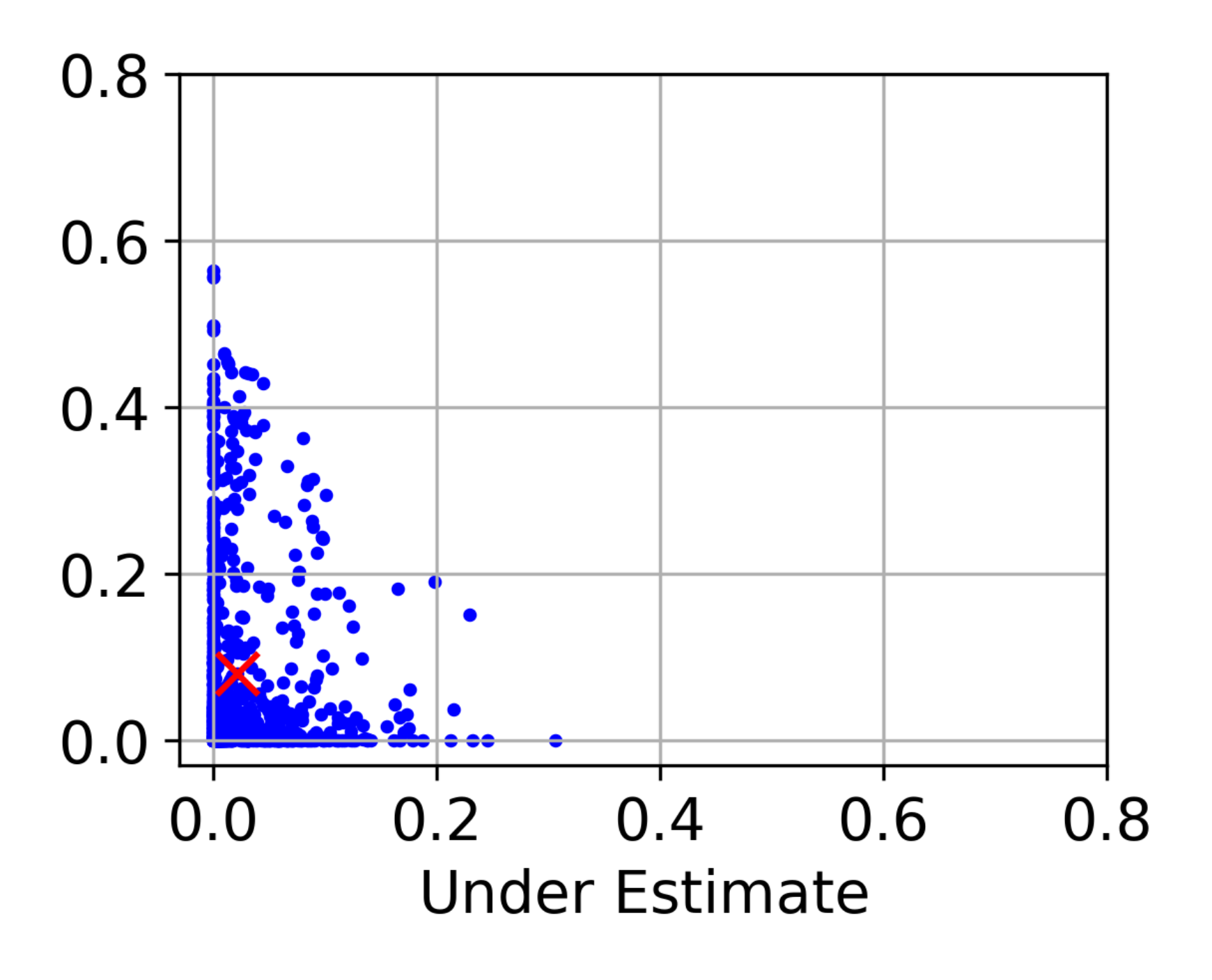}
      \caption{Variance}\label{fig:over_under_est_c}
    \end{subfigure}
    \hspace{0.02\linewidth}
    \begin{subfigure}[t]{0.23\linewidth}
      \centering
      \includegraphics[trim={1.45cm 0.8cm 1.2cm 0.8cm},clip,width=0.97\textwidth]{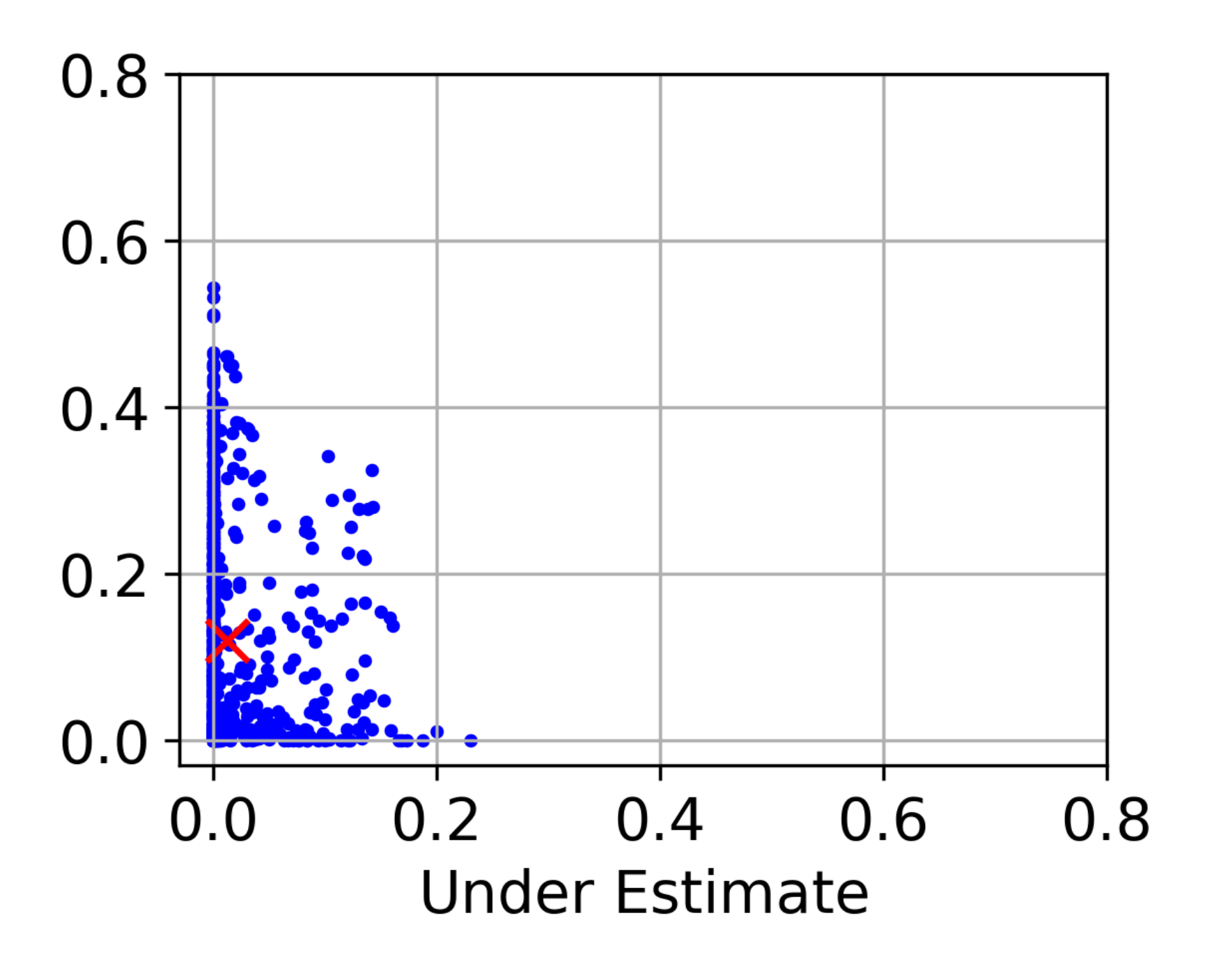}
      \caption{MAD}\label{fig:over_under_est_d}
    \end{subfigure}
  }
  \caption{Normalized overestimation and underestimation of estimated CPL with respect to ground truth $\operatorname{CPL}^\star$ for different uncertainty methods.}
    \label{fig:DB_different_uncertainty}
\end{figure*}

\section{Experimental Results for Distribution Uncertainty Defining Methods}\label{appendix:Experimental Results for Distribution Uncertainty Defining Methods}

\textbf{Statistically defined uncertainty.} 
We begin by evaluating the four statistical methods introduced in Section~\ref{section:Uncertainty Metrics}, outlining their respective strengths and limitations. To ground the analysis, we consider the motivational example: a state-wise household income survey in the United States conducted under LDP, which is explained in Section~\ref{section:introduction}. To instantiate this setup, we use the real-world \textit{SPM} dataset~\cite{SPM}. We evaluate the \emph{normalized underestimation} and \emph{overestimation} of estimated CPL (based on prior knowledge, $\operatorname{\tilde{CPL}}$) relative to the ground-truth CPL (based on actual distributional knowledge, $\operatorname{CPL^\star}$). 
\begin{align*}
\text{Normalized Underestimation} &= \frac{1}{\varepsilon }\max \{\operatorname{CPL^\star}- \operatorname{\tilde{CPL}},0\},\\
\text{Normalized Overestimation} &= \frac{1}{\varepsilon }\max \{\operatorname{\tilde{CPL}}-\operatorname{CPL^\star},0\}.
\end{align*}
Next, we plot all the attributes on an x-y graph using overestimation on the y-axis and underestimation on the x-axis.

We first evaluate \acronym\ without incorporating distributional uncertainty, using LTM as a baseline, as shown in Figure~\ref{fig:overestimatio_underestimation_DB_LTM}. 
Each point in the plot corresponds to the normalized error of a given attribute across states, excluding those used as prior knowledge.
The results indicate that both \acronym\ and LTM yield a considerable number of underestimated attributes.


Next, we incorporate uncertainty into \acronym, as shown in Figure~\ref{fig:DB_different_uncertainty}. In this figure, the \textcolor{red}{red crosses} denote average estimation. Points farther from the $x=0$ axis indicate an underestimation of CPL relative to the ground truth, which may introduce privacy vulnerabilities. Conversely, points farther from the $y=0$ axis represent overestimation of CPL, which reduces data utility. Since our goal is to ensure strong privacy protection while preserving utility, the ideal outcome is to keep points as close as possible to both the $x=0$ axis (minimizing potential privacy vulnerabilities from underestimation) and the origin (increasing data utility).

The results show that CPL estimated using \emph{range}- and \emph{standard deviation}-based uncertainty estimates exhibits less underestimation than variance- and MAD-based methods. This implies that range- and std-based uncertainty definitions provide stronger privacy guarantees. 
We further analyze the impact of prior knowledge by varying the set of prior states. 
Here, we observe that range-based uncertainty is highly sensitive to the choice of prior knowledge, while std-based uncertainty demonstrates more stable behavior.

\textbf{Naively defined uncertainty.}
In this setting, we use the publicly available LFW dataset to learn correlations between facial attributes, which are then used to estimate CPL for the CelebA dataset. 
Since \emph{only a single source} is available to infer correlations, we adopt a naive method to define uncertainty. Specifically, we vary the error parameter $\mathbf{e}$ across ${10\%, 15\%, 20\%, 30\%}$.

We observe that values of $10\% \leq \mathbf{e} \leq 20\%$ yield a better balance between utility and privacy. 
To further validate this observation, we repeat the experiment on additional datasets by using $20\%$ of the data to learn the distributions and the remainder for testing. 
The results consistently confirm that $10\% \leq \mathbf{e} \leq 20\%$ achieves the most favorable trade-off between utility and privacy.

\section{Attribute Inference Attacks}\label{appendix:section:attacks}

\noindent\textbf{TabTransformer attacker.}
We model a strong learned adversary using TabTransformer, which embeds each released categorical attribute as a token and applies a Transformer encoder to capture cross-attribute interactions, followed by a multi-class softmax head for predicting $X_j$~\cite{tabtransformer}.
We train a separate attacker $g_j$ for each target attribute using auxiliary pairs $(\tilde{\mathbf{Y}}_{-j}, X_j)$ and report the attribute-averaged Macro-F1 via $\mathrm{AIA}$. We train for 10 epochs and select a batch size of 16384.

\noindent\textbf{Macro-F1.}
Macro-F1 is the unweighted mean of the per-class $F_1$ scores (one-vs-rest), giving equal weight to each class
\begin{equation*}
    \mathrm{MacroF1} = \frac{1}{|\mathcal{C}|}\sum_{c\in\mathcal{C}} \frac{2P_cR_c}{P_c+R_c}.
\end{equation*}
Here, $P_C$ and $R_C$ are the precision and recall for class $C$.

\end{appendices}

\end{document}